\newtheorem{theorem}{Theorem}[section]
\newtheorem{proposition}[theorem]{Proposition}
\newtheorem{lemma}[theorem]{Lemma}
\newtheorem*{lemma*}{Lemma}
\newtheorem{corollary}[theorem]{Corollary}
\newtheorem{definition}[theorem]{Definition}
\newtheorem{notation}[theorem]{Notation}
\newtheorem*{standingassumptionI*}{Standing Assumption I}
\newtheorem*{standingassumptionII*}{Standing Assumption II}
\newtheorem{assumption}[theorem]{Assumption}
\theoremstyle{remark}
\newtheorem{example}[theorem]{Example}
\newtheorem{remark}[theorem]{Remark}
\newcommand{\R}{{\mathbb R}}
\newcommand\abs[1]{\left|#1\right|}
\newcommand{\fit}{{\mathcal{F}^{i}_t}}
\newcommand{\fib}{{\mathbb{F}^{i}}}
\newcommand{\hi}{{\mathcal{H}^{i}}}
\newcommand{\ki}{{K_{i}}}
\newcommand{\mi}{{M^{i}}}
\newcommand{\mib}{{M^{i,\infty}}}
\newcommand{\mie}{M^i_{e}}
\newcommand{\mibe}{M^{i,\infty}_{e}}
\newcommand{\pii}{\rho_{i,+}}
\renewcommand{\emph}[1]{\textit{#1}}
\newcommand{\red}[1]{\textcolor{red}{#1}}
\newcommand{\blue}[1]{{\color{blue}#1}}
\newcommand{\mcY}{\mathcal {Y}}
\newcommand{\mcF}{\mathcal {F}}
\newcommand{\mcR}{\rho^{\mathcal Y}_{+}}
\newcommand{\Me}{\mathcal M_e(\mcY)}
\begin{document}

\title{Collective completeness and pricing hedging duality}
\date{\today}

\author{
Alessandro Doldi\thanks{Dipartimento di Matematica, Università degli Studi di Milano, Via Saldini 50, 20133 Milano, Italy,
\emph{alessandro.doldi@unimi.it}.}, $\quad$ Marco Frittelli\thanks{Dipartimento di Matematica, Università degli Studi di Milano, Via Saldini 50, 20133 Milano, Italy,
\emph{marco.frittelli@unimi.it}.  }, $\quad$
Marco Maggis\thanks{Dipartimento di Matematica, Università degli Studi di Milano, Via Saldini 50, 20133 Milano, Italy,
\emph{marco.maggis@unimi.it}.  }
}

\maketitle

\begin{abstract}
\noindent This paper builds on \textit{Collective Arbitrage and the Value of Cooperation} by Biagini et al. (2025, forthcoming in \textit{Finance and Stochastics}), which introduced in discrete time the notions of collective arbitrage and super-replication in a multi-agent market framework, where agents may operate in several submarkets and collaborate through risk exchange mechanisms. Expanding on these foundations, we establish a First Fundamental Theorem of Asset Pricing and a collective pricing-hedging duality under different assumptions and with new techniques compared to Biagini et al. (2025). We further introduce the notion of collective replication in order to study collective market completeness and provide a Second Fundamental Theorem of Asset Pricing in this cooperative multi-agent setting.
\end{abstract}
\medskip
\noindent\textbf{Keywords:} Arbitrage, Super-replication, Fundamental Theorem of Asset Pricing, Cooperation Completeness, Segmented Markets.

\medskip
\noindent\textbf{MSC Classification:} 91G15, 91G20, 91G45, 60G42.

\medskip
\noindent\textbf{JEL Classification:} C69, G10, G12, G13.

\section{Introduction}
The study of arbitrage and pricing-hedging duality has been a central theme in financial mathematics, evolving from classical one-agent framework to more sophisticated multi-agent settings that incorporate cooperation and collective risk-sharing. This paper extends recent developments in the theory of \textit{collective arbitrage} and \textit{super-replication} by analyzing markets where multiple agents trade within segmented markets and engage in cooperative risk exchanges.
Within financial institutions, market segmentation refers to the partitioning of the overall market into specialized submarkets or trading desks.  Each desk concentrates on a particular financial product or asset class. This division can be motivated by several factors: (i) specialization in specific financial instruments (e.g., equities, fixed income, derivatives, commodities); (ii) a focus on assets from particular geographic regions or countries; and (iii) varying risk tolerances, leading to specialization in assets with different risk profiles.  For a broader discussion of the rationale behind market segmentation, see Carassus (2023) \cite{Carassus23}.

A fundamental contribution in this area was made by Biagini et al. (2025) \cite{BDFFM25}, who introduced the notions of \textit{Collective Arbitrage} and \textit{Collective Super-replication} in a discrete-time financial market. They established a collective version of the Fundamental Theorem of Asset Pricing (FTAP) and demonstrated that cooperation among agents reduces the super-replication cost of contingent claims. By allowing risk-sharing mechanisms, their work showed that agents could leverage financial interdependencies to eliminate inefficiencies and improve their overall market positions. Their results highlighted that even in the absence of individual arbitrage, cooperation could create profitable trading opportunities for the collective, making \textit{No Collective Arbitrage} (NCA) a necessary extension of the classical No Arbitrage (NA) condition.

Frittelli (2025) \cite{F25} extended this framework by generalizing the concept of Collective Arbitrage to a semimartingale market setting, introducing a \textit{No Collective Free Lunch} condition. Unlike the discrete-time framework of \cite{BDFFM25}, this extension considered continuous-time markets with general stochastic price processes. The study established a semimartingale version of the collective FTAP, proving that if a market satisfies the No Collective Free Lunch condition, then there exists a suitable vector of equivalent martingale measures, or more generally of separating measures, preventing arbitrage across agents. Furthermore, the work provided a pricing-hedging duality under cooperation, showing that the value of cooperation persists in more general financial settings.

A complementary perspective on these ideas was presented by Doldi et al. (2024) \cite{DFRG24Frontiers}, who explored \textit{Collective Dynamic Risk Measures}. Their work analyzed how cooperation among agents influences risk measurement over time, focusing on \textit{aggregation properties} and \textit{time consistency}. Unlike the classical risk measure theory, which is often defined for a single agent, their study extended the framework to multiple interacting agents in a segmented market. A key contribution was the introduction of a collective aggregation operator, which accounts for risk transfers among agents, allowing for more efficient capital allocation.

The concept of arbitrage in financial markets originated with de Finetti (1931) \cite{deFinetti31} (we refer to Maggis (2023) \cite{Maggis23} for further discussion on the origins of such a notion in the work of de Finetti) and gained prominence in the mid-1970s through several works (see e.g. Ross (1976) \cite{Ross76}, Harrison and Kreps (1979) \cite{HarrisonKreps79}, Harrison and Pliska (1981) \cite{HarrisonPliska81}, Kreps (1981) \cite{Kreps81}) spanning over 30 years of intensive research, first in discrete, then in continuous time. Encompassing overview are provided by  Delbaen and Schachermayer (2006) \cite{DS2006} and and F\"{o}llmer and Schied (2016)\cite{FollmerSchied2}.\\
As already stressed in \cite{BDFFM25}, 
the traditional literature as typified by \cite{Kreps81} defines arbitrage as a single nonzero element in the vector space of marketed bundles $M$, a subset of a locally convex topological vector lattice $\mathcal{X}$, with non-positive price and membership in the positive cone $\mathcal{X}^+$. While this definition is general, it has traditionally been interpreted as single-element arbitrage opportunity within $M$. As previously mentioned, the collective approach introduces a multi-dimensional concept comprising a collection of such arbitrage opportunities, which are entangled by the exchanges among the agents.
\paragraph{Main contributions of the paper}

The primary objective of this paper is to investigate collective arbitrage, as developed in \cite{BDFFM25}, within a discrete-time setting, with a specific emphasis on collective replication and completeness when exchanges are characterized by a finite-dimensional vector space. Furthermore, Section \ref{sec:selfin} introduces the concept of self-financing trading strategies within a collective framework, allowing for the dynamic implementation of inter-agent exchanges.
 Specifically, we establish the following key results:
\begin{enumerate}
    \item The First Fundamental Theorem of Asset Pricing in a collective setting (CFTAP I), presented in Theorem \ref{Th1} and Theorem \ref{FTAP:R}.
    \item A collective pricing-hedging duality, detailed in Theorem \ref{duality:R}.
    \item The Second Collective Fundamental Theorem of Asset Pricing (CFTAP II), formulated in Theorem \ref{completeTH}.
\end{enumerate}

While Theorems \ref{Th1} and \ref{FTAP:R} extend results from \cite{BDFFM25} and serve as fundamental tools for our analysis, the core contributions of this paper lie in points 2. and 3.

Specifically, we develop an appropriate version of the second Collective Fundamental Theorem of Asset Pricing by introducing the concept of collectively replicable claims. We also define No Collective Arbitrage prices for vectors of contingent claims, and in Section \ref{sec3}, we characterize such prices through expectations under equivalent collective martingale measures (see Theorem \ref{BaseFTAPII}) and identify the contingent claims that are collectively replicable (see Proposition \ref{prepcftap2}).
  Notably, our discussion on collectively self-financing strategies, collective replication and completeness, and their equivalent characterizations is entirely novel. Furthermore, we provide theoretical examples and counterexamples in Section \ref{sec:examples}. Theorem \ref{duality:R} extends a result from \cite{BDFFM25} under alternative assumptions and introduces a significantly simpler proof technique. Since the results that closely resemble those in \cite{BDFFM25} are primarily instrumental rather than central to our main contributions, we postpone a detailed discussion of their similarities and differences until we have introduced the necessary notation and stated such  results in a rigorous form: see Remark \ref{remY} and the discussion just after Theorem \ref{THclosure}. 

\subsection{Setting}

Let $\mathcal T=\{0, \dots, T\}$ be the finite set of discrete times and consider a given filtered probability space $(\Omega, \mathcal{F}, \mathbb F ,P)$, with  $\mathbb F =\{\mathcal{F}_t\}_{t \in  \mathcal T}$,  
and $\mathcal{F}=\mathcal{F}_T$. Except when explicitly stated otherwise, we also assume that $\mcF_0$ is trivial, namely that $P(A)=0$ or $P(A)=1$ for all $A \in \mcF_0$. 
 We say that a probability measure $Q$ on $(\Omega, \mathcal{F})$ belongs to $\mathcal{P}_e$ if $Q \sim P$, or to $\mathcal{P}_{ac}$ if $Q\ll P$, respectively.
 Unless differently stated, all inequalities between random variables are meant to hold $P$-a.s.. \\

  The (global) securities market comprises a zero-interest rate riskless asset $X^0$ and $J$ risky assets with discounted price processes $X^j=(X^j_t)_{t\in [0,T]}$, $j=1, \dots ,J$, $J\geq 1$. We set $X^0_t=1$ for all $t$.

Consider $N$ agents operating within the market, and assume that each agent $i$, $i=1,\dots,N$, can invest in the riskless asset and in the risky assets $X^j$, $j\in (i)$, where $(i)$ denotes a specified subset of $\{1,\dots,J\}$. We denote the cardinality of $(i)$ by $d_i$, representing the number of risky assets in which agent $i$ may invest, and assume $1 \leq d_i \leq J$.
We assume (without loss of generality) that $\cup_{i=1}^N (i)=\{1,\dots,J\}$, as
we may ignore the assets that can not be used by any agent. We are not excluding that different agents may invest in the same risky assets. Let $X^{(i)}:=(X^j,\, j \in (i))$ be the risky assets available for trading to agent $i$.
We denote by $$\fib=(\fit)_{t\in\mathcal{T}}\subseteq \mathbb F$$ the filtration representing the information available to agent $i$. Thus for all $i$, $\mathcal F^i_0=\mathcal F_0$, the trivial sigma-algebra.\\
We assume that all processes $X^j$, with $ j \in (i)$, are adapted with respect to the  filtration $\fib$.

\begin{notation}
\label{productnotation}
If $Q=(Q^1,\dots,Q^N) \in (\mathcal P_{ac})^{N} $ and $p \in [0,\infty]$ we set $$L^{p }(\Omega, \mathbf{F}_t,Q):=L^{p }(\Omega, \mathcal{F}_t^{1},Q^1) \times \dots \times  L^{p }(\Omega, \mathcal{F}_t^{N},Q^N), \quad t\in\{0,\dots,T\} .$$ 
With a slight abuse of notation, when we consider only one probability measure $Q \in \mathcal P$, we still use the same notation $$L^{p }(\Omega, \mathbf{F}_t,Q):=L^{p }(\Omega, \mathcal{F}_t^{1},Q) \times \dots \times  L^{p }(\Omega, \mathcal{F}_t^{N},Q),\quad t\in\{0,\dots,T\} .$$
\end{notation}

 We say that a set $K \subseteq L^{0 }(\Omega, \mathbf{F}_T,P)$ is closed in $L^{0 }(\Omega, \mathbf{F}_T,P)$ if it is closed for the convergence in probability. This is equivalent to the property that 
${K}\subseteq L^{0}(\Omega, \mathbf{F}_T,P)$ is (sequentially) closed under $P$-a.s. convergence, namely $f_n\to f$ $P$-a.s. and $(f_n)\subset {K}$, implies $f\in \mathcal{K}$.
 We adopt the following partial order among random vectors: for every $f,g\in L^{0}(\Omega, \mathbf{F}_T, P)$ we write $f \leq g$ if and only if $P(f^i\leq g^i)=1$ for every $i=1,\ldots, N$. For any $a\in\R$ we shall always indicate with $\mathbf{a}$ the vector $\mathbf{a}=(a,a,\ldots,a)\in\R^N$. 
 \\

A stochastic process $H=(H_t)_{t\in \mathcal T}$ is called an \label{admisstrat}\emph{admissible trading strategy for the agent} $i$ if it is $d_i$ - dimensional and  predictable with respect to $\fib$.
The space of admissible trading strategies for the agent $i$ is denoted by $\hi$. 
If $H \in \hi$, we set 
\begin{equation*}
(H\cdot X^{(i)})_t:=\sum_{h \in (i) } (H^h\cdot X^h)_t,    
\end{equation*} 
where $(H^h\cdot X^h)_t:=\sum_{s=1}^t H^h_s(X^h_s-X^h_{s-1})$ denotes the stochastic integral of $H^h$ with respect to  the asset $X^h$, $h\in (i)$, and we write

\begin{equation}
\label{def:KiBis}
\ki:=\{(H\cdot X^{(i)})_T \mid H \in {\hi} \} \subseteq L^{0 }(\Omega, \mathcal{F}_T^{i},P).
\end{equation}

Let $\widehat X^i:=(X^0,X^{(i)})$ be the $(d_i+1)$ - dimensional adapted process representing the market available to agent $i$, which also includes the rikless asset $X^0$. For a $1$-dimensional and $\fib$-predictable stochastic process $h$ and for $H \in \mathcal H^i$,  we say that the trading strategy $\widehat H=(h,H)$ is \textit{self financing for agent $i$} (for brevity we write s.f. for $i$) if
$\widehat H_t\widehat X_t^{(i)}=\widehat H_{t+1}\widehat X_t^{(i)}$, $t=1,\dots,T-1$. If $\widehat H$ is s.f. for $i$, we let $V^i_0(\widehat H):=\widehat H_1\widehat X_0^{(i)}$ and $V^i_t(\widehat H):=\widehat H_t\widehat X_t^{(i)}$, $t=1,\dots,T$, and we recall that for any $v_0^i\in \mathbb R$

\begin{equation*}
    \{V^i_T(\widehat H) \mid \widehat H \text{ is s.f. for } i \text{ with } V^i_0(\widehat H)=v_0^i \}=\{v_0^i+k \mid  k \in K_i \},
\end{equation*}
namely the final value of s.f. trading strategies for agent $i$ can be obtained by adding to its initial value $v^i_0$ the stochastic integral of admissible trading strategies $H \in \mathcal H^i$ with respect only to the risky assets $X^{(i)}$.

\noindent The set of martingale measures for  the assets in $(i)$ is defined by
\begin{equation*}
\mi:=\left\{ Q \in \mathcal{P}_{ac} \mid     X^j  \text { is a } (Q, \fib) \text {-martingale for all } j \in (i) \right\},
\end{equation*}
while the set of martingale measures with bounded densities for  the assets in $(i)$ is defined by
\begin{align*}
\mib(P):=\bigg\{ Q \in \mathcal{P}_{ac} \mid \,&\frac {dQ} {dP} \in L^{\infty }(\Omega, \mathcal{F},P)  \\&\text { and } X^j  \text { is a } (Q, \fib) \text {-martingale for all } j \in (i) \bigg\}.
\end{align*}
Moreover, we set $$\mie:=\mathcal P_e \cap \mi \quad \text {and} \quad \mibe(P):=\mathcal P_e \cap \mib(P). $$
The classical No Arbitrage condition for agent $i$ holds if:

\begin{equation*}
\label{eq:NAi}
 \mathbf{NA}_{i}: \text{  } \ki \cap L_{+}^{0}(\Omega, \mathcal{F}^i_T , P)=\{0\}.
 \end{equation*}
 The $\mathbf{NA}_{i}$ condition implies that the set $(\ki - L_{+}^{0}(\Omega, \mathcal{F}^i_T , P))$ is closed in $L^{0}(\Omega, \mathcal{F}^i_T , P)$ (see \cite{DS2006} Theorem 6.9.2), an essential property for the proof of both the FTAP I and the super-hedging duality.
\begin{theorem}[Dalang, Morton and Willinger (1990) \cite{DMW90}]\label{DMW}
In the discrete time setting described above and if $X^{(i)}$ is integrable under $P$ then
\begin{equation*}
 \mathbf{NA}_{i} \;\Longleftrightarrow\;  \mibe(P) \not = \emptyset \;\Longleftrightarrow\; \mie \not = \emptyset .
\end{equation*}
\end{theorem}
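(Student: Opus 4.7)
My plan is to prove the triangle $\mathbf{NA}_i \Rightarrow \mibe(P) \neq \emptyset \Rightarrow \mie \neq \emptyset \Rightarrow \mathbf{NA}_i$; the middle implication is immediate from the inclusion $\mibe(P) \subseteq \mie$. For the easy closing arrow $\mie \neq \emptyset \Rightarrow \mathbf{NA}_i$ I would pick $Q \in \mie$ and use that for every $H \in \hi$ the discrete-time stochastic integral $(H \cdot X^{(i)})$ is a $Q$-local martingale null at $0$, localized by the predictable sequence $\tau_n = T \wedge \inf\{t : |(H \cdot X^{(i)})_t| > n\}$. If the terminal value $k := (H \cdot X^{(i)})_T$ is nonnegative, Fatou's lemma along this localization gives $\mathbb{E}_Q[k] \leq 0$; combined with $k \geq 0$ and $Q \sim P$, this forces $k = 0$ $P$-a.s.

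The substantive direction $\mathbf{NA}_i \Rightarrow \mibe(P) \neq \emptyset$ I would attack via a Yan-type separation in $L^1(P)$. By Theorem 6.9.2 of \cite{DS2006}, cited above, $\mathbf{NA}_i$ guarantees that the convex cone $\mathcal{C}_i := \ki - L^0_+(\Omega, \mathcal{F}^i_T, P)$ is closed in $L^0$. I would restrict to its $L^1(P)$-trace and verify the Yan hypotheses: $\mathcal{C}_i \cap L^1(P)$ is a convex cone containing $-L^1_+(P)$; it is closed in $L^1$-norm because $L^1$-convergence implies convergence in probability and hence preserves membership in the $L^0$-closed set $\mathcal{C}_i$; and $\mathbf{NA}_i$ gives $(\mathcal{C}_i \cap L^1(P)) \cap L^1_+(P) = \{0\}$. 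Yan's theorem then produces a strictly positive $Z \in L^\infty(\Omega, \mathcal{F}^i_T, P)$ with $\mathbb{E}_P[ZY] \leq 0$ for every $Y \in \mathcal{C}_i \cap L^1(P)$; after normalization to $\mathbb{E}_P[Z] = 1$, the probability $Q := Z \cdot P$ is equivalent to $P$ with bounded Radon--Nikodym derivative.

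To identify $Q$ as a martingale measure for the assets in $(i)$, I would test the separation inequality against the elementary $\fib$-predictable strategies $H_s = \pm \mathbf{1}_A \mathbf{1}_{\{s = t+1\}}$ placed in coordinate $j$, for every $t \in \{0, \dots, T-1\}$, $j \in (i)$ and $A \in \mathcal{F}^i_t$. The corresponding payoff $\pm \mathbf{1}_A(X^j_{t+1} - X^j_t)$ lies in $\ki \cap L^1(P)$ thanks to the standing $P$-integrability of $X^{(i)}$, so testing both signs yields $\mathbb{E}_Q[\mathbf{1}_A(X^j_{t+1} - X^j_t)] = 0$, i.e., $X^j$ is a $(Q, \fib)$-martingale. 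Hence $Q \in \mibe(P)$ and the cycle closes. The main technical obstruction has been absorbed into the $L^0$-closedness of $\mathcal{C}_i$ imported from \cite{DS2006}; the remainder amounts to a standard Yan separation followed by a one-step martingale identification.
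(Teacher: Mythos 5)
The paper does not prove this statement at all: it is imported verbatim from Dalang--Morton--Willinger, so your proposal can only be judged on its own merits. Your treatment of the substantive direction $\mathbf{NA}_i \Rightarrow \mibe(P)\neq\emptyset$ is sound and is exactly the standard route that is consistent with the paper's toolkit: $L^0$-closedness of $\ki - L^0_+(\Omega,\mathcal F^i_T,P)$ from \cite{DS2006} (which the paper itself invokes just before the theorem), the trace on $L^1(P)$, Kreps--Yan separation, normalization of the bounded strictly positive separator, and identification of the martingale property through the one-step strategies $\pm\mathbf 1_A\mathbf 1_{\{s=t+1\}}$ in coordinate $j\in(i)$. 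The only cosmetic omission there is that you should also record that each $X^j_t$ is $Q$-integrable (immediate, since $dQ/dP\in L^\infty$ and $X^{(i)}$ is $P$-integrable), so that the vanishing of $E_Q[\mathbf 1_A(X^j_{t+1}-X^j_t)]$ really upgrades to the martingale property and $Q\in\mibe(P)$; and $\mibe(P)\subseteq\mie$ is indeed trivial.

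The genuine gap is in the closing arrow $\mie\neq\emptyset \Rightarrow \mathbf{NA}_i$. First, $\tau_n = T\wedge\inf\{t: |(H\cdot X^{(i)})_t|>n\}$ is a stopping time but not a legitimate localizing sequence: stopping at $\tau_n$ does not control the overshoot $H_{\tau_n}\Delta X_{\tau_n}$, so the stopped process need not even be integrable and $E_Q[(H\cdot X^{(i)})_{T\wedge\tau_n}]=0$ is unjustified (the correct localization is $\sigma_n=\inf\{t:|H_{t+1}|>n\}$, which bounds the integrand). Second, and more seriously, even with a correct localization Fatou does not apply: Fatou requires the sequence $(H\cdot X^{(i)})_{T\wedge\sigma_n}$ to be bounded below by a fixed integrable random variable, but you only assume the \emph{terminal} value $k=(H\cdot X^{(i)})_T$ is nonnegative; on $\{\sigma_n\le T\}$ the stopped value is an intermediate value of the integral, which can be arbitrarily negative, so the inequality $E_Q[k]\le \liminf_n E_Q[(H\cdot X^{(i)})_{T\wedge\sigma_n}]$ does not follow. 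The step should instead invoke the result on martingale transforms in finite discrete time (F\"ollmer--Schied Theorem 5.14, proved by backward induction with generalized conditional expectations, and used twice elsewhere in this paper): if $k^-\in L^1(Q)$ -- here $k\ge 0$, so $k^-=0$ -- then $(H\cdot X^{(i)})$ is a true $Q$-martingale and $E_Q[k]=0$, whence $k=0$ $P$-a.s.\ by $Q\sim P$. With that substitution the cycle of implications closes correctly.
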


It is well-known that the integrability of $X^{(i)}$
  can be ensured by a change to an equivalent probability measure.

\begin{definition}
For a random variable $f \in L^{0}(\Omega, \mathcal{F}^i_T , P)$, we define the classical super-replication  price for agent $i$ as
\begin{equation*}
\pii(f):=\inf\{m \in \mathbb{R} \mid \exists k \in \ki \text{ s.t. } m+k \geq f \}.
\end{equation*}
\end{definition}
\noindent Under $\mathbf{NA}_i$, the following classical super hedging duality holds true (see \cite{BDFFM25} page 12). 
\begin{equation}\label{superclassic}
\pii(f)=\sup_{Q \in\mibe(P)} E_{Q}[f] \ \text { for all } f \in L^1(\Omega, \mathcal{F}^i_T , P).
\end{equation}

\subsection{Setting for collective arbitrage}

A fundamental principle in the aforementioned reference \cite{BDFFM25} is that $N$ agents, in addition to trading within their individual markets, can enhance their portfolios through collaborative risk exchanges.
Let $\mathcal{Y}_0$ denote the set of all zero-sum risk exchanges, defined as
\begin{equation*} \label{2345}
\mathcal{Y}_0 = \left\{ Y \in L^0(\Omega, \mathbf{F}_T, P) \mid \sum_{i=1}^N Y^i = 0  \right\}.
\end{equation*}

While the sum of the components of any $Y \in \mathcal{Y}_0$ is almost surely zero under $P$, the individual components $Y^i$ are, in general, random variables.  A positive realization of $Y^i$ on a particular event signifies a capital inflow for agent $i$ from the collective, while a negative value indicates a capital outflow.  Thus, $Y \in \mathcal{Y}_0$ characterizes the potential capital reallocations among the agents, subject to the constraint of zero net transfer. An example of permissible exchanges is a subset $\mathcal{Y}$ of $\mathcal{Y}_0$. 

More generally, throughout this paper, we adopt the following
\begin{assumption}\label{assY}
The set $\mcY$ of of allowed exchanges is a subset $$\mcY \subseteq L^0(\Omega, \mathbf{F}_T, P)  \,\text{ with }\, 0 \in \mcY.$$   
\end{assumption}


We will also often assume that the vector space of al possible deterministic exchanges
\begin{equation*}\label{R0}
\mathbb R_0 ^N:=\left \{   x \in \mathbb R^N \mid \sum_{i=1} ^N x^i =0  \right \}
\end{equation*}
is contained in $\mcY$.\\

When agent $i$ follows an investment strategy in its own market $(i)$, she will obtain a terminal payoff $k^i \in {K}_i$.
The agents may also enter in the risk exchange corresponding to a vector $Y \in \mathcal Y $. This procedure leads to the terminal time value $k^i + Y^i$ for agent $i$.
In \cite{BDFFM25} a \emph{Collective Arbitrage} consists of vectors $(k^1, \dots , k^N) \in {K}_1 \times \cdots \times {K}_N$ and $Y =(Y^1, \dots, Y^N) \in \mcY$ satisfying
\begin{align*}
 k^i+Y^i\geq 0  \quad P\text{-a.s. } & \quad \forall \, i\in{1,\dots,N} \, \text{ and } 
\\ P(k^n+Y^n>0)>0   &\quad \text{for at least one } n \in {1,\dots,N}.
\end{align*}

 \begin{definition}[No Collective Arbitrage, Def 3.1 \cite{BDFFM25}]
\label{NCA} 
No Collective Arbitrage for $\mathcal{Y}$ ($\mathbf{NCA}(\mathcal{Y})$) holds if

\begin{equation*}
({\sf X}_{i=1} ^{N} K_i+\mathcal Y )\cap  L^{0 }_+(\Omega, \mathbf{F}_T,P)=\{0\}, \label{NCAY} 
 \end{equation*}
 where ${\sf X}_{i=1} ^{N} \ki$ denotes the Cartesian product of the sets $\ki$ defined in \eqref{def:KiBis}.
 \end{definition}

 \begin{proposition}[Proposition 3.2, \cite{BDFFM25}]\label{equivalentCo}
The following conditions are equivalent: 

\begin{align*}
 \mathbf{NCA}(\mathcal{Y}) \\
 K^\mathcal Y \cap  L^{0 }_+(\Omega, \mathbf{F}_T , P)&=\{0\},  \\
 C^\mathcal Y \cap  L^{1 }_+(\Omega, \mathbf{F}_T,P)&=\{0\}, 
 \end{align*}
 where 
 \begin{equation}
 \label{KyandCY}
 K^\mathcal Y:=  {\sf X}_{i=1} ^{N} ( K_i - L^{0 }_+(\Omega, \mathcal{F}^i_T,P) ) + \mathcal Y \quad \text{ and } \quad C^\mathcal Y :=K^\mathcal Y \cap L^{1 }(\Omega, \mathbf{F}_T,P). 
 \end{equation}
 \end{proposition}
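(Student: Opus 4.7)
My plan is to prove the three equivalences by showing the chain $\mathbf{NCA}(\mcY) \Rightarrow (K^\mcY \cap L^0_+ = \{0\}) \Rightarrow (C^\mcY \cap L^1_+ = \{0\}) \Rightarrow \mathbf{NCA}(\mcY)$, relying on two simple observations: that $K^\mcY$ is ``monotone from below'' in the sense that $z \in K^\mcY$ and $0 \le z' \le z$ implies $z' \in K^\mcY$, and that ${\sf X}_{i=1}^N K_i + \mcY \subseteq K^\mcY$.

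For $\mathbf{NCA}(\mcY) \Rightarrow (K^\mcY \cap L^0_+ = \{0\})$, the key step is the following: if $w = (w^1,\ldots,w^N) \in K^\mcY$ with $w \ge 0$, write $w^i = k^i - \ell^i + Y^i$ with $k^i \in K_i$, $\ell^i \in L^0_+(\Omega,\mcF^i_T,P)$, and $Y \in \mcY$. Then $k^i + Y^i = w^i + \ell^i \ge 0$, so the vector $(k^1,\ldots,k^N) + Y$ lies in $({\sf X}_{i=1}^N K_i + \mcY) \cap L^0_+(\Omega, \mathbf F_T, P)$. Under $\mathbf{NCA}(\mcY)$, this vector must be zero, so $k^i + Y^i = 0$ for every $i$, which forces $\ell^i = -w^i \le 0$ and, together with $\ell^i \ge 0$, gives $\ell^i = 0$ and finally $w^i = 0$.

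For the implication $(K^\mcY \cap L^0_+ = \{0\}) \Rightarrow (C^\mcY \cap L^1_+ = \{0\})$, this is automatic from $C^\mcY \subseteq K^\mcY$ and $L^1_+ \subseteq L^0_+$. For the converse $(C^\mcY \cap L^1_+ = \{0\}) \Rightarrow (K^\mcY \cap L^0_+ = \{0\})$, I use the componentwise truncation $g := w \wedge \mathbf{1}$ applied to any candidate $w \in K^\mcY \cap L^0_+$. Writing $w = k - \ell + Y$ as before, the vector $\ell' := \ell + (w - g) \ge 0$ belongs to $L^0_+(\Omega,\mathbf F_T,P)$ componentwise, so $g = k - \ell' + Y \in K^\mcY$; since $g$ is bounded, $g \in C^\mcY \cap L^1_+$, and since $w \ne 0$ forces $P(w^i > 0) > 0$ for some $i$ and hence $g^i > 0$ on the same set, $g$ is nonzero whenever $w$ is. This contradicts (3), so $w = 0$.

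For the remaining implication $(C^\mcY \cap L^1_+ = \{0\}) \Rightarrow \mathbf{NCA}(\mcY)$ (equivalently, we can close the chain by going (2) $\Rightarrow$ (1) via ${\sf X}_{i=1}^N K_i + \mcY \subseteq K^\mcY$, which is immediate since $0 \in L^0_+(\Omega,\mcF^i_T,P)$), nothing further is required. I anticipate no serious obstacle in this proof: the only subtlety is verifying that the truncation argument respects the structure of $K^\mcY$, and this works precisely because the definition of $K^\mcY$ absorbs nonnegative random variables into each component through the $-L^0_+(\Omega,\mcF^i_T,P)$ factor, so arbitrary ``free disposal'' is built in. No additional assumptions on $\mcY$ beyond $0 \in \mcY$ (Assumption \ref{assY}) are invoked.
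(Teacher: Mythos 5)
Your proof is correct, and it is self-contained where the paper is not: the paper merely imports this statement as Proposition 3.2 of \cite{BDFFM25} without reproducing an argument, so there is no in-paper proof to compare against. Your three ingredients are exactly the right ones: (i) the free-disposal structure of $K^{\mathcal Y}$ (the factors $-L^{0}_+(\Omega,\mathcal F^i_T,P)$) absorbs both the $\ell^i$'s in the implication $\mathbf{NCA}(\mathcal Y)\Rightarrow K^{\mathcal Y}\cap L^0_+=\{0\}$ and the truncation error $w^i-(w^i\wedge 1)=(w^i-1)^+$ in the implication $C^{\mathcal Y}\cap L^1_+=\{0\}\Rightarrow K^{\mathcal Y}\cap L^0_+=\{0\}$; (ii) the inclusion ${\sf X}_{i=1}^N K_i+\mathcal Y\subseteq K^{\mathcal Y}$ gives the remaining implication for free; (iii) only $0\in\mathcal Y$ and $\mathcal Y\subseteq L^{0}(\Omega,\mathbf F_T,P)$ are used, matching the generality of the statement. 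Two small presentational remarks: you should note explicitly that $(w^i-1)^+$ is $\mathcal F^i_T$-measurable (automatic, since $w^i=k^i-\ell^i+Y^i$ with all three terms $\mathcal F^i_T$-measurable), so that $\ell'$ really lies in the right product space; and your announced cycle $(1)\Rightarrow(2)\Rightarrow(3)\Rightarrow(1)$ is actually realized as $(1)\Rightarrow(2)$, $(2)\Rightarrow(3)$, $(3)\Rightarrow(2)$, $(2)\Rightarrow(1)$, which closes all equivalences but should be stated as such rather than as a single chain.
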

 
As proved in Section 3.1 \cite{BDFFM25}, the implications
\begin{equation}\label{Implications}
\mathbf{NA} \Rightarrow \mathbf{NCA(\mathcal{Y})} \Rightarrow \mathbf{NA}_{i} \; \forall\, i\in{1,\dots,N},
\end{equation}
essentially always hold: the former only requires $\mcY\subseteq \mcY_0$, the latter holds under the even weaker assumption that $0\in\mcY$. Moreover, it was shown in the aforementioned reference that
\begin{align}
     \mathbf{NA} & \Longleftrightarrow  \mathbf{NCA}(\mcY) \,\text{ if }\, \mcY=\mcY_0 \text{ and }\mathbb{F}^i=\mathbb{F}^j=\mathbb{F}, \label{12}
   \\  \mathbf{NA}_{i} \text{ } \forall \, i & \Longleftrightarrow   \mathbf{NCA(\mcY)} \,\text{ if }\, \mcY=\mathbb R_0 ^N.\label{NAiNCA}
\end{align}
 However, the notions of $\mathbf{NCA(\mathcal{Y})}$ generate novel concepts for general sets $\mcY$.

The following set of vectors of equivalent martingale measures with bounded densities was adopted in Theorem 3.12 \cite{BDFFM25} to characterize $\mathbf{NCA}(\mcY)$
\begin{align*}
\mathcal M ^{\infty}_e(P)=\bigg\{ Q=(Q^1,\dots,Q^N) \in {\sf X}_{i=1} ^{N} \mibe(P)  \mid &  \sum_{i=1}^N E_{Q^i}[Y^i]\leq 0 \\&\forall Y \in \mcY \cap L^{1 }(\Omega, \mathbf{F}_T,P)  \bigg\}.
\end{align*}

\begin{theorem}[Collective FTAP, Th. 3.12 \cite{BDFFM25}]\label{FTAP3}
 Let $\mcY $ be a convex cone satisfying $R^N_0 \subseteq \mcY \subseteq L^{0 }(\Omega, \mathbf{F}_T,P)$ and such that $K^{\mcY}$ is closed in $L^{0 }(\Omega, \mathbf{F}_T,P)$. 
 If $\mathcal{Y}  \subseteq L^{1 }(\Omega, \mathbf{F}_T , P)$ and $X=(X^1,\dots,X^J)$  is integrable under $P$ then 

\begin{equation*}\label{CFTAP11}
 \mathbf{NCA(\mathcal{Y})} \iff \mathcal M ^{\infty}_e(P) \not = \emptyset.
 \end{equation*}
\end{theorem}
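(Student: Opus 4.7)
The strategy is a Kreps--Yan style separation argument carried out in the product space $L^1(\Omega,\mathbf{F}_T,P)$, using the convex cone $C^{\mathcal Y}$ of Proposition \ref{equivalentCo}. The implication $(\Leftarrow)$ is essentially routine: given $Q=(Q^1,\dots,Q^N)\in\mathcal{M}^{\infty}_e(P)$ and $f\in K^{\mathcal Y}\cap L^{0}_+(\Omega,\mathbf{F}_T,P)$, decompose coordinatewise $f^i=k^i-g^i+Y^i$ with $k^i\in K_i$, $g^i\in L^{0}_+(\Omega,\mathcal F^{i}_T,P)$, $Y\in\mathcal Y$. Since $f^i\geq 0$ forces $k^i\geq -Y^i$ with $Y^i\in L^1(Q^i)$ (bounded density and $Y\in L^1(P)$), the standard discrete-time martingale transform argument gives $E_{Q^i}[k^i]\leq 0$. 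Combined with $\sum_i E_{Q^i}[Y^i]\leq 0$ (definition of $\mathcal M^{\infty}_e(P)$) and $E_{Q^i}[g^i]\geq 0$, this yields $\sum_i E_{Q^i}[f^i]\leq 0$; since each $f^i\geq 0$ and $Q^i\sim P$, we conclude $f=0$.

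For $(\Rightarrow)$, note that $C^{\mathcal Y}$ is a convex cone in $L^1(\Omega,\mathbf{F}_T,P)$ which contains $-L^1_+(\Omega,\mathcal F^{i}_T,P)$ in each coordinate (using $0\in K_j$ and $0\in\mathcal Y$), is $L^1$-closed because $K^{\mathcal Y}$ is $L^0$-closed and $L^1$-convergence implies convergence in probability, and satisfies $C^{\mathcal Y}\cap L^{1}_+(\Omega,\mathbf{F}_T,P)=\{0\}$ by Proposition \ref{equivalentCo}. Hahn--Banach then produces, for each $\xi\in L^{1}_+(\Omega,\mathbf{F}_T,P)\setminus\{0\}$, a vector $Z=(Z^1,\dots,Z^N)\in L^{\infty}(\Omega,\mathbf{F}_T,P)$ with $\sum_i E_P[Z^i g^i]\leq 0$ on $C^{\mathcal Y}$ and $\sum_i E_P[Z^i\xi^i]>0$. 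Componentwise reasoning yields $Z^i\geq 0$, $E_P[Z^i k^i]=0$ for every $k^i\in K_i\cap L^{\infty}(\Omega,\mathcal F^{i}_T,P)$ (using that $K_i$ is a vector space), and $\sum_i E_P[Z^i Y^i]\leq 0$ for every $Y\in\mathcal Y$. Let $\mathcal Z$ denote the convex set of all such $Z$; a Halmos--Savage exhaustion applied to each component produces, for every $i$, some $W_i\in\mathcal Z$ with $W_i^i>0$ $P$-a.s., and their average $W^\ast:=N^{-1}\sum_i W_i\in\mathcal Z$ has every component strictly positive $P$-a.s.

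Setting $c_i:=E_P[W^{\ast,i}]>0$ and $\mathrm{d}Q^i/\mathrm{d}P:=W^{\ast,i}/c_i$, boundedness of $W^{\ast,i}$, the identity $E_P[W^{\ast,i}k^i]=0$ for bounded $k^i\in K_i$, and $P$-integrability of $X^{(i)}$ give, via a truncation argument, $Q^i\in\mathcal M^{i,\infty}_e(P)$. The step I expect to be the main obstacle is upgrading the separation constraint $\sum_i E_P[W^{\ast,i}Y^i]\leq 0$ to the required $\sum_i E_{Q^i}[Y^i]\leq 0$, which a priori demands the normalizing constants $c_i$ to coincide. This is precisely where the hypothesis $\mathbb R^N_0\subseteq\mathcal Y$ enters: since $\mathbb R^N_0$ is a linear subspace of the cone $\mathcal Y$, applying the exchange constraint to both $x$ and $-x$ for any $x\in\mathbb R^N_0$ forces $\sum_i c_i x^i=0$, hence $c_1=\dots=c_N$. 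With a common normalization, $\sum_i E_{Q^i}[Y^i]\leq 0$ follows for all $Y\in\mathcal Y$, and $Q=(Q^1,\dots,Q^N)\in\mathcal M^{\infty}_e(P)$.
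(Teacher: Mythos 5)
Your argument is correct and follows essentially the same route as the paper: Theorem \ref{FTAP3} is quoted from \cite{BDFFM25} without proof here, but the paper's proof of the companion Theorem \ref{Th1} is precisely this Kreps--Yan separation applied to $C^{\mathcal Y}$ in $L^1$, with $\mathbb R^N_0\subseteq\mathcal Y$ used to equalize the normalizing constants; the only difference is that you re-derive the multidimensional Kreps--Yan theorem via Hahn--Banach plus a Halmos--Savage exhaustion instead of citing Th.~8.3 of \cite{BDFFM25}, and you work under $P$ directly since $\mathcal Y\subseteq L^1(\Omega,\mathbf F_T,P)$ and $X$ is $P$-integrable, so no preliminary change of measure is needed. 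One small simplification for your last step: since ${\sf X}_{i=1}^N\bigl(K_i\cap L^1(\Omega,\mathcal F^i_T,P)\bigr)\subseteq C^{\mathcal Y}$ and each $K_i$ is a linear space, the separating functional already gives $E_P[Z^ik^i]=0$ for \emph{all} $k^i\in K_i\cap L^1(\Omega,\mathcal F^i_T,P)$, so the truncation argument you anticipate for upgrading from bounded $k^i$ to the martingale property of the $P$-integrable $X^{(i)}$ is not needed.
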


\subsection{A new version of  the collective FTAP I }

 Differently from Theorem \ref{FTAP3}, each equivalent condition in the following Theorem depends on the selection of the underlying probability measure $P$, only through its negligible sets.

\begin{definition}
The set $\Me$ of equivalent collective martingale measures is defined by
\begin{equation}\label{MartingaleMeasures}
\begin{split}
    \Me=\bigg\{ Q=(Q^1,\dots,Q^N) \in {\sf X}_{i=1} ^{N}\mie   \mid & \mcY \subseteq L^{1 }(\Omega, \mathbf{F}_T,Q)  \\&\text {and  } \sum_{i=1}^N E_{Q^i}[Y^i]\leq 0 \text {  }\forall Y \in \mcY  \bigg\}.
\end{split}
\end{equation}
\end{definition}

\begin{theorem}\label{Th1}
   Let $\mcY $ be a convex cone satisfying $R^N_0 \subseteq \mcY \subseteq L^{0 }(\Omega, \mathbf{F}_T,P)$ and such that $K^{\mcY}$ is closed in $L^{0 }(\Omega, \mathbf{F}_T,P)$. If there exists $P' \in \mathcal P_e$ such that $\mcY \subseteq L^{1 }(\Omega, \mathbf{F}_T,P')$ then 
    \begin{equation*}
      \mathbf{NCA(\mathcal{Y})} \iff \Me \not = \emptyset.  
    \end{equation*}
\end{theorem}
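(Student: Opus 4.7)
The plan is to obtain the $(\Leftarrow)$ implication $\Me\neq\emptyset \Rightarrow \mathbf{NCA}(\mcY)$ by a direct martingale argument, and the $(\Rightarrow)$ implication $\mathbf{NCA}(\mcY)\Rightarrow \Me\neq\emptyset$ by a change of reference measure which reduces the statement to Theorem \ref{FTAP3}.

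For the $(\Leftarrow)$ direction, fix $Q=(Q^1,\dots,Q^N)\in\Me$ and suppose $(k^1,\dots,k^N)\in{\sf X}_{i=1}^N K_i$ and $Y\in\mcY$ satisfy $k^i+Y^i\geq 0$ for every $i$. Since $\mcY\subseteq L^1(\Omega,\mathbf{F}_T,Q)$, we have $-Y^i\in L^1(Q^i)$, so writing $k^i=(H^i\cdot X^{(i)})_T$ shows that each stochastic integral is bounded below by an integrable random variable under the martingale measure $Q^i$. A standard discrete-time supermartingale argument (of Ansel-Stricker type) then yields that $H^i\cdot X^{(i)}$ is a $Q^i$-supermartingale, and hence $E_{Q^i}[k^i]\leq 0$. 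Summing over $i$ and using $\sum_i E_{Q^i}[Y^i]\leq 0$ (the defining inequality of $\Me$) we obtain $\sum_i E_{Q^i}[k^i+Y^i]\leq 0$; since each summand is nonnegative and $Q^i\sim P$, this forces $k^i+Y^i=0$ $P$-a.s. for every $i$, proving $\mathbf{NCA}(\mcY)$.

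For the $(\Rightarrow)$ direction, replace $P'$ by an equivalent probability $P''\sim P$ with bounded density $dP''/dP'=c\,(1+\sum_{t,j}|X^j_t|)^{-1}$, where $c$ is a normalization constant. Under this choice $X=(X^1,\dots,X^J)$ is integrable under $P''$, and boundedness of the density gives $L^1(\Omega,\mathbf{F}_T,P')\subseteq L^1(\Omega,\mathbf{F}_T,P'')$, hence $\mcY\subseteq L^1(\Omega,\mathbf{F}_T,P'')$. Since $K_i$, $\mcY$ and $K^\mcY$ are defined through $P$-a.s. relations and $P\sim P''$, they are unchanged as sets, and so is $\mathbf{NCA}(\mcY)$; closedness of $K^\mcY$ in $L^0$ is also preserved, since convergence in probability admits the same subsequential a.s. characterization under equivalent measures. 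All hypotheses of Theorem \ref{FTAP3} are therefore satisfied with $P''$ in place of $P$, yielding $\mathcal{M}^{\infty}_e(P'')\neq\emptyset$.

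To conclude, verify the inclusion $\mathcal{M}^{\infty}_e(P'')\subseteq \Me$: any $Q\in \mathcal{M}^{\infty}_e(P'')$ satisfies $Q^i\sim P''\sim P$, hence $Q^i\in\mie$; boundedness of $dQ^i/dP''$ together with $\mcY\subseteq L^1(\Omega,\mathbf{F}_T,P'')$ gives $\mcY\subseteq L^1(\Omega,\mathbf{F}_T,Q)$; and the inequality $\sum_i E_{Q^i}[Y^i]\leq 0$ for all $Y\in\mcY$ coincides with the condition defining $\mathcal{M}^{\infty}_e(P'')$, because $\mcY\cap L^1(\Omega,\mathbf{F}_T,P'')=\mcY$. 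The main obstacle is bookkeeping rather than conceptual: verifying that the equivalent measure change preserves the $L^0$-closedness of $K^\mcY$ (via the subsequence principle for convergence in probability) and cleanly identifying the inclusion $\mathcal{M}^{\infty}_e(P'')\subseteq\Me$ through the bounded-density argument.
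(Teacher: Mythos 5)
Your proof is correct and follows essentially the same strategy as the paper's: the backward implication via integrability of $(k^i)^-$ under $Q^i$ and the polarity condition, and the forward implication via a change to an equivalent measure with bounded density proportional to $(1+\sum_{j,t}|X^j_t|)^{-1}$, followed by the observation that the resulting vector of bounded-density martingale measures lies in $\Me$ because $\Me$ depends on $P$ only through its null sets. The one structural difference is in the forward direction: the paper does not invoke Theorem \ref{FTAP3} as a black box but re-runs the multidimensional Kreps--Yan separation argument under the new measure $\widehat P$, which makes the proof self-contained and explicitly produces the bounded densities $z^i/\widehat E[z^i]$; your version instead checks that all hypotheses of Theorem \ref{FTAP3} hold under $P''\sim P$ (closedness of $K^{\mcY}$ in $L^0$ and the $\mathbf{NCA}$ condition being invariant under equivalent measures) and then verifies the inclusion $\mathcal M^{\infty}_e(P'')\subseteq\Me$, which is shorter and is exactly the inclusion the paper uses in its last step. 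Two small precisions: in the backward direction the lower bound $k^i\geq -Y^i$ controls only the terminal value, not the whole integral process, so the Ansel--Stricker phrasing is not quite the right reference; what you need is the discrete-time result (F\"ollmer--Schied Theorem 5.14, the one the paper cites) that integrability of the negative part of the terminal value already makes $H^i\cdot X^{(i)}$ a true $Q^i$-martingale, giving $E_{Q^i}[k^i]=0$ (and in particular the integrability of $k^i$ that your additivity step $E_{Q^i}[k^i+Y^i]=E_{Q^i}[k^i]+E_{Q^i}[Y^i]$ requires); this strengthens rather than weakens your conclusion, so there is no gap.
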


\begin{remark}\label{remY}$\,$\\
 \begin{enumerate}
 \item When $\mcY$ is a finitely generated convex cone, the assumption that there exists $P' \in \mathcal P_e$ such that $\mcY \subseteq L^{1 }(\Omega, \mathbf{F}_T,P')$  is clearly redundant.
   \item  We note that differently from $\mathcal M ^{\infty}_e(P)$ appearing in Theorem \ref{FTAP3}, which depends on the underlying historical measure $P$, in the new Theorem \ref{Th1} we exploit the newly defined set of collective martingale measures $\Me$. The latter only depends on the \emph{null sets} of the underlying measure $P$. 
   This also permits greater flexibility by requiring integrability of elements in $\mathcal{Y}$ only up to a change of measure (i.e., the existence of $P' \in \mathcal{P}_e$ such that $\mathcal{Y} \subseteq L^1(\Omega, \mathbf{F}_T, P')$), rather than imposing $\mathcal{Y} \subseteq L^1(\Omega, \mathbf{F}_T, P)$ for the original measure $P$, as was the case in Theorem \ref{FTAP3}.
        \item The Theorem establishes the equivalence between the absence of Collective Arbitrage and the existence of equivalent collective martingale measures, even in the context of distinct filtrations $\mathbb F^i$, one for each agent $i$.
    
        \item The central assumption of the theorem, the closedness of  $K^{\mcY}$ in $L^{0}(\Omega, \mathcal{F}_T, P)$, is addressed in the Section \ref{proofs:R}. It will be demonstrated, Theorem \ref{THclosure}, that $K^{\mcY}$ is indeed closed in $L^{0}(\Omega, \mathcal{F}_T, P)$ under $\mathbf{NCA(\mathcal{Y})}$, provided that $\mcY$ is a finite dimensional vector space.
        
    \end{enumerate}
    
\end{remark}

\begin{proof}

We first prove $\Me \not = \emptyset \Rightarrow \mathbf{NCA(\mathcal{Y})}$.  Take $(Q^1, \dots , Q^ N) \in  \Me$ and let $(k+Y) \in  ({\sf X}_{i=1} ^{N}  K_i   + \mathcal Y ) \cap  L^{0 }_+(\Omega,  \mathbf{F}_T,P)$. Then $k^i+Y^i \geq 0$ and thus $k^i \geq -Y^i  \in L^{1 }(\Omega, \mathcal{F}^i_T , Q^i) $ for all $i$. Hence $(k^i)^- \in L^{1}(\Omega, \mathcal{F}^i_T,Q^i)$ and thus, by Theorem 5.14 \cite{FollmerSchied2},
 $k^i \in L^{1}(\Omega, \mathcal{F}^i_T,Q^i) $ and $E_{Q^i}[k^i]=0$. Therefore $E_{Q^i}[k^i+Y^i] = E_{Q^i}[k^i]+E_{Q^i}[Y^i]=E_{Q^i}[Y^i]$, $\sum_{i=1}^N E_{Q^i}[k^i+Y^i] = \sum_{i=1}^N E_{Q^i}[Y^i]  \leq 0$, as $(Q^1, \dots , Q^ N) \in  \Me$.
From $(k^i+Y^i) \geq 0 $ for all $i$, we also get that $\sum_{i=1}^N E_{Q^i}[k^i+Y^i]  \geq 0$, so that $\sum_{i=1}^N E_{Q^i}[k^i+Y^i] = 0$. From $0 \in  ({\sf X}_{i=1} ^{N}  K_i   + \mathcal Y ) $ then $E_{Q^i}[k^i+Y^i] = 0$ for all $i$ and $k^i+Y^i=0$ for all $i$. Thus $\mathbf{NCA(\mathcal{Y})}$ holds true.

We now prove $\mathbf{NCA(\mathcal{Y})}\Rightarrow \Me \not = \emptyset $. We assume that $\mcY \subseteq L^{1 }(\Omega, \mathbf{F}_T,P')$ for some $P' \in \mathcal P_e$. By setting $\frac{d \widehat P}{dP'}:= \frac{c}{1+\sum_{j,t}| X^j_t |}$,  for some positive normalizing constant $c$, we see that $\widehat P \in \mathcal P_e$, $\frac {d\widehat P} {dP'} \in L^{\infty }(\Omega, \mathcal{F},P')$, $X \subseteq L^{1 }(\Omega, \mathbf{F}_T, \widehat P)$  and $\mcY \subseteq L^{1 }(\Omega, \mathbf{F}_T, \widehat P)$.
 Since $K^{\mcY}$ is closed in $L^{0 }(\Omega, \mathbf{F}_T,P)$,  $C^\mathcal Y := K^\mathcal Y \cap L^{1 }(\Omega, \mathbf{F}_T,\widehat P)$ is closed in $L^{1 }(\Omega, \mathbf{F}_T,\widehat P).$ 
      As $\widehat P \in \mathcal P_e$, we may assume that $\mathbf{NCA(\mathcal{Y})}$ holds under $\widehat P$ and thus from Proposition \ref{equivalentCo} we know that $\mathbf{NCA(\mathcal{Y})}$  is equivalent to 
     $ C^\mathcal Y \cap  L^{1 }_+(\Omega, \mathbf{F}_T,\widehat P)=\{0\}$. Thus
      by the multidimensional version of Kreps-Yan Theorem, Th. 8.3 \cite{BDFFM25}, we deduce the existence of a vector $z=(z^1.\dots,z^N) \in L^{\infty }(\Omega, \mathbf{F}_T,\widehat P)$, $z^i>0$ for all $i$, such that 
    \begin{equation}\label{102bis}
    \sum_{i=1}^N \widehat E[z^if^i] \leq 0 \text { for all } f \in C^\mathcal Y.    
    \end{equation}
    Since $\mcY \subseteq L^{1 }(\Omega, \mathbf{F}_T, \widehat P)$, we may rewrite $C^\mathcal Y$ as
    \begin{equation*}\label{CYbis}
        C^\mathcal Y = K^\mathcal Y \cap L^{1 }(\Omega, \mathbf{F}_T,\widehat P)=  ({\sf X}_{i=1} ^{N} ( K_i - L^{0 }_+(\Omega, \mathcal{F}^i_T, \widehat P) )) \cap L^{1 }(\Omega, \mathbf{F}_T,\widehat P) +\mcY.
    \end{equation*} 
    Condition \eqref{102bis} now implies $\sum_{i=1}^N \widehat E[z^iY^i] \leq 0 \text { for all } Y \in \mathcal Y$ and from $\mathbb R^N_0 \subseteq \mcY$ and from Remark 3.4 \cite{BDFFM25} we get $\widehat E[z^i]=\widehat E[z^j]$ for all $i,j$. Set $\frac {dQ ^i} {d \widehat P}:=\frac {z^i} { \widehat E[z^i]}>0$ for all $i$.  Thus $Q^i \in \mathcal P_e$ for all $i$, $\frac {dQ ^i} {d \widehat P} \in L^{\infty }(\Omega, \mathcal{F},\widehat P)$, $\mcY \subseteq L^{1 }(\Omega, \mathbf{F}_T, Q) $ for $Q=(Q^1,\dots,Q^N)$  and $\sum_{i=1}^N  E_{Q^i}[Y^i] \leq 0 \text { for all } Y \in \mathcal Y$.
    Moreover, \eqref{102bis} implies $\sum_{i=1}^N  
    E_{Q^i}[f^i] \leq 0 \text { for all } f \in ({\sf X}_{i=1} ^{N} ( K_i - L^{0 }_+(\Omega, \mathcal{F}^i_T, \widehat P) )) \cap L^{1 }(\Omega, \mathbf{F}_T,\widehat P)$, so that $ E_{Q^i}[k^i] = 0 \text { for all } k^i \in K_i \cap L^{1 }(\Omega, \mathcal{F}^i_T, \widehat P), $ for all $i$.\\ 
    When $X \subseteq L^{1 }(\Omega, \mathbf{F}_T, \widehat P)$, we recall from  \cite{DS2006}, Section 6.11, or \cite{FollmerSchied2} Theorem 5.14 that $\mib(\widehat P)$ can be written as  
\begin{equation*}\label{MartingaleMeasuresbounded}
\mib(\widehat P)=\left\{ Q \in \mathcal{P}_{ac} \mid \frac {dQ} {d\widehat P} \in L^{\infty }(\Omega, \mathcal{F},\widehat P)  \text { and } E_Q[k]= 0  \text {  } \forall k \in \ki \cap L^{1 }(\Omega, \mathcal{F}_T^{i},\widehat P) \right\}.
\end{equation*}
Since $X \subseteq L^{1 }(\Omega, \mathbf{F}_T, \widehat P)$ we conclude that $Q^i \in \mibe(\widehat P)$. This shows that 
    \begin{align*}
M:=\bigg\{ Q=(Q^1,\dots,Q^N) \in {\sf X}_{i=1} ^{N}\mibe(\widehat P)   \mid \,&\mcY \subseteq L^{1 }(\Omega, \mathbf{F}_T,Q)  \\&\text { and  } \sum_{i=1}^N E_{Q^i}[Y^i]\leq 0 \text {  }\forall Y \in \mcY  \bigg\} \not = \emptyset.
\end{align*}
Since $\widehat P \in \mathcal P_e$ we get $ \mibe(\widehat P) \subseteq \mie $, $ M\subseteq \Me $, thus $\Me \not = \emptyset.$ 
\end{proof}

\begin{remark}\label{remEQ}
    An inspection of the first part of the proof reveals that the implication $\Me \not = \emptyset \Rightarrow \mathbf{NCA(\mathcal{Y})}$ holds under the only Assumption \ref{assY} on the set $\mcY \subseteq L^{0 }(\Omega, \mathbf{F}_T,P)$.
\end{remark}

\subsection{On collectively self financing trading strategies}
\label{sec:selfin}

The aforementioned reference highlights the agents' capacity not only to participate in their respective markets but also to improve their outcomes through collaborative risk exchanges. We now detail how these exchanges can be dynamically implemented and subsequently show that, for collectively self-financing trading strategies, this dynamic aspect does not significantly alter the theory, except when dealing with time consistency (see the example in Section \ref{sec:timeconsist}).

For each $t$, let $\mathcal{Y}(t) \subseteq L^0(\Omega, \mathbf{F}_t, P)$. A random vector $Y_t = (Y_t^1, \dots, Y_t^N) \in \mathcal{Y}(t)$ describes the permissible exchanges among the agents at time $t$.  Recognizing that agents may cooperate at different times, we must adapt the notion of a trading strategy's value process and reformulate the concept of self-financing.

Let $v_0 = (v_0^1, \dots, v_0^N) \in \mathbb{R}^N$ denote the initial wealth of each agent, and let $\mathbf{Y} = (Y_1, \dots, Y_T) \in {\sf X}_{t=1} ^{T} \mcY(t)$  represent a collection of potential exchanges, with $Y_t = (Y_t^1, \dots, Y_t^N)$ executed at each time $t$.

We define a collective trading strategy as a family of processes 
$\mathbf{\widehat{H}} = (\widehat{H}^1, \dots, \widehat{H}^N)$, where $\widehat H^i=(h^i, H^i)$, for each agent $i$.
The process $h^i$ (a one-dimensional, $\mathcal{F}^i$-predictable process) 
represents investment in the riskless asset $X^0$, distinct from $H^i \in \mathcal{H}^i$ which represents investment in the $d_i$ risky assets $X^{(i)}$.
We denote with $\widehat X^i:=(X^0,X^{(i)})$ the $(d_i+1)$ - dimensional process and recall that we are working under the simplifying condition $X^0_t=1$ for every $t$.
The time $t=0$ value for each agent is given by

\begin{equation} \label{V0}
V_0^i(\widehat{H}^i, \mathbf{Y}) = \widehat{H}_1^i \widehat{X}_0^i = v_0^i, \quad i = 1, \dots, N.
\end{equation}
At time $t=1$, the value becomes $\widehat{H}_1^i \widehat{X}_1^i$, and the value after the exchange is

$$V_1^i(\widehat{H}^i, \mathbf{Y}) = \widehat{H}_1^i \widehat{X}_1^i + Y_1^i.$$
The portfolio $\widehat{H}_2^i$ selected by agent $i$ at time $t=1$ may reflect the exchange at time 1. Thus, we require that $\widehat{H}_2^i$ satisfies the condition
$$\widehat{H}_2^i \widehat{X}_1^i = \widehat{H}_1^i \widehat{X}_1^i + Y_1^i,$$
instead of the classical $\widehat{H}_2^i \widehat{X}_1^i = \widehat{H}_1^i \widehat{X}_1^i$.
While $\widehat{H}_2^i$ is chosen by agent $i$ using information available at $t=1$, this condition depends on the collectively chosen exchange $Y_1 = (Y_1^1, \dots, Y_1^N)$ (e.g., through the constraint $\sum_{i=1}^N Y_1^i = 0$, if $Y_1 \in \mcY_0$).  Furthermore, the amount $Y_1^i$ contributes to the financing of the new portfolio $H_2^i$. This leads to the concept of a \textit{collectively self-financing trading strategy}.

\begin{definition}
    Let $v_0=(v_0^1, \dots,v_0^N) \in \mathbb R^N$, $\mathbf Y =(Y_1,\dots,Y_T) \in {\sf X}_{t=1} ^{T} \mcY(t)$ and let $\mathbf {\widehat H} =(\widehat H^1, \dots,\widehat H^N)$ be a collective trading strategy. For each $i=1,\dots,N$ we define $V^i_0(\widehat H^i,\mathbf Y):=\widehat H^i_1\widehat X^i_0$ 
    and 
    \begin{equation*}
V^i_t(\widehat H^i,\mathbf Y):=\widehat H^i_t\widehat X^i_t +Y^i_t,\,\,\, t =1,\dots,T .
\end{equation*} 
Moreover, $\mathbf {\widehat H}$ is called collectively self financing (c.s.f.) if \eqref{V0} holds and
    \begin{equation*}
     \widehat H^i_{t+1}\widehat X^i_t=\widehat H^i_t\widehat X^i_t +Y^i_t, \,\,  t =1,\dots,T-1,\,\, i=1,\dots,N.   
    \end{equation*}
    The vector having $i$ component $V^i_t(\widehat H^i,\mathbf Y)$ is denoted by  $V_t(\mathbf{\widehat H},\mathbf Y)$.
\end{definition}

Mirroring the single-agent case, the following proposition shows that the time-$T$ value $V_T(\mathbf{\widehat{H}}, \mathbf{Y})$ of a collectively self-financing trading strategy $\mathbf{\widehat{H}}$ admits, for each agent $i$, the decomposition \eqref{VV}, 
where $H^i \in \mathcal{H}^i$ is an admissible trading strategy investing solely in the risky assets $X^{(i)}$ (without the self-financing condition), and $Y_{1:T}^i = \sum_{s=1}^T Y_s^i$ represents the cumulative exchanges up to time $T$.

\begin{proposition}
  Let $v_0 \in \mathbb R^N$ and $\mathbf Y =(Y_1,\dots,Y_T) \in {\sf X}_{t=1} ^{T} \mcY(t)$. If  $\mathbf {\widehat H}$ is collectively self financing (c.s.f.) then 
  \begin{equation}\label{VV}
  V^i_t(\widehat H^i,\mathbf Y)=v^i_0+(H^i\cdot X^{(i)})_t + Y^i_{1:t},\,\, t=1,\dots,T,
  \end{equation}
  where $\widehat H^i=(h^i,H^i)$, with $ H^i \in \mathcal H^i$, and $Y^i_{1:t}=\sum_{s=1}^t Y_s^i$. Moreover,
\begin{equation}\label{VV2}
    \{V_T(\mathbf{\widehat H},\mathbf Y) \mid \mathbf {\widehat H} \text{ is c.s.f. and } V_0(\mathbf{\widehat H},\mathbf Y) =v_0 \}=\{v_0+k+Y_{1:T} \mid  k \in {\sf X}_{i=1} ^{N} K_i, \mathbf{Y} \in {\sf X}_{t=1} ^{T} \mcY(t)  \}.
\end{equation}
\end{proposition}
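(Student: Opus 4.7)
The plan is to prove \eqref{VV} first by a direct telescoping argument exploiting the c.s.f. condition, and then deduce \eqref{VV2} as a set equality by proving both inclusions, with the non-trivial direction following from an explicit construction of $\widehat{H}^i=(h^i,H^i)$ from a given $H^i\in\mathcal{H}^i$.

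For \eqref{VV}, I would fix $i$ and write $V^i_t-V^i_{t-1}$ for $t\geq 2$ using the defining formula $V^i_s(\widehat H^i,\mathbf Y)=\widehat H^i_s\widehat X^i_s+Y^i_s$, then plug in the c.s.f. identity $\widehat H^i_t\widehat X^i_{t-1}=\widehat H^i_{t-1}\widehat X^i_{t-1}+Y^i_{t-1}$ to obtain
$V^i_t-V^i_{t-1}=\widehat H^i_t(\widehat X^i_t-\widehat X^i_{t-1})+Y^i_t$.
The case $t=1$ is immediate from $V^i_0=\widehat H^i_1\widehat X^i_0$ and $V^i_1=\widehat H^i_1\widehat X^i_1+Y^i_1$. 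Since $X^0\equiv 1$ the riskless component contributes zero to the increment, so $\widehat H^i_s(\widehat X^i_s-\widehat X^i_{s-1})=H^i_s(X^{(i)}_s-X^{(i)}_{s-1})$. Telescoping from $s=1$ to $s=t$ yields $V^i_t(\widehat H^i,\mathbf Y)=v_0^i+(H^i\cdot X^{(i)})_t+Y^i_{1:t}$, which is exactly \eqref{VV}.

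For \eqref{VV2}, the inclusion $\subseteq$ is immediate: for c.s.f.\ $\mathbf{\widehat H}$ with $V_0(\mathbf{\widehat H},\mathbf Y)=v_0$, formula \eqref{VV} at $t=T$ gives the decomposition, noting that $(H^i\cdot X^{(i)})_T\in K_i$ by definition \eqref{def:KiBis}. For the reverse inclusion $\supseteq$, given $k=(k^1,\dots,k^N)\in{\sf X}_{i=1}^N K_i$ and $\mathbf Y\in{\sf X}_{t=1}^T\mcY(t)$, I would for each $i$ pick $H^i\in\mathcal H^i$ with $k^i=(H^i\cdot X^{(i)})_T$ and then construct the riskless holdings $h^i$ recursively by setting
\begin{equation*}
h^i_1:=v_0^i-H^i_1 X^{(i)}_0,\qquad h^i_{t+1}:=h^i_t+(H^i_t-H^i_{t+1})X^{(i)}_t+Y^i_t,\quad t=1,\dots,T-1.
\end{equation*}
By construction $\widehat H^i_1\widehat X^i_0=v_0^i$ and the recursion is exactly the c.s.f.\ condition rearranged for $h^i_{t+1}$; applying \eqref{VV} to the resulting $\widehat H^i=(h^i,H^i)$ gives $V^i_T=v_0^i+k^i+Y^i_{1:T}$.

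The only point requiring genuine verification is that the $h^i$ above is $\fib$-predictable, i.e.\ $h^i_{t+1}\in L^0(\Omega,\mathcal F^i_t,P)$: $h^i_t$ is $\mathcal F^i_{t-1}$-measurable by induction, $H^i_t$ is $\mathcal F^i_{t-1}$-measurable and $H^i_{t+1}, X^{(i)}_t$ are $\mathcal F^i_t$-measurable by predictability/adaptedness, and $Y^i_t\in L^0(\Omega,\mathcal F^i_t,P)$ since $\mcY(t)\subseteq L^0(\Omega,\mathbf F_t,P)$. So there is no real obstacle; the argument is essentially bookkeeping, and the mild subtlety is merely to track measurability in the definition of $h^i_{t+1}$, which is where the assumption $\mcY(t)\subseteq L^0(\Omega,\mathbf F_t,P)$ is used.
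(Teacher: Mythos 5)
Your proof is correct and follows essentially the same route as the paper: the telescoping of increments via the c.s.f.\ identity to obtain \eqref{VV}, and then the reverse inclusion in \eqref{VV2} by recursively solving the c.s.f.\ condition for the riskless holding $h^i_{t+1}$ while checking $\mathcal F^i_t$-measurability. The only difference is cosmetic: you write the recursion for $h^i_{t+1}$ in explicit closed form, whereas the paper states it implicitly through the budget identity.
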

\begin{proof}
    Using the definition of $V^i_t(\widehat H^i,\mathbf Y)$ compute 
    \begin{align*}
      V^i_t(\widehat H^i,\mathbf Y)-V^i_{t-1}(\widehat H^i,\mathbf Y)&= \widehat H^i_t\widehat X^i_t +Y^i_t-(\widehat H^i_{t-1}\widehat X^i_{t-1} +Y^i_{t-1})\\
      &=\widehat H^i_t\widehat X^i_t +\widehat H^i_t\widehat X^i_{t-1} -\widehat H^i_t\widehat X^i_{t-1} +Y^i_t-\widehat H^i_{t-1}\widehat X^i_{t-1} -Y^i_{t-1}\\
      &=\widehat H^i_t(\widehat X^i_{t}-\widehat X^i_{t-1})+Y^i_t=H^i_t( X^{(i)}_{t}- X^{(i)}_{t-1})+Y^i_t,
    \end{align*}
    where the second-to-last equality follows from the c.s.f. condition: $\widehat H^i_{t}\widehat X^i_{t-1}=\widehat H^i_{t-1}\widehat X^i_{t-1} +Y^i_{t-1}$, and we applied $X^0_t=1$, for all $t$, in the latter equality.  Equation \eqref{VV} then follows from
    \begin{equation}\label{difference}
     V^i_t(\widehat H^i,\mathbf Y)-V^i_0(\widehat H^i,\mathbf Y)=\sum_{s=1}^t   (V^i_s(\widehat H^i,\mathbf Y)-V^i_{s-1}(\widehat H^i,\mathbf Y) ).
    \end{equation}
 As a consequence of \eqref{difference}, to prove \eqref{VV2} we only need to show the inclusion $ (\supseteq)$ in \eqref{VV2}. Let $v_0 \in \mathbb R^N$, $k \in {\sf X}_{i=1} ^{N} K_i, \mathbf{Y} \in {\sf X}_{t=1} ^{T} \mcY(t)$ be given with $k^i=(H^i\cdot X^{(i)})_T \in K_i$, $ H^i \in \mathcal H^i$ for all $i$. We need to construct a c.s.f. $\mathbf {\widehat H}$ in the form,  $\widehat H^i=(h^i,H^i)$, with $h^i$ a $1$-dimensional and $\fib$-predictable process, and $V^i_0(\widehat H^i,\mathbf Y):=\widehat H^i_1\widehat X^i_0=v_0^i$, for all $i$. By \eqref{VV} this would then imply $V^i_T(\mathbf {\widehat H},\mathbf Y)=v^i_0+(H^i\cdot X^{(i)})_T + Y^i_{1:T}=v_0^i+k^i+Y^i_{1:T}$, that is the thesis. To construct such $\mathbf {\widehat H}$, from the condition $v_0^i=\widehat H^i_1\widehat X^i_0=h_1^iX_0^0+H^i_1X^{(i)}_0$, one can obtain $h_1^i$, which is $\mathcal{F}^i_0$-measurable. Let $t \geq 1$ and suppose now that $  (h^i_{1}, \ldots, h^i_{t}) $ is $\mathcal{F}^i$-predictable and $\widehat H^i=(h^i,H^i)$ is c.s.f. up to time $ (t-1) $. The condition $\widehat H^i_t\widehat X^i_t +Y^i_t=\widehat H^i_{t+1}\widehat X^i_t=h_{t+1}^iX_t^0+H^i_{t+1}X^{(i)}_t$ allows to compute the $\mathcal{F}^i_t$-measurable random variable $h_{t+1}^i$, showing that $  (h^i_{1}, \ldots, h^i_{t+1}) $ is $\mathcal{F}^i$-predictable and $\widehat H^i=(h^i,H^i)$ is c.s.f. up to time $ t $.
 This shows how to construct such  $\mathbf {\widehat H}$ up to the final time.
    
\end{proof}

For each $t$, recall that $\mathcal{Y}(t) \subseteq L^0(\Omega, \mathbf{F}_t, P)$. For $\mathbf{Y}=(Y_1,\dots,Y_T) \in {\sf X}_{t=1} ^{T} \mcY(t) $ we set $Y^i_{1:t}:=\sum_{s=1}^t Y_s^i$, $i=1, \dots,N$ and  $\mcY_{1:T}:=\{(Y^1_{1:T},\dots, Y^N_{1:T}) \mid \mathbf{Y} \in {\sf X}_{t=1} ^{T} \mcY(t) \}$. 
Thus, in vector notation, the final wealth $V_T(\mathbf{\widehat{H}}, \mathbf{Y})$ of all agents using collectively self-financing strategies with zero initial cost can be expressed as

\begin{equation}\label{eqVT}
 V_T(\mathbf {\widehat H},\mathbf Y)=k+Y \in {\sf X}_{i=1} ^{N} K_i +\mcY_{1:T}.   
\end{equation}

Several specific cases warrant consideration.

\begin{enumerate}

\item Restrictions on the timing of exchanges can be modeled by imposing measurability constraints on the random vector $Y \in \mathcal{Y}_{1:T}$ in Equation \eqref{eqVT}. Specifically, if $\mathbb{F}^i=\mathbb{F}^j=\mathbb{F}$ and  the set of admissible exchanges at the final time $T$ is chosen such that $\mathcal{Y}(T) = \mathcal{Y}_0$, then $\mathcal{Y}_{1:T} = \mathcal{Y}_0$, and we recover the case where $\mathbf{NCA}(\mathcal{Y}_{1:T})$ if and only if $\mathbf{NA}$ (see \eqref{12}). Alternatively, if exchanges are prohibited after time $\widehat{t} \in \{1, \dots, T-1\}$, then $\mathcal{Y}_{1:T} \subseteq L^0(\Omega, \mathbf{F}_{\widehat{t}}, P)$, which corresponds to the setting discussed in Section 6.3 of \cite{BDFFM25}.

  \item Similarly, additional constraints on the measurability of each $Y_t = (Y_t^1, \dots, Y_t^N) \in L^0(\Omega, \mathbf{F}_t, P)$ can be imposed. For example, one might require $(Y_t^1, \dots, Y_t^N) \in L^0(\Omega, \mathbf{F}_{t-1}, P)$ for each $t$. In the extreme case where exchanges are permitted only at time $t=1$, this corresponds to allowing only deterministic exchanges, that is $\mathcal{Y}_{1:T}=\mathbb R^N_0$, and thus $\mathbf{NCA}(\mathcal{Y}_{1:T})$ if and only if $\mathbf{NA}_i$ for all $i$ (see \eqref{NAiNCA}).

    \item To establish relevant examples of admissible exchanges for which the convex cone $K^{\mathcal{Y}}$ is closed in probability, we will require specific geometric properties of the set $\mathcal{Y}(t)$. For instance, we might require that $\mathcal{Y}(t)$ be a \textit{finite dimensional  vector space}. Assuming that  $\cap_{i=1}^N \mathcal F ^i_t \not= \emptyset$ and that $\{A_1, \dots, A_M\}$, with $A_n \in \cap_{i=1}^N \mathcal F ^i_t $
    for each $n$, is a finite partition of $\Omega$, 
    a relevant example is
    \begin{equation*}
    \mathcal{Y}(t) := \left\{ Y_t  \in L^0(\Omega, \mathbf{F}_t, P) \mid Y_t^i = \sum_{n=1}^M y_n^i 1_{A_n}, \sum_{i=1}^N y_n^i = 0, y_n \in \mathbb{R}^N \right\}.
    \end{equation*}
\end{enumerate}

Based on the preceding discussion and the representation of the final wealth $V_T(\mathbf{\widehat{H}}, \mathbf{Y})$ of agents using collectively self-financing strategies with zero initial cost as in \eqref{eqVT}, we will henceforth assume that the final wealth of each agent $i$ can be expressed in the form
$$k^i + Y^i,$$
for some $k^i \in K_i$ and some vector $Y = (Y^1, \dots, Y^N) \in \mathcal{Y} \subseteq L^0(\Omega, \mathbf{F}_T, P)$.


\section{Pricing-hedging duality, closure of $K^{\mcY}$ and completeness for finite dimensional exchanges}\label{sec2}

In this section we propose a friendly proof of the Collective Fundamental Theorem of Asset Pricing (CFTAP I) and the pricing-hedging duality, in a setup where exchanges among agents are allowed only on a finite dimensional vector space. 
Let $(\Omega, \mathcal{F}_T, P)$ be a (general) probability space and recall the meaning of $L^{0}(\Omega, \mathbf{F}_T, P)$ as introduced in  Notation \ref{productnotation}.

\begin{theorem}[CFTAP I]\label{FTAP:R}
    Let $\mcY \subseteq L^{0}(\Omega, \mathbf{F}_T,P)$ be a finite dimensional  vector space containing  $\mathbb R^N_0 $. Then 
    \begin{equation*}
      \mathbf{NCA(\mathcal{Y})} \iff \Me \not = \emptyset.  
    \end{equation*}
where $\mathcal{M}_e(\mcY)$ is the set of collective equivalent martingale measures defined in \eqref{MartingaleMeasures}.
\end{theorem}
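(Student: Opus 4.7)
The plan is to deduce Theorem~\ref{FTAP:R} directly from Theorem~\ref{Th1}, by verifying that the finite dimensionality of $\mcY$ secures the two technical hypotheses of Theorem~\ref{Th1} that are not built into the statement of Theorem~\ref{FTAP:R}. The implication $\Me\neq\emptyset\Rightarrow \mathbf{NCA(\mathcal{Y})}$ is for free: by Remark~\ref{remEQ} this direction requires only Assumption~\ref{assY}, which is clearly satisfied since a vector space contains $0$. So the work is entirely on the other direction.

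For the forward implication, I would first dispose of the integrability hypothesis of Theorem~\ref{Th1}, namely the existence of $P'\in\mathcal P_e$ with $\mcY\subseteq L^1(\Omega,\mathbf{F}_T,P')$. Pick a basis $Y^{(1)},\dots,Y^{(d)}$ of $\mcY$ (as a finite dimensional real vector space) and set
\begin{equation*}
\frac{dP'}{dP} := \frac{c}{1+\sum_{j=1}^d\sum_{i=1}^N \bigl|Y^{(j),i}\bigr|},
\end{equation*}
with $c>0$ the normalizing constant. Then $P'\in\mathcal P_e$, each basis element $Y^{(j)}$ lies in $L^1(\Omega,\mathbf{F}_T,P')$, and therefore by linearity every $Y\in\mcY$ does as well. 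This is exactly the remark made in Remark~\ref{remY}(1).

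Next I would verify the closedness assumption on $K^{\mcY}$. Since $\mcY$ is a finite dimensional vector space, it is in particular a convex cone containing $\mathbb R^N_0$, and under $\mathbf{NCA(\mathcal{Y})}$ the closure of $K^{\mcY}$ in $L^0(\Omega,\mathbf{F}_T,P)$ is the content of Theorem~\ref{THclosure} announced in Remark~\ref{remY}(4). Applying that theorem supplies the last hypothesis needed.

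With (a) $\mcY$ a convex cone containing $\mathbb R^N_0$, (b) $K^{\mcY}$ closed in $L^0$ by Theorem~\ref{THclosure}, and (c) $\mcY\subseteq L^1(\Omega,\mathbf{F}_T,P')$ for the $P'\in\mathcal P_e$ constructed above, all hypotheses of Theorem~\ref{Th1} are in force; that theorem then yields $\Me\neq\emptyset$. The main obstacle in this argument is entirely outsourced: it is the closure property of $K^{\mcY}$ in probability for finite dimensional $\mcY$, which is precisely what Theorem~\ref{THclosure} is designed to handle, and which is the genuinely new input compared with the classical Dalang--Morton--Willinger closedness statement for a single agent.
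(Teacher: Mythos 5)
Your proposal is correct and is essentially the route the paper itself announces right after the statement: reduce to Theorem \ref{Th1} via Remark \ref{remY} (Item 1) for the integrability of a finitely generated $\mcY$ under some $P'\in\mathcal P_e$, use Remark \ref{remEQ} for the easy implication, and outsource the closedness of $K^{\mcY}$ to Theorem \ref{THclosure}. The only point worth checking is that the paper proves Theorem \ref{THclosure} inside a recursion carried out simultaneously with Theorems \ref{FTAP:R} and \ref{duality:R}; since the closure step of that recursion (base case plus Lemma \ref{closure+Y}) uses only $\mathbf{NCA}$ and closedness at the previous level, and never the nonemptiness of $\Me$, your appeal to Theorem \ref{THclosure} is not circular.
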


The validity of this theorem would be a direct consequence of Theorem \ref{Th1} and Item 1 in Remark \ref{remY}, contingent upon establishing the closedness of $K^{\mcY}$  in $L^{0}(\Omega, \mathbf{F}_T, P)$. This closedness will be demonstrated via a recursive procedure on the the number of vectors generating  $\mcY$ in the subsequent section. As a direct consequence of this theorem, we present in Corollary \ref{cor1} a simplified version of the CFTAP I applicable to vector spaces $\mcY$, which are not necessarily finitely generated.

For each $Y$ in $L^0(\Omega, \mathbf{F}_T,P)$,  recall that $\mathrm{span}(Y,\mathbb R^N_0 ) \subseteq L^{0 }(\Omega, \mathbf{F}_T,P)$ is the (finite dimensional) linear space generated by the single random vector $Y$ and the vector space $\mathbb R^N_0 $.  Observe that the set $\mathcal{M}_e(\mathrm{span}(Y,\mathbb R^N_0 ))$ is given by
\begin{equation*}
\left\{ Q=(Q^1,\dots,Q^N) \in {\sf X}_{i=1} ^{N}\mie   \mid Y \subseteq L^{1 }(\Omega, \mathbf{F}_T,Q)  \text { and  } \sum_{i=1}^N E_{Q^i}[Y^i]\leq 0   \right\}.
\end{equation*}
Having this in mind, we can state the following result.
\begin{corollary}\label{cor1}
    Let $\mcY \subseteq L^{0 }(\Omega, \mathbf{F}_T,P)$ be a vector space containing  $\mathbb R^N_0 $. Then 
    \begin{equation*}
      \mathbf{NCA(\mathcal{Y})} \iff  \mathcal M_e(\mathrm{span}(Y,\mathbb R^N_0 )) \neq \emptyset \quad \forall Y \in \mcY.  
    \end{equation*}
\end{corollary}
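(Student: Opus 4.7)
The plan is to reduce the statement to Theorem \ref{FTAP:R}, which already handles finite-dimensional vector spaces containing $\mathbb R^N_0$. The central observation is a simple monotonicity property: for subsets $\mcY' \subseteq \mcY$ of $L^{0}(\Omega, \mathbf{F}_T, P)$ both containing $0$, one has $K^{\mcY'} \subseteq K^{\mcY}$, so $\mathbf{NCA}(\mcY) \Rightarrow \mathbf{NCA}(\mcY')$. In the other direction, any failure of $\mathbf{NCA}(\mcY)$ is witnessed by elements $k \in {\sf X}_{i=1}^N K_i$ and a \emph{single} exchange vector $Y \in \mcY$, hence also witnesses the failure of $\mathbf{NCA}(\mathrm{span}(Y, \mathbb R^N_0))$.

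For the forward direction, I would fix an arbitrary $Y \in \mcY$. Since $\mathcal{Y}$ is a vector space containing $\mathbb R^N_0$, the subspace $\mathrm{span}(Y, \mathbb R^N_0)$ is a finite-dimensional vector space satisfying $\mathbb R^N_0 \subseteq \mathrm{span}(Y, \mathbb R^N_0) \subseteq \mcY$. By the monotonicity observation, $\mathbf{NCA}(\mcY)$ implies $\mathbf{NCA}(\mathrm{span}(Y, \mathbb R^N_0))$. Applying Theorem \ref{FTAP:R} to this finite-dimensional space yields $\mathcal M_e(\mathrm{span}(Y, \mathbb R^N_0)) \neq \emptyset$, as required.

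For the converse, I would argue by contradiction. Assume $\mathcal M_e(\mathrm{span}(Y, \mathbb R^N_0)) \neq \emptyset$ for all $Y \in \mcY$, and suppose $\mathbf{NCA}(\mcY)$ fails. Then there exist $k \in {\sf X}_{i=1}^N K_i$ and $Y_0 \in \mcY$ such that $k + Y_0 \in L^{0}_+(\Omega, \mathbf{F}_T, P) \setminus \{0\}$. Since $Y_0 \in \mathrm{span}(Y_0, \mathbb R^N_0)$, the same pair $(k, Y_0)$ shows that $\mathbf{NCA}(\mathrm{span}(Y_0, \mathbb R^N_0))$ fails. Invoking Theorem \ref{FTAP:R} once more, this time in its contrapositive form on the finite-dimensional space $\mathrm{span}(Y_0, \mathbb R^N_0)$, gives $\mathcal M_e(\mathrm{span}(Y_0, \mathbb R^N_0)) = \emptyset$, contradicting the hypothesis applied to $Y = Y_0$.

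There is essentially no technical obstacle: the argument is a pure reduction, with all the heavy lifting (closedness of $K^{\mcY}$, separation, construction of martingale measures) already encapsulated in Theorem \ref{FTAP:R}. The one point worth emphasizing is that collective arbitrage, as defined via $({\sf X}_{i=1}^N K_i + \mcY) \cap L^{0}_+ \neq \{0\}$, always manifests itself through a single element of $\mcY$, and it is precisely this finite-witness feature that permits localization to the one-dimensional extension $\mathrm{span}(Y, \mathbb R^N_0)$ of $\mathbb R^N_0$.
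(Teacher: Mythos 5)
Your proof is correct and takes essentially the same route as the paper: both reduce the statement to Theorem \ref{FTAP:R} applied to each finite-dimensional subspace $\mathrm{span}(Y,\mathbb R^N_0)\subseteq\mcY$, using monotonicity of $\mathbf{NCA}$ in one direction and the fact that any collective arbitrage in $\mcY$ is witnessed by a single exchange $Y$ in the other. The paper merely compresses your two directions into the remark that $\mathbf{NCA}(\mathrm{span}(Y,\mathbb R^N_0))$ for all $Y\in\mcY$ is equivalent to $\mathbf{NCA}(\mcY)$.
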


\begin{proof}
From Theorem \ref{FTAP:R} we know that $$ \mathbf{NCA}(\mathrm{span}(Y,\mathbb R^N_0 )) \iff \mathcal M_e(\mathrm{span}(Y,\mathbb R^N_0 )) \neq \emptyset. $$
   From the definition of $\mathbf{NCA}$, one can check that 
   \begin{equation*}
    \mathbf{NCA}(\mathrm{span}(Y,\mathbb R^N_0 )) \quad \forall Y \in \mcY \iff \mathbf{NCA}(\mathcal{Y} +\mathbb R^N_0  )  \iff \mathbf{NCA}(\mathcal{Y}),
   \end{equation*}
   where the last equivalence is trivial as, by assumption, $(\mathcal{Y} +\mathbb R^N_0  )=\mathcal{Y}$.
\end{proof}
\begin{remark}
Analogously, Corollary \ref{cor1} remains valid, with appropriate modifications, when a finite collection of vectors from $\mcY$ replaces the single vector $Y \in \mcY$.
\end{remark}

In order to state the collective version of the pricing-hedging duality, we recall from \cite{BDFFM25} the concept of collective super/sub replication price.

\begin{definition}[Definition 4.1 and 4.16 \cite{BDFFM25}]
    For $\mcY\subseteq L^{0}(\Omega, \mathbf{F}_T,P)$ and $(f^1,\dots,f^N)=f\in L^{0}(\Omega, \mathbf{F}_T, P)$ we define the collective superhedging and subhedging price as
\begin{align}
    \label{super:rho} \rho^{\mcY}_+(f) & =\inf\left\{\sum_{i=1}^N m_i \mid m\in\R^N \text{ and }  m+k+Y \geq f \;\text{for }  k\in {\sf X}_{i=1} ^{N}  K_i,\, Y\in \mcY \right\},
    \\ \notag \rho^{\mcY}_-(f) & =\sup\left\{ \sum_{i=1}^N m_i\mid m\in\R^N \text{ and }  m+k-Y \leq f \;\text{for }  k\in {\sf X}_{i=1} ^{N}  K_i,\, Y\in \mcY \right\}.
\end{align}
We say that $\rho^{\mcY}_+(f)$ (similarly for $\rho^{\mcY}_-(f)$) defined in \eqref{super:rho} is attained if there exists $ m\in\R^N, \, k\in {\sf X}_{i=1} ^{N}  K_i,\, Y\in \mcY $  such that $m+k+Y \geq f $ and $\rho^{\mcY}_+(f)=\sum_{i=1}^N m_i $ .
\end{definition}

We defer the reader to \cite{BDFFM25} Section 4 for the interpretation of $\rho^{\mcY}_+(f)$ and the difference with the classical superhedging price, there denoted by $\rho^N_+(f)$, of the $N $ claims $(f^1,\dots,f^N)=f$.
We also  note that when $\mcY$ is a vector space 
\begin{equation}
\label{rem:rhopm}
    \rho^{\mcY}_-(f)=-\rho^{\mcY}_+(-f)\text{ for every }f\in L^{0}(\Omega, \mathbf{F}_T,P)
\end{equation}

To handle the super-hedging duality of potentially unbounded or non-integrable contingent claim vectors $f \in L^0(\Omega, \mathbf{F}, P)$ we introduce the following convex set of vectors of martingale measures. For any arbitrary $\varphi\in L^{0}(\Omega, \mathcal{F},P)$ let
 \begin{equation}\label{Mfi}\mathcal{M}^{\varphi}_e(\mcY):=\{Q\in \mathcal{M}_e(\mcY)\mid E_{Q_i}[|\varphi^i|]<\infty\;\forall\,i=1,\ldots,N\} \subseteq \mathcal{M}_e(\mcY) .
 \end{equation}
 
\begin{remark}\label{remdegenerate}
    We observe that for $\mcY=\mathrm{span}(Y_1,\dots, Y_R)$ and  $\varphi^i=\max \{|Y^i_1|,\ldots, |Y^i_R|\}$ we have $\mathcal{M}^{\varphi}_e(\mcY)=\Me $. 
\end{remark}
 
 \begin{assumption}
 \label{ass2} $\,$

     \begin{enumerate}
         \item $\mcY $ is a finite dimensional vector space satisfying $R^N_0 \subseteq \mcY \subseteq L^{0 }(\Omega, \mathbf{F}_T,P)$, i.e. $\mcY=\mathrm{span}(Y_m,m=0,\dots,R)$, with $Y_0=R^N_0$ and $Y_m \in L^{0 }(\Omega, \mathbf{F}_T,P)$, $m=1,\dots,R$.
         \item \textbf{NCA}$(\mcY)$ holds true
     \end{enumerate}
 \end{assumption}

The Assumption \ref{ass2} in particular implies that there exists $P' \in \mathcal P_e$ such that $\mcY \subseteq L^{1 }(\Omega, \mathbf{F}_T,P')$ 
\begin{theorem}[Pricing hedging duality]\label{duality:R} Let Assumption \ref{ass2} holds true. Then 
for any $f\in L^{0}(\Omega, \mathbf{F}_T,P)$ and for any $\varphi\in L^{0}(\Omega, \mathbf{F}_T,P)$ such that  
$\max \{|f^i|,|Y^i_1|,\ldots, |Y^i_R|\}\leq \varphi^i$ we have $\mathcal{M}^{\varphi}_e(\mcY) \not= \emptyset$ and
\begin{equation}\label{pricing:hedging}
  \rho^{\mcY}_+(f)=\sup \left \{\sum_{i=1}^N E_{Q^i}[f^i]\mid Q\in \mathcal{M}^{\varphi}_e(\mcY) \right \}>-\infty. 
\end{equation}
  Moreover, when finite, $ \rho^{\mcY}_+(f) $ defined in \eqref{super:rho} is attained.
  
\end{theorem}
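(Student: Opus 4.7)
The plan is to combine the $L^0$-closedness of $K^\mcY$ (Theorem \ref{THclosure}, which applies since $\mcY$ is finite-dimensional under Assumption \ref{ass2}) with a single Hahn--Banach separation in an appropriately chosen $L^1$-space, thereby avoiding the more elaborate arguments employed in \cite{BDFFM25}. I would first fix an auxiliary reference probability $\widetilde P \in \mathcal{P}_e$ by $d\widetilde P/dP \propto (1+\sum_{i=1}^N \varphi^i + \sum_{j,t}|X^j_t|)^{-1}$, so that $\varphi$, the price process $X$, the payoff $f$ and the generators $Y_1,\dots,Y_R$ of $\mcY$ all belong to $L^1(\widetilde P)$; the hypothesis $|f^i|,|Y^i_m|\leq \varphi^i$ is used exactly here. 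As in the proof of Theorem \ref{Th1}, the $L^0$-closedness of $K^\mcY$ transfers to $L^1(\widetilde P)$-closedness of $C^\mcY := K^\mcY \cap L^1(\Omega,\mathbf F_T,\widetilde P)$.

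Non-emptiness of $\mathcal M^\varphi_e(\mcY)$ is obtained by repeating the Kreps--Yan portion of the proof of Theorem \ref{Th1} with reference measure $\widetilde P$: the resulting $Q_0 \in \mathcal M_e(\mcY)$ has densities $dQ_0^i/d\widetilde P$ in $L^\infty(\widetilde P)$, whence $\varphi \in L^1(Q_0)$ automatically and $Q_0 \in \mathcal M^\varphi_e(\mcY)$. The easy inequality $\rho^\mcY_+(f) \geq \sup_{Q \in \mathcal M^\varphi_e(\mcY)} \sum_i E_{Q^i}[f^i]$ then follows in the style of Theorem \ref{Th1}: for every admissible $(m,k,Y)$ with $m+k+Y \geq f$, the inequality $k^i \geq f^i - m_i - Y^i$ combined with $|f^i|,|Y^i| \leq \varphi^i \in L^1(Q^i)$ gives $(k^i)^- \in L^1(Q^i)$, hence by Theorem 5.14 of \cite{FollmerSchied2} $k^i \in L^1(Q^i)$ with $E_{Q^i}[k^i]=0$; summing and using $\sum_i E_{Q^i}[Y^i]\leq 0$ delivers $\sum_i m_i \geq \sum_i E_{Q^i}[f^i]$.

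For the reverse inequality the crucial reduction is that $\mathbb R_0^N \subseteq \mcY$ allows one to write $\rho^\mcY_+(f) = N \cdot \inf\{c \in \mathbb R \mid f - c\mathbf{1} \in K^\mcY\}$, since for every $m \in \mathbb R^N$ we have $f - m \in K^\mcY$ if and only if $f - (\sum_i m_i/N)\mathbf{1} \in K^\mcY$ (use $m - (\sum_i m_i/N)\mathbf{1} \in \mathbb R_0^N \subseteq \mcY$). Fix $\alpha < \rho^\mcY_+(f)$: then $f - (\alpha/N)\mathbf{1} \in L^1(\widetilde P) \setminus C^\mcY$, and Hahn--Banach in $L^1(\widetilde P)$ produces a non-zero $z \in L^\infty(\widetilde P)$ with $\sum_i E_{\widetilde P}[z^i g^i] \leq 0$ on the convex cone $C^\mcY$ and $\sum_i E_{\widetilde P}[z^i(f^i - \alpha/N)] > 0$. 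Standard consequences give $z^i \geq 0$ (from $-L^0_+ \subseteq K^\mcY$), $E_{\widetilde P}[z^1] = \cdots = E_{\widetilde P}[z^N] =: a > 0$ (from $\mathbb R_0^N \subseteq \mcY$), and $E_{R^i}[k^i]=0$ for $dR^i/d\widetilde P := z^i/a$ and every $k^i \in K_i \cap L^1(\widetilde P)$ (using that $K_i$ is a vector space). The main obstacle is that $R$ consists of merely absolutely-continuous martingale measures. This is circumvented by convexifying with $Q_0 \in \mathcal M^\varphi_e(\mcY)$ from the first step: for $\lambda \in (0,1]$, $Q_\lambda^i := (1-\lambda) R^i + \lambda Q_0^i$ lies in $\mathcal M^\varphi_e(\mcY)$, and $\sum_i E_{Q_\lambda^i}[f^i] \to \sum_i E_{R^i}[f^i] = a^{-1} \sum_i E_{\widetilde P}[z^i f^i] > \alpha$ as $\lambda \downarrow 0$. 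Hence $\sup_Q \sum_i E_{Q^i}[f^i] > \alpha$, and letting $\alpha \uparrow \rho^\mcY_+(f)$ closes the duality; the finiteness $\rho^\mcY_+(f) > -\infty$ is automatic since $\mathcal M^\varphi_e(\mcY) \neq \emptyset$.

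Attainment when $\rho^\mcY_+(f) < \infty$ is an immediate consequence of the $L^0$-closure of $K^\mcY$: picking $c_n \downarrow \rho^\mcY_+(f)/N$ with $f - c_n \mathbf{1} \in K^\mcY$, the $P$-a.s.\ limit $f - (\rho^\mcY_+(f)/N)\mathbf{1}$ still lies in $K^\mcY$, so an attaining triple $(m,k,Y)$ with $m = (\rho^\mcY_+(f)/N)\mathbf{1}$ exists.
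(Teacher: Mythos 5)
Your proof is correct, but it takes a genuinely different route from the paper. The paper does not obtain Theorem \ref{duality:R} from a single separation: it proves Theorems \ref{FTAP:R}, \ref{duality:R} and \ref{THclosure} simultaneously by recursion on the number of generators of $\mcY$, where the inductive step adds one exchange $Y_r$ at a time, invokes Lemmas \ref{interval} and \ref{closure+Y}, separates a point from a set in $\R^2$ (not in $L^1$), and recombines measures from $\mathcal{M}^{\varphi}_e(\mcY^{r-1})$ by explicit convex combinations to enforce $\sum_{i} E_{Q^i}[Y^i_r]=0$; the case $\rho^{\mcY}_+(f)=+\infty$ is then handled by a truncation argument. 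You instead take the closedness of $K^{\mcY}$ (Theorem \ref{THclosure}) and the Kreps--Yan construction behind Theorem \ref{Th1} as inputs, pass to $\widetilde P\sim P$ under which $\varphi$, $f$, $X$ and the generators are integrable, and run one infinite-dimensional Hahn--Banach separation of $f-(\alpha/N)\mathbf 1$ from the closed convex cone $C^{\mcY}\subseteq L^{1}(\Omega,\mathbf{F}_T,\widetilde P)$, repairing the lack of equivalence of the separating measure $R$ by mixing with a fixed $Q_0\in\mathcal{M}^{\varphi}_e(\mcY)$ --- essentially the classical superhedging-duality scheme adapted to the collective setting. This is legitimate and not circular: the closure part of the paper's recursive claim (base case plus Lemma \ref{closure+Y}) does not use the duality, so Theorem \ref{THclosure} is available to you, and your individual steps check out (bounded densities $dQ_0^i/d\widetilde P$ give $\varphi\in L^1(Q_0^i)$; $z^i\geq 0$ from $-L^0_+$; equal means from $\R^N_0\subseteq\mcY$; $E_{R^i}[k^i]=0$ on $K_i\cap L^1(\widetilde P)$ since $K_i$ is a vector space; the mixture $Q_\lambda$ preserves equivalence, the martingale property and the polar condition on $\mcY$). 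What each approach buys: yours is shorter, treats $\rho^{\mcY}_+(f)=+\infty$ uniformly without truncation, and produces $\mathcal{M}^{\varphi}_e(\mcY)\neq\emptyset$ en route; the paper's recursion confines the functional analysis to the classical one-agent duality \eqref{superclassic} at the base step and finite-dimensional separation afterwards, which is what makes closure, CFTAP I and the duality come out of one self-contained induction (and why Theorem \ref{THclosure} is presented there only as a byproduct). Your attainment argument coincides with Lemma \ref{lemmarho} Item 5, as in the paper.
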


The proofs of Theorem \ref{FTAP:R} and Theorem \ref{duality:R} are postponed to Section \ref{proofs:R}. As a byproduct (see Section \ref{proofs:R}, Item 1) we will also obtain the following

\begin{theorem}[Closure of $K^{\mcY}$]\label{THclosure}
    Under Assumption \ref{ass2}, $K^{\mcY} \text{ is closed in } L^{0}(\Omega, \mathbf{F}_T,P)$.
    \end{theorem}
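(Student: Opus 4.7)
The plan is to prove closedness of $K^{\mcY}$ by induction on the dimension of $\mcY$, adding one generator at a time. The base case is $\mcY=\{0\}$: here $\mathbf{NCA}(\{0\})$ reduces componentwise to $\mathbf{NA}_i$ for every agent $i$ (via Proposition \ref{equivalentCo}), so the classical Dalang--Morton--Willinger theorem (Theorem \ref{DMW}) ensures that each $K_i - L^0_+(\Omega, \mathcal{F}^i_T, P)$ is closed in $L^0$, and the Cartesian product of finitely many closed sets yields closedness of $K^{\{0\}}$. Although Assumption \ref{ass2} requires $\mathbb{R}^N_0 \subseteq \mcY$, the inductive procedure can pass through intermediate subspaces without this inclusion, since $\mathbf{NCA}(\mcY)$ implies $\mathbf{NCA}(\mcY_0)$ for every vector subspace $\mcY_0 \subseteq \mcY$.

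For the inductive step, assume $K^{\mcY}$ is closed under $\mathbf{NCA}(\mcY)$, and set $\mcY' = \mcY + \mathbb{R} Y$ for some $Y \in L^0(\Omega, \mathbf{F}_T, P)$. Take $f_n = g_n + \lambda_n Y \in K^{\mcY'}$ with $g_n \in K^{\mcY}$, $\lambda_n \in \mathbb{R}$, and $f_n \to f$ $P$-a.s. If a subsequence $(\lambda_{n_k})$ is bounded, Bolzano--Weierstrass yields a further subsequence with $\lambda_{n_k} \to \lambda$; then $g_{n_k} = f_{n_k} - \lambda_{n_k} Y \to f - \lambda Y$ a.s., the inductive hypothesis gives $f - \lambda Y \in K^{\mcY}$, and hence $f \in K^{\mcY} + \mathbb{R} Y = K^{\mcY'}$.

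The delicate case is $|\lambda_n| \to \infty$. Dividing by $|\lambda_n|$ gives $g_n/|\lambda_n| + (\lambda_n/|\lambda_n|) Y = f_n/|\lambda_n| \to 0$ a.s.; after extraction, $\lambda_n/|\lambda_n| \to \varepsilon \in \{-1,+1\}$ and $g_n/|\lambda_n| \to -\varepsilon Y$ a.s. Since $K^{\mcY}$ is a closed convex cone (closed by the inductive hypothesis), $-\varepsilon Y \in K^{\mcY}$. Writing componentwise $-\varepsilon Y^i = k^i - \ell^i + (Y')^i$ with $k \in {\sf X}_{i=1}^N K_i$, $\ell^i \in L^0_+(\Omega, \mathcal{F}^i_T, P)$, and $Y' \in \mcY$, one rearranges to $\ell = k + (Y' + \varepsilon Y) \in {\sf X}_{i=1}^N K_i + \mcY' \subseteq K^{\mcY'}$, with $\ell \geq 0$. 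The hypothesis $\mathbf{NCA}(\mcY')$, i.e.\ $K^{\mcY'} \cap L^0_+ = \{0\}$ by Proposition \ref{equivalentCo}, forces $\ell = 0$, whence $\varepsilon Y = -k - Y'$. Since each $K_i$ is a linear subspace of $L^0$ ($\hi$ is a vector space and the stochastic integral is linear) and $\mcY$ is a vector space, this yields $\varepsilon Y \in {\sf X}_{i=1}^N K_i + \mcY \subseteq K^{\mcY}$; combined with $-\varepsilon Y \in K^{\mcY}$ and the cone property, $\mathbb{R} Y \subseteq K^{\mcY}$, so $K^{\mcY'} = K^{\mcY}$, which is closed by the inductive hypothesis, and in particular $f \in K^{\mcY'}$.

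The main obstacle is precisely the divergent subcase: converting the limit of rescaled differences into the structural identity $K^{\mcY'} = K^{\mcY}$. The critical ingredients are the symmetry of the linear subspaces $\mcY$ and $K_i$ under negation, and the characterization of $\mathbf{NCA}(\mcY')$ through the absence of non-trivial non-negative elements of $K^{\mcY'}$ provided by Proposition \ref{equivalentCo}.
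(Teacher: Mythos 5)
Your proof is correct and follows essentially the same route as the paper's: a recursion that adds one generator of $\mcY$ at a time (Lemma \ref{closure+Y} together with the base step in Section \ref{proofs:R}), with the bounded/unbounded dichotomy for the coefficient of the new generator and $\mathbf{NCA}$ plus closedness of the previous cone resolving the divergent case. Your only deviations are cosmetic — you start at $\mcY=\{0\}$ and absorb $\R^N_0$ through the same inductive step (the paper handles $\R^N_0$ as a separate base case) and you merge the paper's preliminary split on whether $Y\in{\sf X}_{i=1}^N K_i+\mcY$ into the divergent case via linearity of the $K_i$ and $\mcY$ — though note that the closedness of each $K_i-L^{0}_+(\Omega,\mathcal{F}^i_T,P)$ under $\mathbf{NA}_i$ is the result quoted from \cite{DS2006}, Theorem 6.9.2, not literally Theorem \ref{DMW}.
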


\label{discussiondifferences} A comparison with the findings of \cite{BDFFM25} is now presented.
The results in this paper hold for possibly distinct filtrations, one for each agent. Although this formulation is already present in \cite{BDFFM25}, many of the proofs therein require coinciding filtrations—a condition we do not impose here. In particular, with respect to the closure of $K^{\mcY}$, this represents a significant departure from the earlier approach.
Aside from the case of coinciding filtrations, the requirements and the approach here are generally alternative and complementary in spirit to those in \cite{BDFFM25}. Specifically, we keep defining trading strategies and their final payoffs in full generality, but we establish our results by imposing finite dimensionality on the set $\mcY$ of permitted exchanges—a condition not considered in \cite{BDFFM25}—while avoiding the scalability condition stipulated in Requirement 2 of \cite{BDFFM25} Theorem 6.11. The finite dimensionality allows us to adopt a recursion method which was not feasible in the typical setup of \cite{BDFFM25}, and was inspired by the proof of Theorem 5.1 in  Bouchard and Nutz (2015) \cite{BouchardNutz15}.  Moreover, in the dual representation \eqref{pricing:hedging}, we emphasize that our framework permits a broader range of choices for the sets of measures on the right-hand side, since the vector $\varphi$ can be selected with considerable freedom, thereby enabling the use of a rather wide range of  subsets of collective martingale measures.

In Section \ref{sec3} we analyze when a segmented market with $N$ agent is collectively complete and prove the following collective version of the second FTAP.
\noindent Recall that a contingent claim $f^i \in L^{0}(\Omega, \mathcal{F}_T^i, P)$ is replicable for agent $i$ if $f^i \in \R+  K_i $. We shall see that the following notion of $\mcY$-collectively replicable vector of contingent claims turns out to be crucial.

\begin{definition}
 The vector of contingent claims $ f=( f^1,\dots,  f^N) \in L^{0}(\Omega, \mathbf{F}_T, P)$    is $\mcY$-collectively replicable if $ f\in \R^N+ {\sf X}_{i=1} ^{N} K_i+\mathcal Y  $.
\end{definition}

\begin{theorem}[CFTAP II]\label{completeTH} 
Under Assumption \ref{ass2}
the market is collectively complete if and only if  $\Me$  is a singleton, namely $$\text{every } f\in L^{0}(\Omega, \mathbf{F}_T,P) \text{ is }  \mcY\text{-collectively replicable}  \Longleftrightarrow \Me \text{ is a singleton}.$$
\end{theorem}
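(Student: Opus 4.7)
The proof proceeds by proving the two implications separately, with the sufficiency direction resting on a double application of Theorem \ref{duality:R}.

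\emph{Necessity (completeness $\Rightarrow$ singleton).} Assume every $f \in L^{0}(\Omega,\mathbf{F}_T,P)$ is $\mcY$-collectively replicable and take two candidates $Q,Q' \in \Me$. Fix $i$ and an arbitrary bounded $f^i\in L^{\infty}(\Omega,\mcF^i_T,P)$, and consider the vector $f=(0,\ldots,0,f^i,0,\ldots,0)$ with $f^i$ in slot $i$. By collective replicability there exist $m\in\R^N$, $k\in{\sf X}_{j=1}^N K_j$ and $Y\in\mcY$ with $m+k+Y=f$. Since $\mcY\subseteq L^1(Q)\cap L^1(Q')$ and the components of $f$ are bounded, the identity $k^j=f^j-m_j-Y^j$ makes each $k^j$ integrable under both $Q^j$ and $(Q')^j$; the F\"{o}llmer--Schied type argument already used in the proof of Theorem \ref{Th1} (via Theorem 5.14 in \cite{FollmerSchied2}) then yields $E_{Q^j}[k^j]=E_{(Q')^j}[k^j]=0$ for every $j$. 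Taking expectations in the replication and summing over $j$ gives $E_{Q^i}[f^i]=\sum_j m_j+\sum_j E_{Q^j}[Y^j]$; because $\mcY$ is a vector space, applying the defining inequality of $\Me$ to both $Y$ and $-Y$ forces $\sum_j E_{Q^j}[Y^j]=0$, and similarly for $Q'$. Hence $E_{Q^i}[f^i]=\sum_j m_j=E_{(Q')^i}[f^i]$, which, as $f^i$ and $i$ were arbitrary, yields $Q^i=(Q')^i$ on $\mcF_T^i$ for every $i$, i.e.\ $Q=Q'$ in $\Me$.

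\emph{Sufficiency (singleton $\Rightarrow$ completeness), setup via duality.} Now assume $\Me=\{Q\}$ and fix $f\in L^{0}(\Omega,\mathbf{F}_T,P)$. Set $\varphi^i:=\max\{|f^i|,|Y^i_1|,\ldots,|Y^i_R|\}$, with $Y_1,\ldots,Y_R$ as in Assumption \ref{ass2}. Theorem \ref{duality:R} gives $\mathcal{M}^{\varphi}_e(\mcY)\neq\emptyset$; since this set is contained in the singleton $\Me$, it equals $\{Q\}$, and in particular every $|f^i|$ is $Q^i$-integrable. Applying Theorem \ref{duality:R} once to $f$ and once to $-f$ with the \emph{same} admissible envelope $\varphi$ (which bounds $|\pm f|$ alike), and using the vector-space identity \eqref{rem:rhopm}, I obtain
\begin{equation*}
    \rho^{\mcY}_+(f) \;=\; \sum_{i=1}^N E_{Q^i}[f^i] \;=\; -\rho^{\mcY}_+(-f) \;=\; \rho^{\mcY}_-(f),
\end{equation*}
all finite and attained. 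Select attainers $(m_+,k_+,Y_+)$ with $m_++k_++Y_+\geq f$ and $\sum_i m_+^i=\rho^{\mcY}_+(f)$, and $(m_-,k_-,Y_-)$ with $m_-+k_--Y_-\leq f$ and $\sum_i m_-^i=\rho^{\mcY}_-(f)$, the latter obtained from the attainment of $\rho^{\mcY}_+(-f)$ by sign reversal.

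\emph{Closing argument.} Subtracting the two sandwich inequalities yields
\begin{equation*}
    (k_+-k_-)+\bigl((m_+-m_-)+Y_++Y_-\bigr)\;\geq\;0.
\end{equation*}
Each $K_j$ is a vector subspace of $L^0(\Omega,\mcF_T^j,P)$ (stochastic integration is linear in the strategy), hence $k_+-k_-\in{\sf X}_{j=1}^N K_j$. Since $\mcY$ is a vector space, $Y_++Y_-\in\mcY$, and $m_+-m_-\in\R^N_0\subseteq\mcY$ because $\sum_i(m_+^i-m_-^i)=\rho^{\mcY}_+(f)-\rho^{\mcY}_-(f)=0$. The displayed non-negative element therefore lies in ${\sf X}_{j=1}^N K_j+\mcY$, and $\mathbf{NCA}(\mcY)$ from Assumption \ref{ass2} forces it to be zero. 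Consequently $m_++k_++Y_+=m_-+k_--Y_-$, and the sandwich $m_-+k_--Y_-\leq f\leq m_++k_++Y_+$ collapses to $f=m_++k_++Y_+\in\R^N+{\sf X}_{j=1}^N K_j+\mcY$, which is exactly $\mcY$-collective replicability. The main obstacle in this plan is coordinating super- and sub-hedging simultaneously: producing matching attainers with equal aggregate prices requires applying Theorem \ref{duality:R} to both $\pm f$ with a common envelope $\varphi$, after which the collapse step is a clean consequence of $\mathbf{NCA}(\mcY)$ combined with the vector-space structures of $\mcY$ and of each $K_j$.
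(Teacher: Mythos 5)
Your proof is correct, but it is organized differently from the paper's. The paper derives Theorem \ref{completeTH} from a chain of five equivalences (Proposition \ref{propclosed}): it reduces completeness to replicability of the indicator claims $1_Ae_j$, passes through Proposition \ref{prepcftap2} (``$f$ is $\mcY$-collectively replicable iff $\rho^{\mcY}_-(f)=\rho^{\mcY}_+(f)$'') and uses the duality \eqref{pricing:hedging} to squeeze $E_{Q^j}[1_A]$ between the two prices, obtaining uniqueness of the measure. Your necessity argument bypasses duality and Proposition \ref{prepcftap2} entirely: you plug a bounded single-component claim into the replication identity, kill the $K_j$-terms by the F\"{o}llmer--Schied argument and the $\mcY$-terms by the vector-space polarity in the definition of $\Me$, and read off $E_{Q^i}[f^i]=\sum_j m_j$ for every element of $\Me$; this is more elementary and direct than the paper's route, at the price of not recording the intermediate equivalences (replicability of bounded claims, of indicators, and $\rho_\pm$ coinciding) that Proposition \ref{propclosed} provides. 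Your sufficiency argument is essentially the paper's $4\Rightarrow 5\Rightarrow 1$: applying Theorem \ref{duality:R} to $\pm f$ with the common envelope $\varphi$ gives $\rho^{\mcY}_-(f)=\rho^{\mcY}_+(f)$, and your sandwich-collapse via $\mathbf{NCA}(\mcY)$ (using that each $K_j$ and $\mcY$ are vector spaces and $m_+-m_-\in\R^N_0\subseteq\mcY$) is an inline re-derivation of the ``$\rho_-=\rho_+$ plus attainment implies replicable'' half of Proposition \ref{prepcftap2}, which in the paper is phrased with the scalar functionals $\pi^{\mcY}_\pm$ and a common level $x$. One small point to make explicit: ``singleton'' also requires $\Me\neq\emptyset$; your necessity direction only proves uniqueness, so you should note that nonemptiness is guaranteed under Assumption \ref{ass2} by Theorem \ref{FTAP:R} (or by the nonemptiness of $\mathcal{M}^{\varphi}_e(\mcY)$ in Theorem \ref{duality:R}), exactly as the paper remarks parenthetically in the proof of Proposition \ref{propclosed}.
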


\subsection{Auxiliary results}
We first observe the following algebraic property of superhedging functional: let $\mcY\subseteq L^{0}(\Omega, \mathbf{F}_T, P)$ be a vector space containing $\R^N_0$ and $f\in L^{0}(\Omega, \mathbf{F}_T,P)$ then 
\begin{equation}\label{rhoKY}
     \rho^{\mcY}_+(f)  =\inf\left\{\sum_{i=1}^N m_i \mid m\in\R^N \text{ and }   f-m \in {K}^{\mcY} \right\},
\end{equation}
where  $K^{\mcY}$ is defined in \eqref{KyandCY}.

\begin{lemma}\label{lemmarho}
   Suppose that $\mcY\subseteq L^{0}(\Omega, \mathbf{F}_T, P)$ is a vector space containing $\R^N_0$ and $\mathbf{NCA(\mcY)}$ holds true. Then
   
\begin{enumerate}
\item $\mcR (0)=0$.
\item $\mcR$ is cash additive: $\mcR(f+c)=\mcR(f) +\sum_{i=1} ^N c^i$, for all  $c \in \mathbb R^N$ and $f \in L^{0}(\Omega, \mathbf{F}_T, P)$.
\item $\mcR$ is monotone increasing with respect to the partial order of $L^{0 }(\Omega, \mathbf{F}_T,P) $ and positively homogeneous.
\item $\mcR(f)<+\infty$ for any bounded from above $f \in  L^{0 }(\Omega, \mathbf{F}_T,P)  $
\item Suppose  additionally that ${K}^{\mcY}$
is closed in $L^{0}(\Omega, \mathbf{F}_T, P)$ and $f \in L^{0}(\Omega, \mathbf{F}_T, P)$. Then $\rho^{\mcY}_+(f)>-\infty$ and $\rho^{\mcY}_-(f)<+\infty$. If $\rho^{\mcY}_+(f)<+\infty$ (resp. $\rho^{\mcY}_-(f)>-\infty$) then $\rho^{\mcY}_+(f)$ (resp. $\rho^{\mcY}_-(f)$) is attained. More specifically, $$ f-\frac{\mcR (f)}{N}\mathbf 1 \in K^{\mcY}.$$ 
\item  $\rho^{\mcY}_-(f)\leq \rho^{\mcY}_+(f)$ for every $f\in L^{0}(\Omega, \mathbf{F}_T,P)$.

\end{enumerate}

\end{lemma}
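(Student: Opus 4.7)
The plan is to work throughout with the equivalent reformulation \eqref{rhoKY} of $\mcR$, which reduces everything to geometric statements about the convex cone $K^{\mcY}$. The single most useful preliminary observation is that if $f-m\in K^{\mcY}$ and we set $\bar m:=\frac{1}{N}\sum_{i=1}^N m_i$, then $m-\bar m\mathbf 1\in\R^N_0\subseteq\mcY$, so $f-\bar m\mathbf 1=(f-m)+(m-\bar m\mathbf 1)\in K^{\mcY}$; hence
\begin{equation*}
\mcR(f) \;=\; N\cdot\inf\bigl\{\bar m\in\R \mid f-\bar m\mathbf 1\in K^{\mcY}\bigr\}.
\end{equation*}

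Items 1--4 then follow by routine algebraic manipulations. For item 1, $0\in K^{\mcY}$ gives $\mcR(0)\le 0$; conversely, if some $\bar m<0$ satisfied $-\bar m\mathbf 1\in K^{\mcY}$, then $-\bar m\mathbf 1\in({\sf X}_{i=1}^N K_i+\mcY)\cap L^{0}_+(\Omega,\mathbf F_T,P)$ would be strictly positive, contradicting $\mathbf{NCA}(\mcY)$. Cash-additivity (item 2) and positive homogeneity (item 3) follow by a change of variables in the infimum, using that each $K_i$ is a vector space and $\mcY$ a cone; monotonicity is immediate from the inequality in \eqref{super:rho}. For item 4, if $f\le M\mathbf 1$ then $f-M\mathbf 1\in -L^{0}_+(\Omega,\mathbf F_T,P)\subseteq K^{\mcY}$, so $\mcR(f)\le NM<+\infty$.

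Item 5 is the only step that uses closedness and is the main technical obstacle. Suppose for contradiction that $\mcR(f)=-\infty$; pick $\bar m_n\to-\infty$ with $f-\bar m_n\mathbf 1\in K^{\mcY}$. Since $K^{\mcY}$ is a cone, $\tfrac{1}{|\bar m_n|}(f-\bar m_n\mathbf 1)=\tfrac{f}{|\bar m_n|}+\mathbf 1\in K^{\mcY}$, and this converges $P$-a.s.\ to $\mathbf 1$. Closedness of $K^{\mcY}$ in $L^{0}(\Omega,\mathbf F_T,P)$ then yields $\mathbf 1\in K^{\mcY}\cap L^{0}_+(\Omega,\mathbf F_T,P)$, contradicting $\mathbf{NCA}(\mcY)$. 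When $\mcR(f)\in\R$, choose $\bar m_n\downarrow\mcR(f)/N$ with $f-\bar m_n\mathbf 1\in K^{\mcY}$; deterministic convergence combined with closedness gives $f-\tfrac{\mcR(f)}{N}\mathbf 1\in K^{\mcY}$, which proves both the displayed formula and the attainment. The statements on $\rho^{\mcY}_{-}$ follow by applying these results to $-f$ via \eqref{rem:rhopm}.

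For item 6, pick $m_1+k_1+Y_1\ge f$ and $m_2+k_2-Y_2\le f$ from the two hedging problems (if either set is empty the inequality is trivial by the usual conventions $\inf\emptyset=+\infty$, $\sup\emptyset=-\infty$). Subtracting gives $(m_1-m_2)+(k_1-k_2)+(Y_1+Y_2)\ge 0$, with $k_1-k_2\in{\sf X}_{i=1}^N K_i$ and $Y_1+Y_2\in\mcY$ by the vector-space structure. Applying the $\bar m$-decomposition of item 1 to $m_1-m_2$, any strict inequality $\sum_i(m_{1,i}-m_{2,i})<0$ would produce a strictly positive element of ${\sf X}_{i=1}^N K_i+\mcY$, contradicting $\mathbf{NCA}(\mcY)$; hence $\sum_i m_{2,i}\le\sum_i m_{1,i}$, and taking sup/inf yields $\rho^{\mcY}_{-}(f)\le\mcR(f)$. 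The truly substantive step sits in item 5, where closedness in $L^{0}$ and the cone structure of $K^{\mcY}$ must cooperate; the remaining items are purely algebraic variations of the NCA-based positivity argument first used in item 1.
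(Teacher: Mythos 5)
Your proof is correct and follows essentially the same route as the paper's: reduce to the scalar superhedging price via $\R^N_0\subseteq\mcY$ (the paper invokes $\rho^{\mcY}_+=N\cdot\pi^{\mcY}_+$ from \cite{BDFFM25}), use the $\mathbf{NCA}(\mcY)$-positivity argument for the algebraic items and for Item 6, and combine closedness of $K^{\mcY}$ with $P$-a.s. limits for finiteness and attainment in Item 5. The only cosmetic difference is in showing $\rho^{\mcY}_+(f)>-\infty$: you rescale $f-\bar m_n\mathbf 1$ by $1/\abs{\bar m_n}$ using the cone property, whereas the paper truncates via $(f+\mathbf{n})\wedge \mathbf{1}$; both yield $\mathbf 1\in K^{\mcY}\cap L^{0}_+(\Omega,\mathbf F_T,P)$, contradicting $\mathbf{NCA}(\mcY)$ through Proposition \ref{equivalentCo}.
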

\begin{proof}
    Items 1-3 were proved in Lemma 4.4 \cite{BDFFM25} and Item 4 is a simple consequence of Items 1, 2 and 3.
    Recall that $\mathbf{a}=(a,a,\ldots,a)\in\R^N$, for $a \in \mathbb R$.
    We come to Item 5.
    First we recall from Proposition 4.9 \cite{BDFFM25} that  $\rho^{\mcY}_+(f)=N\cdot \pi^{\mcY}_+(f)$ with
\begin{equation}\label{piY}
 \pi^{\mcY}_+(f)  =\inf\left\{a\in\R\mid f\leq \mathbf{a}+k+Y \;\text{for }  k\in {\sf X}_{i=1} ^{N}  K_i,\, Y\in \mcY \right\}.
 \end{equation}
 Assume that $\rho^{\mcY}_+(f)=-\infty$ so that $\pi^{\mcY}_+(f)=-\infty$. Then for every $n\in \mathbb{N}$ there exists $k_n,Y_n$ such that $f^i\leq -n+k^i_n+Y^i_n$ for every $i=1,\ldots,N$. This implies $(f^i+n)\wedge 1\leq f^i+n\leq k^i_n+Y^i_n$ for every $i=1,\ldots,N$. We have $(f+\mathbf{n})\wedge \mathbf{1}\in {K}^{\mcY}$ for every $n\in\mathbb{N}$. Moreover $(f+\mathbf{n})\wedge \mathbf{1}\to \mathbf{1}$ $P$-a.s. and therefore $\mathbf{1}\in K^{\mcY}$, which contradicts $\mathbf{NCA(\mcY)}$.
\\ Now we proceed proving the attainment when $\rho^{\mcY}_+(f)$ is finite. By definition, there exists
    a sequence $(m_n)_n$ in  $\mathbb R^N$ such that $\widehat m_n=:\sum_{i=1}^N m^i_n \downarrow \mcR (f)$ and $f-m_n \in K^\mathcal Y $. Subsequently, writing $f-m_n=f-\frac{\widehat m_n}{N}\mathbf 1 +\frac{\widehat m_n}{N} \mathbf 1-m_n $, we see that also $f-\frac{\widehat m_n}{N}\mathbf 1 \in  K^\mathcal Y$. Indeed, $(m_n -\frac{\widehat m_n}{N}\mathbf 1) \in \R^N_0 \subseteq K^\mathcal Y$. 
    The thesis follows from the closure of $K^\mathcal Y$ and $f-\frac{\widehat m_n}{N}\mathbf 1 \rightarrow f-\frac{\mcR (f)}{N}\mathbf 1. $

 We now come to Item 6. We observe that if $\rho^{\mcY}_+(f)=+\infty$ or if $\rho^{\mcY}_-(f)=-\infty$ there is nothing to prove. We can then suppose that $\rho^{\mcY}_+(f)<+\infty$ and $\rho^{\mcY}_-(f)>-\infty$ as well, i.e. the infimum in defining $\rho^{\mcY}_+(f)$ and the supremum in defining $\rho^{\mcY}_-(f)$ are both taken over nonempty sets.
 One can introduce, similarly to $\pi^\mcY_+(f)$, $$\pi^{\mcY}_-(f)  =\sup\left\{a\in\R\mid f\geq \mathbf{a}+k-Y \;\text{for }  k\in {\sf X}_{i=1} ^{N}  K_i,\, Y\in \mcY \right\}.$$ Similarly to \eqref{rem:rhopm} one can verify that $\pi^\mcY_-(f)=-\pi^\mcY_+(-f)$ and hence $\rho^\mcY_-(f)=N\pi^\mcY_-(f)$. We then just need to verify $\pi^\mcY_-(f)\leq \pi^\mcY_+(f)$ when the infimum in defining $\pi^{\mcY}_+(f)$ and the supremum in defining $\pi^{\mcY}_-(f)$ are both taken over nonempty sets.
 Observe that if $\mathbf{x}+k+Y\geq f$ and $\mathbf{z}+\tilde{k}-\tilde{Y}\leq f$ for $x,z\in\R$ $k,\tilde{k}\in {\sf X}_{i=1} ^{N}  K_i$ and $Y,\tilde{Y}\in \mcY$ 
 we have $(\mathbf{x}-\mathbf{z})+(k-\tilde{k})+(Y+\tilde{Y})\geq f+(-f)=0$ which implies $(k^i-\tilde{k}^i)+(Y^i+\tilde{Y}^i)\geq z-x$ for every $i$. Since $\mathbf{NCA}(\mcY)$ holds we must then have $z\leq x$. Taking suprema over $z$ and infima over $x$ the inequality $\pi^\mcY_-(f)\leq \pi^\mcY_+(f)$ follows.   
\end{proof}

In Proposition 4.12 \cite{BDFFM25} it was shown that if $f \in L^{1}(\Omega, \mathbf{F}_T, P)$,   $\mcY\subseteq L^{1}(\Omega, \mathbf{F}_T, P)$ is a convex cone containing $\R^N_0$, if $\mathbf{NCA(\mcY)}$ holds, if ${K}^{\mcY}$ is closed under $P$-a.s. convergence and if $\rho^{\mcY}_+(f)<+\infty$, then $\rho^{\mcY}_+(f)$ is attained. Item 5 in Lemma \ref{lemmarho} shows, using a simpler direct proof, that attainment holds also if we remove the integrability requirements on $\mcY$ and $f$.

\begin{lemma}\label{interval} Let $\widetilde{\mcY} \subseteq L^{0}(\Omega, \mathbf{F}_T,P)$, be a vector space containing $\R^N_0$ and $\mcY = \mathrm{span}\{\widetilde{\mcY},f\}$ with $f \in L^{0}(\Omega, \mathbf{F}_T,P)$ non $\widetilde{\mcY}-$ collectively replicable. If $\mathbf{NCA(\mcY)}$ holds true and if ${K}^{\widetilde{\mcY}}$ is closed under $P$-a.s. convergence then $\rho^{\widetilde\mcY}_-(f) < 0 <  \rho^{\widetilde\mcY}_+(f)$.
\end{lemma}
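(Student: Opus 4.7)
The plan is to prove the strict inequality $0<\rho^{\widetilde{\mcY}}_+(f)$ by contradiction via the attainment result in Item 5 of Lemma \ref{lemmarho}, and then obtain $\rho^{\widetilde{\mcY}}_-(f)<0$ by applying the same argument to $-f$ through the vector-space symmetry recorded in \eqref{rem:rhopm}.

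First I would observe that because $\widetilde{\mcY}\subseteq \mcY$, the condition $\mathbf{NCA}(\mcY)$ entails $\mathbf{NCA}(\widetilde{\mcY})$. Together with the standing hypothesis that $K^{\widetilde{\mcY}}$ is closed in $L^0(\Omega,\mathbf{F}_T,P)$, Item 5 of Lemma \ref{lemmarho} applies to $\widetilde{\mcY}$: it guarantees $\rho^{\widetilde{\mcY}}_+(f)>-\infty$, and whenever this value is finite, the infimum is attained in the sense that $f-\frac{\rho^{\widetilde{\mcY}}_+(f)}{N}\mathbf{1}\in K^{\widetilde{\mcY}}$.

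The core step is the following argument by contradiction. Suppose $\rho^{\widetilde{\mcY}}_+(f)\leq 0$; then it is finite, hence attained, and we may write
\[
f-\frac{\rho^{\widetilde{\mcY}}_+(f)}{N}\mathbf{1}=k-\ell+Y, \qquad k\in {\sf X}_{i=1}^{N} K_i,\ \ell\in L^0_+(\Omega,\mathbf{F}_T,P),\ Y\in\widetilde{\mcY}.
\]
Rearranging yields $k+(Y-f)=\ell-\frac{\rho^{\widetilde{\mcY}}_+(f)}{N}\mathbf{1}\geq 0$. Since $f\in\mcY=\mathrm{span}\{\widetilde{\mcY},f\}$ and $\widetilde{\mcY}\subseteq \mcY$, the element $Y-f$ belongs to $\mcY$, so the left-hand side lives in ${\sf X}_{i=1}^{N}K_i+\mcY$. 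If $\rho^{\widetilde{\mcY}}_+(f)<0$, the right-hand side dominates the strictly positive constant $-\frac{\rho^{\widetilde{\mcY}}_+(f)}{N}\mathbf{1}$, contradicting $\mathbf{NCA}(\mcY)$. If instead $\rho^{\widetilde{\mcY}}_+(f)=0$, then $\mathbf{NCA}(\mcY)$ forces $k+(Y-f)=0$, i.e.\ $f=k+Y\in {\sf X}_{i=1}^{N}K_i+\widetilde{\mcY}$, contradicting the assumed non-$\widetilde{\mcY}$-collective-replicability of $f$. Hence $\rho^{\widetilde{\mcY}}_+(f)>0$.

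For the other inequality, since $\widetilde{\mcY}$ is a vector space, identity \eqref{rem:rhopm} gives $\rho^{\widetilde{\mcY}}_-(f)=-\rho^{\widetilde{\mcY}}_+(-f)$. The set $\R^N+{\sf X}_{i=1}^{N} K_i+\widetilde{\mcY}$ is a vector subspace of $L^0(\Omega,\mathbf{F}_T,P)$ (each $K_i$ is linear in the strategy), so $-f$ is also non-$\widetilde{\mcY}$-collectively replicable, and moreover $\mathrm{span}\{\widetilde{\mcY},-f\}=\mcY$, so $\mathbf{NCA}(\mcY)$ still holds. The preceding argument applied to $-f$ yields $\rho^{\widetilde{\mcY}}_+(-f)>0$, whence $\rho^{\widetilde{\mcY}}_-(f)<0$.

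The delicate point is the boundary case $\rho^{\widetilde{\mcY}}_+(f)=0$: here $\mathbf{NCA}(\mcY)$ alone is not enough, and one must invoke the non-replicability hypothesis to close the argument. Without it, $f$ could be replicable at zero cost within $\widetilde{\mcY}$ and the inequality would degenerate to an equality.
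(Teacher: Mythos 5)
Your proof is correct and follows essentially the same route as the paper: argue by contradiction from $\rho^{\widetilde{\mcY}}_+(f)\leq 0$, use the attainment in Item 5 of Lemma \ref{lemmarho}, build an element of $({\sf X}_{i=1}^{N}K_i+\mcY)\cap L^0_+$ to contradict $\mathbf{NCA}(\mcY)$ (with non-replicability handling the boundary case), and deduce $\rho^{\widetilde{\mcY}}_-(f)<0$ from \eqref{rem:rhopm} applied to $-f$. The only cosmetic difference is that you split the cases $\rho^{\widetilde{\mcY}}_+(f)<0$ and $=0$, while the paper treats them together by recentering the superhedging vector to a zero-sum $x\in\R^N_0$.
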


\begin{proof}
  Assume by contradiction that  $\rho_+:=\rho^{\widetilde{\mcY}}_+(f)\leq 0$. As  $\mathbf{NCA(\mcY)} \Rightarrow \mathbf{NCA(\widetilde{\mcY})}$, by Lemma \ref{lemmarho} Item 5 we know $\rho^{\widetilde{\mcY}}_+(f)>-\infty$ and is attained. Then we can find $\rho_+^i \in \mathbb R$, $k_+\in {\sf X}_{i=1} ^{N}  K_i$, $Y_+\in \widetilde\mcY$ such that $f^i\leq \rho_+^i + k_+^i + Y_+^i$ for every $i=1,\ldots,N$, and $\sum_i \rho_+^i=\rho_+\leq 0$. This implies
\[-f^i+ \rho_+^i + Y^i_+ + k^i_+ \geq 0 \quad \forall \,i=1,\ldots,N \]
and at least one inequality being strict, otherwise $f$ would be $\widetilde{\mcY}-$ collectively replicable. Consider now $x^i= \rho_+^i - \frac{\rho_+}{N}$, then $x^i\geq \rho_+^i$ and 
\[-f^i+ x^i + Y^i_+ + k^i_+ \geq  0 \quad \forall \,i=1,\ldots,N, \]
with one inequality being strict. Moreover $\sum_i x^i=0$ and therefore $-f+x+Y_+\in \mcY = \mathrm{span}\{\widetilde{\mcY},f\}$ which guarantees that $\mathbf{NCA(\mathcal{Y}})$ is violated. To conclude, we observe that since both $ {\sf X}_{i=1} ^{N}  K_i $ and $\mcY$ are vector spaces, $f$ is $\mcY$-collectively replicable if and only if so is $-f$. In particular, the previous argument shows that $\rho^{\widetilde{\mcY}}_+(-f)>0$ and we can use \eqref{rem:rhopm} to conclude $\rho^{\widetilde{\mcY}}_-(f)=-\rho^{\widetilde{\mcY}}_+(-f)<0$ as desired.
\end{proof}

\begin{lemma} \label{closure+Y} Let $\widetilde{\mcY} \subseteq L^{0}(\Omega, \mathbf{F}_T,P)$, be a vector space with $\R^N_0\subseteq \widetilde{\mcY}$ and $\mcY = \mathrm{span}(\widetilde{\mcY},Y)$ for $Y\in L^{0}(\Omega, \mathbf{F}_T, P)$. If $\mathbf{NCA(\mcY)}$ holds true and ${K}^{\widetilde{\mcY}}$ is closed under $P$-a.s. convergence then ${K}^{\mcY}$  is closed under $P$-a.s. convergence.
\end{lemma}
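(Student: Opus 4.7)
The plan is to reduce closedness of $K^{\mcY}$ to that of $K^{\widetilde{\mcY}}$ by exploiting the identity $K^{\mcY}=K^{\widetilde{\mcY}}+\R Y$, which follows directly from $\mcY=\widetilde{\mcY}+\R Y$ and the definition of $K^{\mcY}$ in \eqref{KyandCY}. I will split the analysis into two mutually exclusive cases according to whether $Y$ is $\widetilde{\mcY}$-collectively replicable. Throughout I use that ${\sf X}_{i=1}^{N} K_i$ and $\widetilde{\mcY}$ are vector spaces, that $K^{\widetilde{\mcY}}$ and $K^{\mcY}$ are convex cones (hence stable under positive rescaling and finite sums), and the equivalence $\mathbf{NCA}(\mcY)\Leftrightarrow K^{\mcY}\cap L^{0}_+(\Omega,\mathbf{F}_T,P)=\{0\}$ from Proposition \ref{equivalentCo}.

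In the case where $Y$ is $\widetilde{\mcY}$-collectively replicable, I would show that $K^{\mcY}$ actually coincides with $K^{\widetilde{\mcY}}$, so closedness is inherited from the hypothesis. Writing $Y=c+k_0+\tilde Y_0$ with $c\in\R^N$, $k_0\in {\sf X}_{i=1}^{N} K_i$, $\tilde Y_0\in\widetilde{\mcY}$, and splitting $c=c_0+\bar c\,\mathbf{1}$ with $c_0\in\R^N_0\subseteq\widetilde{\mcY}$ and $\bar c=\tfrac{1}{N}\sum_i c^i$, the central observation will be that $\mathbf{NCA}(\mcY)$ forces $\bar c=0$. Indeed, since $\mcY$ and ${\sf X}_{i=1}^{N} K_i$ are vector spaces, $\bar c\,\mathbf{1}=Y-c_0-\tilde Y_0-k_0\in \mcY+{\sf X}_{i=1}^{N} K_i\subseteq K^{\mcY}$ and, applying the same reasoning to $-Y\in\mcY$, also $-\bar c\,\mathbf{1}\in K^{\mcY}$; if $\bar c\neq 0$, rescaling the one with positive coefficient by $1/|\bar c|>0$ and using the convex cone property of $K^{\mcY}$ produces $\mathbf{1}\in K^{\mcY}\cap L^{0}_+\setminus\{0\}$, contradicting $\mathbf{NCA}(\mcY)$. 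Hence $c\in\widetilde{\mcY}$, so $Y\in {\sf X}_{i=1}^{N} K_i+\widetilde{\mcY}$; then $\R Y$ lies in this vector space and a routine set manipulation collapses $K^{\mcY}$ to $K^{\widetilde{\mcY}}$.

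In the case where $Y$ is not $\widetilde{\mcY}$-collectively replicable, I would take $(f_n)\subset K^{\mcY}$ with $f_n\to f$ $P$-a.s., decompose $f_n=g_n+\lambda_n Y$ with $g_n\in K^{\widetilde{\mcY}}$ and $\lambda_n\in\R$, and first dispose of the bounded case: along a subsequence $\lambda_n\to\lambda\in\R$, so $g_n\to f-\lambda Y$ $P$-a.s., closedness of $K^{\widetilde{\mcY}}$ gives $f-\lambda Y\in K^{\widetilde{\mcY}}$, hence $f\in K^{\mcY}$. The crux is ruling out unboundedness: if along a subsequence $|\lambda_n|\to\infty$ with $\lambda_n/|\lambda_n|\to\eta\in\{-1,+1\}$, dividing by $|\lambda_n|$ yields $g_n/|\lambda_n|\to -\eta Y$ $P$-a.s., and since $K^{\widetilde{\mcY}}$ is a closed convex cone one obtains $-\eta Y\in K^{\widetilde{\mcY}}$, hence $\rho^{\widetilde{\mcY}}_+(-\eta Y)\le 0$ (by choosing $m=0$ in the definition of the superhedging price). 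Applying Lemma \ref{interval} with $f:=Y$ (non-replicable, $\mathbf{NCA}(\mcY)$ holding, $K^{\widetilde{\mcY}}$ closed) yields the strict two-sided bound $\rho^{\widetilde{\mcY}}_-(Y)<0<\rho^{\widetilde{\mcY}}_+(Y)$; combining this with \eqref{rem:rhopm} (valid since $\widetilde{\mcY}$ is a vector space) to rewrite $\rho^{\widetilde{\mcY}}_+(-Y)=-\rho^{\widetilde{\mcY}}_-(Y)$, each choice $\eta=\pm 1$ produces a contradiction, so the unbounded subcase cannot occur.

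I expect the main obstacle to be the replicable case: recognizing that $\mathbf{NCA}(\mcY)$ alone is strong enough to absorb the entire extra direction $Y$ into ${\sf X}_{i=1}^{N} K_i+\widetilde{\mcY}$ is not immediate and relies on isolating the ``mean'' component $\bar c\,\mathbf{1}$ from the part living in $\R^N_0\subseteq\widetilde{\mcY}$. The rescaling step in the non-replicable case is of the Bouchard--Nutz type alluded to before Theorem \ref{THclosure}; the nontrivial ingredient there is Lemma \ref{interval}, which turns the one-sided membership $-\eta Y\in K^{\widetilde{\mcY}}$ into the desired contradiction with non-replicability.
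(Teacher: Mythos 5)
Your proof is correct, and its skeleton matches the paper's: split on whether $Y$ can be absorbed into the old cone, collapse $K^{\mcY}=K^{\widetilde{\mcY}}$ in the degenerate case, and otherwise track the scalar coefficient of $Y$ along the approximating sequence, using closedness of $K^{\widetilde{\mcY}}$ when that coefficient has a bounded subsequence. You diverge in two surrounding choices. First, your case split is ``$Y$ is $\widetilde{\mcY}$-collectively replicable or not'' rather than the paper's ``$Y\in{\sf X}_{i=1}^N K_i+\widetilde{\mcY}$ or not''; this obliges you to add the (correct) $\mathbf{NCA}(\mcY)$ argument killing the mean component $\bar c\,\mathbf 1$, a step the paper does not need since its degenerate case is purely algebraic, but it buys you exactly the non-replicability hypothesis you need later. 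Second, in the unbounded subcase you turn $-\eta Y\in K^{\widetilde{\mcY}}$ into $\rho^{\widetilde{\mcY}}_+(\mp Y)\le 0$ and contradict the strict bounds of Lemma \ref{interval} via \eqref{rem:rhopm}; the paper instead argues inline: it writes $\mp Y=k-l+\widetilde Y$, applies $\mathbf{NCA}(\mcY)$ to the nonnegative element $k\pm Y+\widetilde Y$ to force it to vanish, and in the $\beta_n\to+\infty$ direction even concludes constructively that $f_n\in K^{\widetilde{\mcY}}$, hence $f\in K^{\mcY}$, rather than deriving a contradiction, while the $\beta_n\to-\infty$ direction contradicts its case assumption. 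Your delegation to Lemma \ref{interval} is legitimate (that lemma is proved independently of this one, and its hypotheses --- $\mathbf{NCA}(\mcY)$, closedness of $K^{\widetilde{\mcY}}$, non-replicability of $Y$ --- are all in force in your second case) and gives a slightly shorter, more modular argument; the paper's version is self-contained and never invokes the superhedging functionals.
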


\begin{proof}
 If $Y=\widetilde{k}+\widetilde{Y}$ for  $\widetilde{k}\in {\sf X}_{i=1} ^{N} K_i$ and $\widetilde{Y}\in \widetilde{\mcY}$ then we prove ${K}^{\mcY}={K}^{\widetilde{\mcY}}$, which is closed by assumption. Indeed ${K}^{\mcY}\supseteq {K}^{\widetilde{\mcY}}$. If $f\in {K}^{\mcY}$ then there exists $k\in {\sf X}_{i=1} ^{N}  K_i$ and $\widehat{Y}\in \mcY$ such that $f\leq k+\widehat{Y}$. We can write $\widehat{Y}=W+\alpha Y$ with $W\in\widetilde{\mcY}$ and $\alpha\in\R$. Therefore $f\leq k+W+\alpha Y=k+W+\alpha (\widetilde{k}+\widetilde{Y})=(k+\alpha\widetilde{k})+(W+\alpha \widetilde{Y})$, hence $f\in {K}^{\widetilde{\mcY}}$.
\\ Now we suppose that $Y\notin {\sf X}_{i=1} ^{N} K_i+\widetilde{\mcY}$ and we prove that ${K}^{\mcY}$ is closed. Let $f_n \in {K}^{\mcY}$ converging to $f$ $P$-a.s.. Then 
\begin{equation}\label{fkx}
f_n\leq  k_n+Y_n+\beta_n Y,    
\end{equation}
for $k_n \in {\sf X}_{i=1} ^{N} K_i$ and $Y_n\in \widetilde{Y}$ and $\beta_n \in \R$. 

If there exists a subsequence which we indicate again with $(\beta_n)$ which converges to a finite $\beta$ then $f_n-\beta_n Y$ converges $P$-a.s. to $f-\beta Y$. Clearly $f_n-\beta_n Y\leq  k_n+Y_n$ i.e. $f_n-\beta_n Y\in {K}^{\widetilde{\mcY}}$ so that $f-\beta Y\in {K}^{\widetilde{\mcY}}$ i.e. $f-\beta Y\leq  k+\widetilde{Y}$ and therefore $f\in {K}^{\mcY}$.

Suppose now there exist no subsequence converging to a finite $\beta$, but that we can find a subsequence $(\beta_n)$ such that $\lim \beta_n =+\infty$. In this case we get $\frac{f_n}{\beta_n} \rightarrow 0$. From \eqref{fkx} we get 
\[\frac{f_n}{\beta_n}- Y\leq  \frac{1}{\beta_n}(k_n+Y_n)\]
Thus  $(\frac{f_n}{\beta_n}- Y) \in {K}^{\widetilde{\mcY}}$ and $\frac{f_n}{\beta_n}- Y  \rightarrow -Y$, therefore $-Y\in {K}^{\widetilde{\mcY}}$. In particular $-Y=k-l+\widetilde{Y}$, for some $\widetilde{Y} \in \widetilde\mcY$, $k \in {\sf X}_{i=1} ^{N}  K_i $, $l \in L^{0 }_+(\Omega,  \mathbf{F}_T,P)$ and  we get  $k+Y+\widetilde{Y}=l\geq 0$,  with $(Y+\widetilde{Y}) \in \mcY$. Then $\mathbf{NCA(\mcY)}$ implies $k+Y+\widetilde{Y}=0$. As $f_n\leq  k_n+Y_n+\beta_n Y$ we deduce $f_n\leq  k_n+Y_n+\beta_n (-k-\widetilde{Y})$ i.e. $f_n\in {K}^{\widetilde{\mcY}}$ and by closure $f\in {K}^{\widetilde{\mcY}}\subseteq {K}^{\mcY}$. Alternatively,  we can find  a subsequence $(\beta_n)$ such that $\lim \beta_n =-\infty$. In this case we get again $\frac{f_n}{\beta_n} \rightarrow 0$, and from \eqref{fkx} we get 
\[\frac{f_n}{-\beta_n}+ Y\leq  \frac{1}{-\beta_n}(k_n+Y_n)\]
This means $\frac{f_n}{-\beta_n}+ Y\in \mathcal{K}^{\widetilde{\mcY}}$ and therefore $Y\in \mathcal{K}^{\widetilde{\mcY}}$. In particular $Y=k-l+\widetilde{Y}$, for some $\widetilde{Y} \in \widetilde\mcY$, $k \in {\sf X}_{i=1} ^{N}  K_i $, $l \in L^{0 }_+(\Omega,  \mathbf{F}_T,P)$ and  we get $k-Y+\widetilde{Y}=l\geq 0$,  with $(-Y+\widetilde{Y}) \in \mcY$. Then $\mathbf{NCA(\mcY)}$ implies $k-Y+\widetilde{Y}=0$, which contradicts $Y\notin {\sf X}_{i=1} ^{N} K_i+\widetilde{\mcY}$. 
\end{proof}

\subsection{Proofs of Theorems \ref{FTAP:R}, \ref{duality:R} and \ref{THclosure} }\label{proofs:R}

For any finite integer $r$ we set $$\mcY^r:=\mathrm{span}(Y_m, \, m=0,\dots,r ), $$ for $Y_0:=\mathbb R^N_0 $  and for a finite number of vectors $Y_m \in L^{0}(\Omega, \mathbf{F}_T, P), $ $m=1,\dots,r $, and we define 
 \begin{equation*}\label{CY}
        {K}^r =  \Big({\sf X}_{i=1} ^{N} \big( K_i - L^{0 }_+(\Omega, \mathcal{F}^i_T, P) \big)\Big)  +\mcY^r.  
    \end{equation*}
We notice that any finite dimensional vector space $\mcY \subseteq L^{0}(\Omega, \mathbf{F}_T, P)$ containing $\mathbb R^N_0 $ can be written as $\mcY=\mcY^R$, for some fixed integer $R$.

We prove at the same time Theorems \ref{FTAP:R}, \ref{duality:R} and \ref{THclosure} by a recursive argument on the number of generating exchanges among agents. 
Thus we fix $\mcY=\mcY^R$, for some fixed integer $R$, and we proceed recursively on $r=0,1,\ldots, R$  to prove the following

\medskip 

\textbf{Claim:}
\textit{If $\mathbf{NCA}(\mathcal{Y}^R)$ holds true then for every $r=0,1,\ldots, R$
\begin{enumerate}
\item ${K}^r$ is closed under $P$-a.s. convergence;
\item $\mathcal{M}_e(\mcY^r) \not= \emptyset$  ;
\item For any $f\in L^{0}(\Omega, \mathbf{F}_T,P)$, $\mathcal{M}^{\varphi}_e(\mcY^r)\not=\emptyset$ and
$$\rho^{\mcY^r}_+(f)=\sup\left \{\sum_{i=1}^N E_{Q^i}[f^i]\mid Q\in\mathcal{M}^{\varphi}_e(\mcY^r) \right \},$$  
for any $\varphi\in L^{0}(\Omega, \mathbf{F}_T,P)$ such that  
$\max\{|f^i|,|Y^i_1|,\ldots, |Y^i_R|\}\leq \varphi^i$.    
\end{enumerate}}

Notice that we already proved in Lemma \ref{lemmarho} Item 5 that when ${K}^r$ is closed under $P$-a.s. convergence then $\rho^{\mcY^r}_+(f)$ defined in \eqref{super:rho} is, when finite, attained.
Moreover, we already know from Remark \ref{remEQ} that $\Me \not = \emptyset \Rightarrow \mathbf{NCA(\mathcal{Y})}$.
Thus Theorem \ref{FTAP:R}, Theorem \ref{duality:R} and Theorem \ref{THclosure} immediately follow from the claim.
Observe also that the assumption $\mathbf{NCA}(\mathcal{Y}^R)$ implies $\mathbf{NCA}(\mathcal{Y}^r)$ for any $r=0,\dots,R$.  \\
\paragraph{Step $r=0$.}
As we are fixing $r=0$, we have $\mcY^r=\R^N_0$. We first demonstrate that 
 \[{K}^r={\sf X}_{i=1} ^{N} ( K_i - L^{0 }_+(\Omega, \mathcal{F}^i_T, P)) +\R^N_0\]
is closed under $P$-a.s. convergence. Let $\{f_n\}\subset K^r$ such that $f_n\to f$ $P$-a.s.. Then
\[f_n\leq x_n+k_n \text{ with } \sum_{i=1}^N x^i_n =0,\]
and $k_n \in {\sf X}_{i=1} ^{N} K_i$, Observe that $\inf_n x^i_n>-\infty$ for any $i=1,\ldots,N$. Otherwise, consider for some $i$ a subsequence, again denoted by ($x^i_n)$, $x^i_n\to -\infty$. For $l_n=x^i_n + k^i_n- f_n^i\geq 0$ and $h_n=\frac{1}{1+|x^i_n|}$ we have 
\[h^n k^i_n-h_n l_n=h_n f^i_n - h_nx^i_n \to_n 1, \]
so that $1\in ( K_i - L^{0 }_+(\Omega, \mathcal{F}^i_T, P)$, which contradicts $\mathbf{NA}_i$. Since $\sum_{i=1}^N x^i_n=0$ and $\inf_n x^i_n>-\infty$ for any $i=1,\ldots,N$, we also have that $\limsup_n x^i_n<+\infty$ for any $i=1,\ldots,N$. This means that we can find a subsequence which we denote again with $(x_n)$ such that $\lim_n x_n=x\in\R^N_0$. Then $f_n-x_n\in {\sf X}_{i=1} ^{N} ( K_i - L^{0 }_+(\Omega, \mathcal{F}^i_T,P) )$ and $f_n-x_n$ converges $P$-a.s. to $f-x$ which therefore belongs to ${\sf X}_{i=1} ^{N} ( K_i - L^{0 }_+(\Omega, \mathcal{F}^i_T,P) )$, i.e. $f\in K^r$.  

Fix any $f \in L^{0}(\Omega, \mathbf{F}_T,P)$ and let  $\varphi \in L^{0}(\Omega, \mathbf{F}_T,P)$ such that
$|f|\leq \varphi$.
We first find a change of measure $\widehat P \sim P$ so that $\varphi \in L^{1}(\Omega, \mathbf{F}_T,\widehat P)$ and all processes $X^1,\dots,X^J$ are integrable under $ \widehat P$.
Thus $f \in L^{1}(\Omega, \mathbf{F}_T,\widehat P)$. The assumption of the claim implies that  $\mathbf{NCA(\mathcal{Y}}^r)$ holds true under $ P$ and thus also under  $\widehat P$.
As $r=0$, from  $\mcY^r=\R^N_0$ and from \eqref{NAiNCA}, we have that $\mathbf{NCA(\mathcal{Y}}^r)$ under $\widehat P$ is equivalent to $\mathbf{NA}_i$ under $\widehat P$, for every  $i=1,\dots,N $,  which holds, by Theorem \ref{DMW}, if and only if 
$M_e^{i,\infty}(\widehat P)\neq \emptyset $ or $M_e^{i}\neq \emptyset $,  for every $i=1,\dots,N $. 
As $r=0$, from  $\mcY^r=\R^N_0$ and from the definition of $\mathcal{M}_e(\mcY^r)$ we have $\mathcal{M}_e(\mcY^r)={\sf X}_{i=1} ^{N}M_e^i$.
We thus conclude that  $\mathcal{M}_e(\mcY^r)\neq \emptyset$.

Take $Q \in {\sf X}_{i=1} ^{N}M_e^{i,\infty}(\widehat P)$, so that $\frac{dQ^i}{d\widehat P} \in L^{\infty}$ for each $i$.
Therefore $E_{Q^i}[\varphi^i]=E_{\widehat P}[\frac{dQ^i}{d\widehat P} \varphi^i ]<\infty$, $Q \in \mathcal{M}_e^{\varphi}(\mcY^r)$ and  $\mathcal{M}_e^{\varphi}(\mcY^r) \not= \emptyset$.

As a consequence $f \in L^{1}(\Omega, \mathbf{F}_T, \mathbf Q)$.
Theorem 5.14 \cite{FollmerSchied2} implies that $E_{Q^i}[k^i]=0$ for all $Q \in \mathcal{M}_e^{\varphi}(\mcY^r)$ and for any $k^i \in K_i $ satisfying $f^i \leq m+k^i$, for some $m \in \mathbb R$. By the classical superhedging duality \eqref{superclassic}, we therefore have 
\begin{align*}
   \rho_{i,+}(f^i)&= \inf\{m\in\R\mid  m+k^i \geq f^i, k^i \in K_i \}  =  \sup\{E_{Q^i}[f^i]\mid Q^i \in M^{i,\infty}_e(\widehat P\} 
    \\ & =  \sup\{E_{Q^i}[f^i]\mid Q^i \in M^{i,\infty}_e(\widehat P) \text{ s.t. } \varphi^i\in L^1(\Omega,\mathcal{F}^i_T,Q^i)\} \\
    &\leq  \sup\{E_{Q^i}[f^i]\mid Q^i \in M^{i}_e \text{ s.t. } \varphi^i\in L^1(\Omega,\mathcal{F}^i_T,Q^i)\} \\
    &\leq    \inf\{m\in\R\mid  m+k^i \geq f^i, k^i \in K_i \}
\end{align*} 
and we conclude, for $M^{i,\varphi^i}_e=\{Q^i \in M^{i}_e \text{ s.t. } \varphi^i\in L^1(\Omega,\mathcal{F}^i_T,Q^i) \}$,
\begin{equation*}
    \rho_{i,+}(f^i):=\inf\{m\in\R\mid  m+k^i \geq f^i, k^i \in K_i \}=\sup\{E_{Q^i}[f^i]\mid Q^i \in M^{i,\varphi^i}_e \}.
\end{equation*}
Observe that since $\mathbf{NA}_i$ holds, $\rho_{i,+}(f^i)>-\infty$ for every $i$. 
When $\mcY^r=\R^N_0$ ($r=0$) $\rho^{\mcY^r}_+(f)$ coincides with the classical super-replication of $N$ claims, denoted by $\rho_+^N(f)$ in  \cite{BDFFM25}. Thus, as explained in \cite{BDFFM25} Proposition 4.3, we have $\rho^{\mcY^r}_+(f)=\sum_{i=1}^N \rho_{i,+}(f^i)$, which is always well defined, so that
\[\rho^{\mcY^r}_+(f)=\sum_{i=1}^N\sup \left \{E_{Q^i}[f^i]\mid Q^i \in M^{i,\varphi^i}_e \right \}=\sup  \left \{\sum_{i=1}^N E_{Q^i}[f^i]\mid  Q\in\mathcal{M}^{\varphi}_e(\mcY^r) \right \}.\]


\paragraph{Iteration step: $r-1 \Rightarrow r$.} Under $\mathbf{NCA}(\mathcal{Y}^R)$ we assume that the three Items 1. 2. 3. of the claim hold  for $r-1$, and prove they hold  for $r$.

\medskip

We will apply Lemmas \ref{interval} and \ref{closure+Y} to the space $\mcY^r=\mathrm{span}(\mcY^{r-1},Y_r)$. By assumption $\mathbf{NCA}(\mathcal{Y}^R)$ holds and  ${K}^{r-1}$ is closed under $P$-a.s. convergence. As $\mathbf{NCA}(\mathcal{Y}^R)$  implies $\mathbf{NCA}(\mathcal{Y}^{r})$, from Lemma \ref{closure+Y} we deduce that ${K}^r$ is closed under $P$-a.s. convergence, which proves Item 1 of the claim.

\medskip

As proved in Lemma \ref{closure+Y}, if $Y_r\in {\sf X}_{i=1} ^{N} K_i+{\mcY}^{r-1}$ 
then we have ${K}^{r}={K}^{r-1}$. In such a case it  follows by the definitions that $\mathcal{M}^{\varphi}_e(\mcY^r)=\mathcal{M}^{\varphi}_e(\mcY^{r-1}) \not= \emptyset$, so that also $ \mathcal{M}_e(\mcY^r)\not=\emptyset $.  Then from \eqref{rhoKY} we get $\rho^{\mcY^{r-1}}_+(f)=\rho^{\mcY^r}_+(f)$ for any $f\in L^{0}(\Omega, \mathbf{F}_T,P)$ and therefore the duality follows directly from the assumption. We just proved that Items 2 and 3 hold true in this case.\\

Now we suppose that $Y_r\notin {\sf X}_{i=1} ^{N} K_i+{\mcY}^{r-1}$ and
observe that from the induction assumptions, the property $\rho^{\mcY^{r-1}}_-(Y_r)=-\rho^{\mcY^{r-1}}_+(-Y_r)$, and  Lemma \ref{interval} we have
\begin{equation}
\label{inf<sup}
\begin{split}
  a&:=\inf_{Q\in \mathcal{M}_e^{\varphi}(\mcY^{r-1})}\sum_{i=1}^N E_{Q_i}[Y^i_r]=\rho^{\mcY^{r-1}}_-(Y_r)\\
  &<0<\rho^{\mcY^{r-1}}_+(Y_r)=\sup_{Q\in \mathcal{M}_e^{\varphi}(\mcY^{r-1})}\sum_{i=1}^N E_{Q_i}[Y^i_r]:=b.   
\end{split}
\end{equation}
Thus we can find $Q\in \mathcal{M}_e^{\varphi}(\mcY^{r-1})$ such that $Y^i_r \in L^{1}(\Omega, \mathcal{F}^i_T,Q^i)$ and $\sum_{i=1}^N E_{Q_i}[Y^i_r]=0$, which proves $\mathcal{M}_e^{\varphi}(\mcY^{r})\neq \emptyset$. This  implies Item 2 of the claim: $\mathcal{M}_e(\mcY^r)\neq \emptyset$.

\medskip

Finally for any $f\in L^{0}(\Omega, \mathbf{F}_T,P)$ we show
$\rho^{\mcY^r}_+(f)=\sup\{\sum_{i=1}^N E_{Q^i}[f^i]\mid Q\in\mathcal{M}^{\varphi}_e(\mcY^r)\}$,  
for any $\varphi\in L^{0}(\Omega, \mathbf{F}_T,P)$ such that  
$\max\{|f^i|,|Y_1^i|,\ldots, |Y_R^i|\}\leq \varphi^i$.
We have already shown that $\mathcal{M}_e^{\varphi}(\mcY^{r})\neq \emptyset$ and clearly 
\begin{equation}
\label{ineq1rhoy}
    \rho^{\mcY^r}_+(f)\geq\sup\{\sum_{i=1}^N E_{Q_i}[f^i]\mid Q\in\mathcal{M}_e^{\varphi}(\mcY^r)\}.
\end{equation}

 \textbf{Case 1: $\rho^{\mcY^r}_+(f)<+\infty$. } In this case we may for simplicity assume w.l.o.g. (by Lemma \ref{lemmarho} Item 2) that $\rho^{\mcY^r}_+(f)=0$. We claim that 
\begin{equation}\label{claim:0} 0\in E:=\overline{\left\{ \left(\sum_{i=1}^N E_{Q^i}[Y^i_r],\sum_{i=1}^N E_{Q^i}[f^i] \right)\mid Q\in \mathcal{M}_e^{\varphi}(\mcY^{r-1})\right\}}\subset \R^2\end{equation}
By contradiction $0\notin E$. By separation arguments then there exists $(h_1,h_2)\in \R^2$ such that 
\[\sup_{Q\in \mathcal{M}_e^{\varphi}(\mcY^{r-1})}\left\{h_1 \sum_{i=1}^N E_{Q^i}[Y^i_r]+h_2\sum_{i=1}^N E_{Q^i}[f^i] \right\}<0.\] 
In particular, without loss of generality, we may assume that  $\sqrt{h_1^2+h_2^2}=\frac12$ and this implies $\abs{h_1Y^i_r+h_2f^i}\leq \varphi^i$. Moreover by assumption
\begin{eqnarray*}
    \sup_{Q\in \mathcal{M}_e^{\varphi}(\mcY^{r-1})}\left\{h_1 \sum_{i=1}^N E_{Q^i}[Y^i_r]+h_2\sum_{i=1}^N E_{Q^i}[f^i] \right\} & = & \sup_{Q\in \mathcal{M}_e^{\varphi}(\mcY^{r-1})}\left\{\sum_{i=1}^N E_{Q_i}[h_1Y^i_r+h_2f^i]\right\}
    \\ & = & \rho_+^{\mcY^{r-1}}(h_1 Y_r+h_2 f)
\end{eqnarray*}
and by definition and using $Y_r \in \mcY^r$
\[\rho_+^{\mcY^{r-1}}(h_1 Y_r+h_2 f)\geq \rho_+^{\mcY^r}(h_1 Y_r+h_2 f)=\rho_+^{\mcY^r}(h_2 f)\]
i.e. $\rho_+^{\mcY^r}(h_2 f)<0$ which necessarily implies $h_2\neq 0$, by Item 1 Lemma \ref{lemmarho}. If $h_2>0$ then by positive homogeneity (Lemma \ref{lemmarho} Item 3) $\rho_+^{\mcY^r}(f)<0$ which contradicts $\rho_+^{\mcY^r}(f)=0$.
Then necessarily $h_2<0$. As we proved $\mathcal{M}_e^{\varphi}(\mcY^{r})\neq \emptyset$, we take  $\widehat{Q}\in \mathcal{M}_e^{\varphi}(\mcY^r)\subseteq \mathcal{M}_e^{\varphi}(\mcY^{r-1})$ and compute
\[0>\sup_{Q\in \mathcal{M}_e^{\varphi}(\mcY^{r-1})}\left\{\sum_{i=1}^N E_{Q^i}[h_1Y^i_r+h_2f^i]\right\}\geq \sum_{i=1}^N E_{\widehat{Q}^i}[h_1Y^i_r+h_2f^i]= h_2\sum_{i=1}^N E_{\widehat{Q}^i}[f^i].\]
This implies $$\sum_{i=1}^N E_{\widehat{Q}^i}[f^i]>0=\rho^{\mcY}_+(f)\geq\sup_{Q\in\mathcal{M}_e^{\varphi}(\mcY^{r})}\sum_{i=1}^N E_{Q^i}[f^i], $$
a contradiction.
\\The claim in \eqref{claim:0} leads to a sequence of $Q^n\in \mathcal{M}_e^{\varphi}(\mcY^{r-1})$ such that 
$$\sum_{i=1}^N E_{Q_n^i}[Y^i_r]\to 0 \text{ and } \sum_{i=1}^N E_{Q_n^i}[f^i] \to 0.$$ 
We now show that from the conditions in \eqref{inf<sup}
we can construct $Q_n^+,Q_n^-\in \mathcal{M}_e^{\varphi}(\mcY^{r-1})$ and $0\leq \lambda^+_n,\lambda^-_n,\lambda_n$ with $\lambda^+_n+\lambda^-_n+\lambda_n=1$, such that 
$$\widehat{Q}_n:=\lambda^+_n Q_n^+ + \lambda^-_n Q_n^- + \lambda_n Q_n \in \mathcal{M}_e^{\varphi}(\mcY^{r})$$ 
and $\lambda^+_n\to 0, \lambda^-_n\to 0$. In fact for any $\varepsilon>0$ there exists an $n(\varepsilon)$ such that $q_n:=\sum_{i=1}^N E_{Q_n^i}[Y^i_r]\in (-\varepsilon,\varepsilon)$ for any $n>n(\varepsilon)$. We choose in particular $\varepsilon>0$ such that $(-\varepsilon,\varepsilon)\subset (a,b)$ for $a,b$ given in \eqref{inf<sup}. 
For any $n>n(\varepsilon)$ such that $q_n\in (-\varepsilon,0]$ we set $\lambda_n^-=0$, $Q_n^-$ arbitrary, $Q_n^+:=\bar{Q}\in \mathcal{M}_e^{\varphi}(\mcY^{r-1})$ where $\bar{Q}$ satisfies $\bar{q}:=\sum_{i=1}^N E_{\bar{Q}^i}[Y^i_r]\in (\varepsilon, b)$. Therefore in this case we can also find $\lambda^+_n, \lambda_n\in [0,1]$ with $\lambda^+_n+\lambda_n=1$ and $\lambda_n\bar{q}+\lambda_n^+q_n=0$. $\lambda_n\bar{q}+(1-\lambda_n)q_n=0$ implies $\lambda_n= \frac{-q_n}{\bar{q}-q_n}$ and since $q_n\to 0$, then $\lambda_n\to 0$.    
On the other hand, for any $n>n(\varepsilon)$ such that $q_n\in (0,\varepsilon)$, we set $\lambda_n^+=0$, $Q_n^+$ arbitrary, $Q_n^-:=\hat{Q}\in \mathcal{M}_e^{\varphi}(\mcY^{r-1})$ where $\hat{Q}$ satisfies $\hat{q}:=\sum_{i=1}^N E_{\bar{Q}^i}[Y^i_r]\in (a,-\varepsilon)$. Therefore in this case we can also find $\lambda^-_n, \lambda_n\in [0,1]$ with $\lambda^-_n+\lambda_n=1$ and $\lambda_n\hat{q}+\lambda_n^- q_n=0$. $\lambda_n\hat{q}+(1-\lambda_n)q_n=0$ implies $\lambda_n= \frac{-q_n}{\bar{q}-q_n}$ and since $q_n\to 0$ then $\lambda_n\to 0$.     Therefore
$$\sum_{i=1}^N E_{\widehat{Q}_n^i}[Y^i_r] = 0 \text{ and } \sum_{i=1}^N E_{\widehat{Q}_n^i}[f_i] \to 0,$$
which together with \eqref{ineq1rhoy} yields $\rho^{\mcY^r}_+(f)=\sup\{\sum_{i=1}^N E_{Q_i}[f_i]\mid Q\in\mathcal{M}_e^{\varphi}(\mcY^{r})\}$.\\

 \textbf{Case 2: $\rho^{\mcY^r}_+(f)=+\infty$. }
For any $n\in\mathbb{N}$ denote with $f\wedge \mathbf{n}$ the vector of components $f^i\wedge n$ for $i=1,\ldots,n$. By Item 4 Lemma \ref{lemmarho}, $\rho^{\mcY^r}_+(f\wedge \mathbf{n})<+\infty $ and so from the previous step we have $\rho^{\mcY^r}_+(f\wedge \mathbf{n})=\sup\{\sum_{i=1}^N E_{Q^i}[f^i\wedge n]\mid Q\in \mathcal{M}_e^{\varphi}(\mcY^{r})\}$. Since $\rho^{\mcY^r}_+(f\wedge \mathbf{n})$ is increasing the $\lim_n \rho^{\mcY^r}_+(f\wedge \mathbf{n}):=L$ exists. 
By monotonicity of $E_{Q^i}[f^i\wedge n]$, we deduce

\[\sup \left \{\sum_{i=1}^N E_{Q^i}[f^i\wedge n]\mid Q\in\mathcal{M}_e^{\varphi}(\mcY^{r}) \right \}\uparrow_n \sup \left \{\sum_{i=1}^N E_{Q^i}[f^i]\mid Q\in\mathcal{M}_e^{\varphi}(\mcY^{r}) \right \}=L.\]
Note that the terms in LSH equal $\rho^{\mcY^r}_+(f\wedge \mathbf{n})$, so that we conclude: $\lim_n\rho^{\mcY^r}_+(f\wedge \mathbf{n})=L$.
If $L$ were finite 
then for $\mathbf{p}^n=(\frac{\rho^{\mcY^r}_+(f\wedge \mathbf{n})}{N},\ldots, \frac{\rho^{\mcY^r}_+(f\wedge \mathbf{n})}{N})$,
the sequence $f\wedge \mathbf{n}-\mathbf{p}^n$ would belong ${K}^r$, by Item 5 Lemma \ref{lemmarho}. As $K^{\mcY}$ is closed, this would imply  $f-\mathbf{\frac{L}{N}}\in {K}^r$ i.e. $L\geq \rho^{\mcY}_+(f)=+\infty$, which is not possible. Hence $L=\sup \left \{\sum_{i=1}^N E_{Q^i}[f^i]\mid Q\in\mathcal{M}_e^{\varphi}(\mcY^{r}) \right \}= +\infty$ and this conclude the proof.

\section{Collectively complete markets: CFTAP II}\label{sec3}

In this section, we provide a precise definition of $\mathbf{NCA}(\mathcal{Y})$-prices for vectors of contingent claims and characterize them, in Theorem \ref{BaseFTAPII}, as expectations under equivalent collective martingale measures. Subsequently, we characterize contingent claims that are collectively replicable, demonstrating that, similarly to the classical setting, $f$ is collectively replicable if and only if $\rho_-^{\mathcal{Y}}(f) = \rho_+^{\mathcal{Y}}(f)$, as shown in Proposition \ref{prepcftap2}. Relying on the dual representation \eqref{pricing:hedging}, we then characterize collective completeness in Proposition \ref{propclosed}.

To begin with, recall that the (global) securities market comprises a zero-interest rate riskless asset $X^0$ and $J$ risky assets with discounted price processes $X^j=(X^j_t)_{t\in [0,T]}$, $j=1, \dots ,J$, $J\geq 1$. 

\begin{definition}
A vector of prices $\Pi_ f \in \mathbb{R}^N$ is called a  \textbf{NCA}$(\mcY)$-price for the contingent claims $f=(f^1,\dots,f^N) \in L^{0 }(\Omega, \mathbf{F}_T,P)$ if there exist processes $X^{J+1},\dots, X^{J+N}$ such that:
\begin{enumerate}
\item $X^{J+i}$ is adapted  to the  filtration $\fib$, $i=1,\dots, N$.
    \item $\Pi^i_ f = X^{J+i}_0$, $i=1,\dots, N$.
    \item $X^{J+i}_T =    f^i$, $i=1,\dots, N$.
    \item For all $i$, \textbf{NCA}$(\mcY)$ holds in the extended market $(X^j)_{j\in (i)\cup\{J+i\}}$.
\end{enumerate} 
Let $\Pi( f)$ be the set of \textbf{NCA}$(\mcY)$-price for the contingent claims $ f=( f^1,\dots,  f^N).$ 
 \end{definition}

\begin{theorem}
\label{BaseFTAPII}
 Let Assumption \ref{ass2} hold. 
Then, for any $ f\in L^{0}(\Omega, \mathbf{F}_T,P)$, and any fixed $\varphi\in L^{0}(\Omega, \mathbf{F}_T,P)$ such that  
$\max \{|f^i|,|Y^i_0|,\ldots, |Y^i_R|\}\leq \varphi^i$, it holds that
\[
\Pi( f) = \left\{ (E_{Q^1}[   f^1],\dots, E_{Q^N}[   f^N]) \mid Q \in \mathcal{M}^\varphi_e(\mcY)\right\},
\]
where $\mathcal{M}^\varphi_e(\mcY)$ is given in \eqref{Mfi}.
\end{theorem}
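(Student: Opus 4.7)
The plan is to prove the two inclusions separately, leveraging the extended-market construction implicit in the definition of $\Pi(f)$, together with Theorem \ref{duality:R} applied to the extended market.

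For the inclusion $\supseteq$, I would start with an arbitrary $Q\in \mathcal{M}^\varphi_e(\mcY)$. Since $|f^i|\leq \varphi^i$ and $\varphi^i\in L^1(Q^i)$, we have $f^i\in L^1(\Omega,\mathcal{F}^i_T,Q^i)$, so the process $X^{J+i}_t := E_{Q^i}[f^i\mid \mathcal{F}^i_t]$ is a well-defined $(\fib,Q^i)$-martingale. By construction $X^{J+i}$ is $\fib$-adapted, $X^{J+i}_T=f^i$, and (using triviality of $\mathcal{F}^i_0$) $X^{J+i}_0=E_{Q^i}[f^i]$. In the extended market with assets $(X^j)_{j\in (i)\cup\{J+i\}}$ for each agent $i$, the same $Q$ remains a collective equivalent martingale measure (the new asset is a $Q^i$-martingale by construction, all other requirements involve only $\mcY$ which is unchanged). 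In particular $Q$ lies in the set $\mathcal{M}_e(\mcY)$ built from the extended market, which by Remark \ref{remEQ} suffices to ensure $\mathbf{NCA}(\mcY)$ in the extended market. Thus the vector $(E_{Q^1}[f^1],\dots,E_{Q^N}[f^N])$ is indeed an $\mathbf{NCA}(\mcY)$-price for $f$.

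For the inclusion $\subseteq$, given $\Pi_f\in\Pi(f)$ I would fix the associated processes $X^{J+1},\dots,X^{J+N}$ satisfying items 1--4 of the definition. The crucial observation is that Assumption \ref{ass2} transfers verbatim to the extended market: the exchange space $\mcY$ and its finite-dimensional structure are untouched, and $\mathbf{NCA}(\mcY)$ holds in the extended market by hypothesis. Applying Theorem \ref{duality:R} in the extended market with the same $\varphi$ (which dominates $|f^i|$ and the generators of $\mcY$) gives that $\mathcal{M}^\varphi_e(\mcY)$ of the extended market is nonempty. Pick any $Q$ in it: then $X^{J+i}$ is a $(\fib,Q^i)$-martingale and $f^i=X^{J+i}_T\in L^1(Q^i)$ (since $|f^i|\leq \varphi^i\in L^1(Q^i)$). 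Therefore $\Pi_f^i=X^{J+i}_0=E_{Q^i}[X^{J+i}_T]=E_{Q^i}[f^i]$, where again I use $\mathcal{F}^i_0$ trivial. Finally, since a martingale measure for the extended asset set is a fortiori a martingale measure for the original set, the extended $\mathcal{M}^\varphi_e(\mcY)$ is a subset of the original one, so $Q$ lies in the set appearing on the right-hand side.

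The main conceptual step, rather than an obstacle, is the verification that Assumption \ref{ass2} is preserved by the extension and that the extended $\mathcal{M}^\varphi_e(\mcY)$ embeds into the original one; both are immediate once the setup is carefully unpacked. A minor technical point worth stating explicitly is the integrability argument $|f^i|\leq \varphi^i\in L^1(Q^i)\Rightarrow f^i\in L^1(Q^i)$, which is used to conclude $E_{Q^i}[X^{J+i}_T]=X^{J+i}_0$ from the martingale property. With these in place, the two chains of implications close and establish equality of $\Pi(f)$ with the described set of expectation vectors.
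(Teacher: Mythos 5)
Your proof is correct, and its overall architecture (two inclusions, the extended market with the conditional-expectation processes $X^{J+i}_t=E_{Q^i}[f^i\mid\mathcal{F}^i_t]$, triviality of $\mathcal{F}_0$ to identify the initial price) is the same as the paper's; in particular your $\supseteq$ direction coincides with the paper's, including the appeal to Remark \ref{remEQ} (equivalently, the easy direction of Theorem \ref{Th1}) to get $\mathbf{NCA}(\mcY)$ in the extended market. The one genuine difference is in the $\subseteq$ direction: the paper first passes to an equivalent measure $P'$ under which $\varphi$, $\mcY$ and $X$ are integrable and then invokes Theorem \ref{FTAP3} to obtain a collective martingale measure for the extended market with bounded densities with respect to $P'$, whence $Q\in\mathcal{M}^\varphi_e(\mcY)$; this route implicitly also uses that the extended-market $K^{\mcY}$ is closed (Theorem \ref{THclosure} applied to the extended market, which is legitimate since Assumption \ref{ass2} transfers). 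You instead apply Theorem \ref{duality:R} directly in the extended market with the same $\varphi$, whose statement already delivers nonemptiness of the extended $\mathcal{M}^\varphi_e(\mcY)$, and then use that this set embeds into the original $\mathcal{M}^\varphi_e(\mcY)$ because a martingale measure for more assets is one for fewer. Both arguments are valid; yours is slightly more streamlined in that it avoids the explicit change of measure and the appeal to the result of \cite{BDFFM25}, staying entirely within the paper's own Theorem \ref{duality:R}, at the cost of invoking the full duality theorem where only the nonemptiness statement is needed. Your explicit remarks that Assumption \ref{ass2} is preserved under the extension and that $|f^i|\leq\varphi^i\in L^1(Q^i)$ yields $f^i\in L^1(Q^i)$ are exactly the points the paper glosses over with ``for mere measurability and integrability arguments.''
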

\begin{proof} 
Let $\varphi\in L^{0}(\Omega, \mathbf{F}_T,P)$ be given as in the statement.
Up to a change of measure to an equivalent one with bounded density, we can assume that $\varphi\in L^{1}(\Omega, \mathbf{F}_T,P') $, $\mcY \subseteq L^{1 }(\Omega, \mathbf{F}_T,P')$ and $X=(X^1,\dots,X^J)$ is integrable under $P'$.
Suppose that $\Pi_ f \in \Pi( f)$, then \textbf{NCA}$(\mcY)$ holds in the extended market, thus by Theorem \ref{FTAP3}, there exists an equivalent collective martingale measure $Q=(Q^1,\dots, Q^N)$ for the extended markets, with bounded densities w.r.t. $P'$, and in particular belonging to $\mathcal{M}^\varphi_e(\mcY)$ as a consequence. Then,  $X^{J+i}_t = E_{Q^i}[X^{J+i}_T | \mathcal{F}^i_t]=E_{Q^i}[ f^i | \mathcal{F}^i_t]$. By definition
\[
\Pi^i_ f = X^{J+i}_0 = E_{Q^i}[X^{J+i}_T | \mathcal{F}_0] = E_{Q^i}[   f^i].
\]
Conversely, take $(E_{Q^1}[   f^1],\dots, E_{Q^N}[   f^N])$ for $ Q \in \mathcal{M}^\varphi_e(\mcY)$. Define for $i=1,\dots,N$ the $Q^i-$ martingales $X^{J+i}_t = {E}_{Q^i}[   f^i | \mathcal{F}^i_t]$. Since $Q \in \mathcal{M}^\mcY_e$, and $Q^i$ is an equivalent martingale measure for the extended market $(X^j)_{j\in(i)\cup\{J+i\}}$, $Q$ acts as an equivalent collective martingale measure for the extended markets, which thus does not admit $\mcY$-collective arbitrages, by Theorem \ref{Th1}.
Finally, $X^{J+i}_T = f^i$ for mere measurability and integrability arguments, thus
thus $\Pi_ f \in \Pi( f)$ by definition.
\end{proof}

Suppose that $f=(f^1,\dots,f^N) \in L^{0 }(\Omega, \mathbf{F}_T,P)$ is not collectively replicable. Differently from the classical case, see for example \cite{FollmerSchied2} Theorem 5.32, the set $\Pi(f)$ is not necessarily open (see the Example \ref{notopen}). Moreover $\Pi(f)$ can not be closed, at least under weak conditions on $f$ (see Proposition \ref{notclosed}).

\begin{example}\label{notopen}

Consider the example 7.2 \cite{BDFFM25} with $|\Omega|=6$, two periods and $N=2$ agents, each investing in one single stock. Take as contingent claims $g^1=0$ and $g^2=(0,0,0,0,1,1)$, $g=(g^1,g^2)$. Using the set of collective martingale measures on page 49 \cite{BDFFM25},  we compute $E_{Q^1}[g^1]+E_{Q^2}[g^2]=\frac{1}{2}q,$ with  $0< q< 1$. Then $\rho^{\mcY}_-(g) =0$ and $ \rho^{\mcY}_+(g)=\frac{1}{2}$ and $g$ is not $\mcY$-collectively replicable. Moreover, using Theorem \ref{BaseFTAPII} $$\Pi(g)=\{(0,\frac{1}{2}q) \mid 0<q<1 \} \subset \mathbb R^2.$$
Thus this examples shows that for a non collectively replicable $g$, $\Pi(g)$ is not necessarily an open set in $\mathbb R^2 $. Clearly, by Theorem \ref{BaseFTAPII}, for any claim $g$ we have $$p \in \Pi(g) \Rightarrow \rho^{\mcY}_-(g)\leq \sum_{i=1}^Np^i \leq \rho^{\mcY}_+(g)$$ 
At the same, $\Pi(g)$ is not closed in general, as shown in Proposition \ref{notclosed}

\end{example}

\begin{proposition}
\label{prepcftap2}
 Let Assumption \ref{ass2} hold. Then $f \in L^{0 }(\Omega, \mathbf{F}_T , P)$ is $\mcY$-collectively replicable if and only if $\rho^{ \mcY}_-( f)=\rho^{ \mcY}_+( f)$.
\end{proposition}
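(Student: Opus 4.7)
The plan combines the attainment property in Item 5 of Lemma \ref{lemmarho} with $\mathbf{NCA}(\mcY)$ and the vector space structure of both ${\sf X}_{i=1}^N K_i$ and $\mcY$.

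The implication $(\Rightarrow)$ is immediate. If $f=m+k+Y$ with $m\in\R^N$, $k\in{\sf X}_{i=1}^N K_i$, $Y\in\mcY$, then $m+k+Y\geq f$ yields $\rho^{\mcY}_+(f)\leq\sum_{i=1}^N m_i$ from \eqref{super:rho}. Since $\mcY$ is a vector space, $-Y\in\mcY$, and $m+k-(-Y)=f\leq f$ yields $\rho^{\mcY}_-(f)\geq\sum_{i=1}^N m_i$. Combined with Item 6 of Lemma \ref{lemmarho}, this forces $\rho^{\mcY}_+(f)=\rho^{\mcY}_-(f)=\sum_{i=1}^N m_i$.

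For the nontrivial direction $(\Leftarrow)$, set $\rho:=\rho^{\mcY}_-(f)=\rho^{\mcY}_+(f)$. Lemma \ref{lemmarho} Item 5 gives $\rho^{\mcY}_+(f)>-\infty$ and $\rho^{\mcY}_-(f)<+\infty$, so $\rho\in\R$. The attainment statement in Item 5, applied to $f$, yields $k\in{\sf X}_{i=1}^N K_i$ and $Y\in\mcY$ such that $\tfrac{\rho}{N}\mathbf{1}+k+Y\geq f$. Applying it to $-f$ and using \eqref{rem:rhopm} (so that $\rho^{\mcY}_+(-f)=-\rho$), we obtain $k'\in{\sf X}_{i=1}^N K_i$ and $Y'\in\mcY$ with $\tfrac{\rho}{N}\mathbf{1}-k'-Y'\leq f$.

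The key step is to subtract the two bounds, obtaining $(k+k')+(Y+Y')\geq 0$. Since $H\mapsto -H$ preserves admissibility of trading strategies, each $K_i$ is a vector space, and $\mcY$ is a vector space by Assumption \ref{ass2}; hence $(k+k')\in{\sf X}_{i=1}^N K_i$ and $(Y+Y')\in\mcY$, so the nonnegative element above belongs to ${\sf X}_{i=1}^N K_i+\mcY\subseteq K^{\mcY}$. Proposition \ref{equivalentCo} under $\mathbf{NCA}(\mcY)$ then forces $(k+k')+(Y+Y')=0$, that is, $k+Y=-k'-Y'$. The two one-sided bounds now sandwich $f$ between equal quantities, producing $f=\tfrac{\rho}{N}\mathbf{1}+k+Y\in\R^N+{\sf X}_{i=1}^N K_i+\mcY$, i.e.\ $\mcY$-collective replicability. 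The only delicate point in this plan is knowing that $\rho$ is finite and that the attainments have the specific form $\tfrac{\rho}{N}\mathbf{1}$ rather than a generic $m\in\R^N$; both facts are delivered by Item 5 of Lemma \ref{lemmarho}, so the remainder of the argument is purely algebraic.
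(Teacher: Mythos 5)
Your proof is correct. For the direction $\rho^{\mcY}_-(f)=\rho^{\mcY}_+(f)\Rightarrow$ replicability you follow essentially the paper's route: both prices are finite by Lemma \ref{lemmarho} Item 5 and attained with cash part $\tfrac{\rho}{N}\mathbf{1}$ (the paper phrases this through $\pi^{\mcY}_\pm$ and Proposition 4.9 of \cite{BDFFM25} and argues the contrapositive, but the algebra --- subtract the attaining super- and sub-hedges and use $\mathbf{NCA}(\mcY)$ to force the resulting nonnegative element of $K^{\mcY}$ to vanish --- is the same). For the other direction your argument genuinely differs: the paper invokes the pricing-hedging duality (Theorem \ref{duality:R}) together with $E_{Q^i}[k^i]=0$ for $Q\in\mathcal{M}^{\varphi}_e(\mcY)$ (via Theorem 5.14 of \cite{FollmerSchied2}) to show that $Q\mapsto\sum_{i=1}^N E_{Q^i}[f^i]$ is constant on $\mathcal{M}^{\varphi}_e(\mcY)$, whereas you use only the vector-space structure of $\mcY$ (replacing $Y$ by $-Y$ in the sub-hedging problem) and the inequality $\rho^{\mcY}_-\leq\rho^{\mcY}_+$ from Lemma \ref{lemmarho} Item 6. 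Your route is more elementary and bypasses the dual machinery and all integrability considerations; the paper's route yields as a by-product that the collective price $\sum_i E_{Q^i}[f^i]$ is the same for every $Q\in\mathcal{M}^{\varphi}_e(\mcY)$ when $f$ is replicable, the pricing-uniqueness fact behind Theorem \ref{BaseFTAPII}. One point to make explicit: your appeal to Item 5 of Lemma \ref{lemmarho} presupposes that $K^{\mcY}$ is closed in $L^{0}(\Omega,\mathbf{F}_T,P)$, which under Assumption \ref{ass2} is precisely Theorem \ref{THclosure} and should be cited (the paper's own proof relies on it in the same implicit way).
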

\begin{proof}
Recall from Lemma \ref{lemmarho} Item 6 that by $\mathbf{NCA}(\mcY)$ we have  $\rho^{ \mcY}_-( f)\leq\rho^{ \mcY}_+( f)$.
We start by proving that if $f$ is collectively replicable then $\rho^{ \mcY}_-( f)=\rho^{ \mcY}_+( f)$. 
Let  $\varphi\in L^{0}(\Omega, \mathbf{F}_T,P)$ be any given random vector  satisfying 
$\max \{|f^i|,|Y^i_1|,\ldots, |Y^i_R|\}\leq \varphi^i$ for every $i$. Take $Q\in \mathcal{M}^\varphi_e(\mcY)$, which is nonempty by Theorem \ref{duality:R}. Write $f=x+k+Y$ with obvious notation from replicability. Since $f,Y\in L^{1}(\Omega, \mathbf{F}_T,\mathbf{Q})$, by definition of $\mathcal{M}^\varphi_e(\mcY)$ we see that  $(k^i)^-\in L^1(Q^i)$ for every $i$. \cite{FollmerSchied2} Theorem 5.14 implies that $E_{Q^i}[k^i]=0$ for every $i$. As a consequence, using also $Q\in \mathcal{M}^\varphi_e(\mcY)$, we can write
    $$\sum_{i=1}^NE_{Q^i}[f^i]=\sum_{i=1}^N\Big(x^i+E_{Q^i}[k^i]
    +E_{Q^i}[Y^i]\Big)=\sum_{i=1}^Nx^i.$$
    Hence $Q\mapsto \sum_{i=1}^NE_{Q^i}[f^i]$ is constant on $\mathcal{M}^\varphi_e(\mcY)$, so using \eqref{pricing:hedging}
    $$
  \rho^{\mcY}_+(f)=\sup \left \{\sum_{i=1}^N E_{Q^i}[f^i]\mid Q\in \mathcal{M}^{\varphi}_e(\mcY) \right \}=\inf\left \{\sum_{i=1}^N E_{Q^i}[f^i]\mid Q\in \mathcal{M}^{\varphi}_e(\mcY) \right \}=\rho^{\mcY}_-(f).$$
We now prove that for a non $\mcY$-collectively replicable $f$ we have
$\rho^{ \mcY}_-( f) < \rho^{ \mcY}_+( f)$.
 By the definitions, $\rho^{ \mcY}_-( f) \leq \rho^{ \mcY}_+( f)$. Assume by contradiction $\rho^{ \mcY}_-( f) = \rho^{ \mcY}_+( f)$. By Lemma \ref{lemmarho} Item 5 both these then need to be finite and  attained. Recall from  \eqref{piY} the definition of $\pi^{\mcY}_+$.  From \cite{BDFFM25} Proposition 4.9  $\rho^{\mcY}_+( f)=N\cdot \pi^{\mcY}_+( f)$ and $\pi^{\mcY}_+( f)$ is attained iff $\rho^{\mcY}_+(f)$ is attained.  Thus we also have $\pi^{ \mcY}_-( f) = \pi^{ \mcY}_+( f):=x \in \mathbb R$ and  there exists $k_\pm\in {\sf X}_{i=1} ^{N} K_i$, $Y_\pm\in\mcY$ such that 
  \[x+ k_-^i - Y_-^i \leq  f^i \leq x+ k_+^i + Y_+^i\quad \forall \,i=1,\ldots,N.\]
  As a consequence 
  \[0\leq  k_+^i-k_-^i + Y_+^i + Y_-^i \quad \forall \,i=1,\ldots,N,\]
  which by $\mathbf{NCA(\mcY)}$ implies
  \[0 =  k_+^i-k_-^i + Y_+^i+Y_-^i \quad \forall \,i=1,\ldots,N,\]
  i.e. 
  \[k_-^i - Y_-^i =  f^i-x = k_+^i + Y_+^i\quad \forall \,i=1,\ldots,N,\]
  which is not possible as $ f$ is not $\mcY$-collectively replicable.  
\end{proof}

\begin{proposition}
\label{prop:attainrepl}
Let Assumption \ref{ass2} hold. 
Take  $f\in L^{0}(\Omega, \mathbf{F}_T,P)$. Let  $\varphi\in L^{0}(\Omega, \mathbf{F}_T,P)$ be any given random vector  satisfying 
$\max \{|f^i|,|Y^i_1|,\ldots, |Y^i_R|\}\leq \varphi^i$ for every $i$.
    Suppose that both the primal problem \eqref{super:rho} and the dual one \eqref{pricing:hedging} are attained. Then $f$ is $\mcY$-collectively replicable.
    The same implication holds for $\rho^{\mcY}_-$ as well.
\end{proposition}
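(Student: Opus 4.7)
My plan is to exploit both attainments to force the superreplicating inequality to become a $P$-a.s.\ equality. Let $(m,k,Y) \in \R^N \times {\sf X}_{i=1}^N K_i \times \mcY$ realize the primal, so $m+k+Y \geq f$ with $\sum_{i=1}^N m^i = \rho_+^{\mcY}(f)$, and let $Q \in \mathcal{M}_e^{\varphi}(\mcY)$ realize the dual, so $\sum_{i=1}^N E_{Q^i}[f^i] = \rho_+^{\mcY}(f)$; in particular $\rho_+^{\mcY}(f)$ is finite. The goal is to show $m+k+Y = f$ $P$-a.s., which is exactly the definition of $\mcY$-collective replicability.

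The core step is the usual integrability argument to pin down $E_{Q^i}[k^i]$. Since $|f^i|,|Y^i| \leq \varphi^i \in L^1(Q^i)$, both $f$ and $Y$ lie in $L^1(\mathbf{Q})$. From the componentwise inequality $k^i \geq f^i - m^i - Y^i$ I deduce $(k^i)^- \in L^1(Q^i)$, and then Theorem 5.14 of \cite{FollmerSchied2} (applicable since $Q^i \in \mie$ and $k^i \in K_i$) gives $k^i \in L^1(Q^i)$ with $E_{Q^i}[k^i]=0$. Stringing everything together,
\begin{equation*}
0 \leq \sum_{i=1}^N E_{Q^i}\bigl[(m+k+Y-f)^i\bigr] = \sum_{i=1}^N m^i + \sum_{i=1}^N E_{Q^i}[Y^i] - \sum_{i=1}^N E_{Q^i}[f^i] = \sum_{i=1}^N E_{Q^i}[Y^i] \leq 0,
\end{equation*}
where the lower bound uses $(m+k+Y-f)^i \geq 0$ for each $i$, the central equality uses $E_{Q^i}[k^i]=0$ and $\sum_i m^i = \sum_i E_{Q^i}[f^i]$, and the upper bound uses $Y \in \mcY$ together with the defining property of $\mathcal{M}_e(\mcY)$. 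Hence each nonnegative summand $E_{Q^i}[(m+k+Y-f)^i]$ vanishes; since $Q^i \sim P$, this forces $m^i+k^i+Y^i = f^i$ $P$-a.s.\ for every $i$, as required.

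For $\rho_-^{\mcY}$ I would simply invoke the identity \eqref{rem:rhopm} together with the fact that each $K_i$ and $\mcY$ is a vector space: primal and dual attainment for $\rho_-^{\mcY}(f)$, by some $(m,k,Y)$ with $m+k-Y \leq f$ and some $Q$ achieving the infimum, translates via the change $(m,k,Y) \mapsto (-m,-k,Y)$ into primal and dual attainment for $\rho_+^{\mcY}(-f)$ with the same $Q$; the first part then yields collective replicability of $-f$, hence of $f$. The only delicate point, and therefore the main obstacle, is the integrability input ensuring $E_{Q^i}[k^i]=0$; this is precisely what the dominating vector $\varphi$ and the membership $Q \in \mathcal{M}_e^{\varphi}(\mcY)$ are designed to provide.
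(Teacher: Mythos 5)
Your proof is correct and follows essentially the same route as the paper: primal attainment plus dual attainment, integrability of $(k^i)^-$ giving $E_{Q^i}[k^i]=0$ via Theorem 5.14 of \cite{FollmerSchied2}, and the polar constraint $\sum_{i=1}^N E_{Q^i}[Y^i]\leq 0$ forcing the nonnegative gap $m+k+Y-f$ to have zero expectation, hence to vanish $P$-a.s. (the paper merely rearranges the deterministic part into a shifted exchange $\widehat Y\in\mcY$ and argues by contradiction componentwise, and it leaves the $\rho^{\mcY}_-$ case implicit, which you correctly handle through \eqref{rem:rhopm}). One cosmetic point: $|Y^i|\leq\varphi^i$ is only guaranteed for the generators $Y_1,\dots,Y_R$, not for a general $Y\in\mcY$, but $Y\in L^{1}(\Omega,\mathbf{F}_T,Q)$ holds anyway because $\mathcal{M}_e(\mcY)$ requires $\mcY\subseteq L^{1}(\Omega,\mathbf{F}_T,Q)$, so the argument is unaffected.
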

\begin{proof}
    Suppose $\mcR( f)=\sum_{i=1}^NE_{Q^i}[   f^i]$  for some $ Q \in \mathcal{M}^\varphi_e(\mcY)$. Since $\mcR( f)$ is finite, it is  attained (Lemma \ref{lemmarho} Item 5) and thus for some $x\in \R^N$ with $\sum_i x^i=\mcR( f)$, $k\in {\sf X}_{i=1} ^{N} K_i$, $Y\in\mcY$ we have $x^i+k^i+Y^i\geq  f^i$ for every $i$. By setting $\widehat{Y}^i=x^i-E_{Q^i}[   f^i]+Y^i$ we also have that $E_{Q^i}[   f^i]+k^i+\widehat{Y}^i\geq  f^i$ for every $i$, with $\widehat{Y}\in\mcY$. This immediately gives $E_{Q^i}[  \widehat{Y}^i]\geq 0$ for every $i$. At the same time if the inequality $E_{Q^i}[   f^i]+k^i+\widehat{Y}^i\geq  f^i$ were strict with positive ($P$ or $Q^i$ equivalently) probability for some index $h$, then $E_{Q^h}[  \widehat{Y}^h]>0$ with a similar computation. This would yield $\sum_jE_{Q^j}[  \widehat{Y}^j]>0$, a contradiction with $Q\in\mathcal{M}^\varphi_e(\mcY) $. We then infer $E_{Q^i}[   f^i]+k^i+\widehat{Y}^i=  f^i$, for every $i$, $Q^i$ (hence $P$ by equivalence) a.s..
\end{proof}

\begin{proposition}\label{notclosed}
Let Assumption \ref{ass2} hold. 
   If $f\in L^0(\Omega, \mathbf{F}_T, P)$ is not $\mcY$-collectively replicable and if $\pii( f^i)<+\infty$ and $\rho_{i,+}(-f^i)<+\infty$ for every $i$ then $\Pi(f)$ is not closed. 
\end{proposition}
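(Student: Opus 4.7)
The plan is to combine the dual representation of $\Pi(f)$ from Theorem \ref{BaseFTAPII} with the attainment result in Proposition \ref{prop:attainrepl}, plus a boundedness argument based on classical one-agent superhedging bounds. Fix $\varphi \in L^{0}(\Omega, \mathbf{F}_T, P)$ with $\max\{|f^i|,|Y^i_1|,\ldots,|Y^i_R|\} \leq \varphi^i$, so that Theorem \ref{BaseFTAPII} gives $\Pi(f) = \{(E_{Q^1}[f^1],\dots, E_{Q^N}[f^N]) \mid Q \in \mathcal{M}^\varphi_e(\mcY)\}$. From $\rho^{\mcY}_+(f) \leq \sum_i \rho_{i,+}(f^i) < +\infty$ and the analogous inequality applied to $-f$ combined with \eqref{rem:rhopm}, both $\rho^{\mcY}_\pm(f)$ are finite.

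First I would show that $\Pi(f)$ is a bounded subset of $\R^N$. For any $Q \in \mathcal{M}^\varphi_e(\mcY)$ and each $i$, we have $Q^i \in \mathcal{M}^i_e$ and $f^i \in L^1(Q^i)$. Whenever $m + k^i \geq f^i$ with $k^i \in K_i$, the bound $(k^i)^- \leq (f^i)^- + m^-$ gives $(k^i)^- \in L^1(Q^i)$, so by Theorem 5.14 of \cite{FollmerSchied2} one has $E_{Q^i}[k^i] = 0$ and hence $E_{Q^i}[f^i] \leq m$. Taking infimum over $m$ yields $E_{Q^i}[f^i] \leq \rho_{i,+}(f^i)$, and the symmetric argument applied to $-f^i$ gives $E_{Q^i}[f^i] \geq -\rho_{i,+}(-f^i)$. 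Thus $\Pi(f) \subseteq \prod_{i=1}^N [-\rho_{i,+}(-f^i),\rho_{i,+}(f^i)]$, a bounded box by hypothesis.

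Next I would produce a candidate boundary point. By the dual representation \eqref{pricing:hedging} there is a sequence $Q_n \in \mathcal{M}^\varphi_e(\mcY)$ with $\sum_{i=1}^N E_{Q_n^i}[f^i] \to \rho^{\mcY}_+(f)$. The points $x_n = (E_{Q_n^1}[f^1],\ldots,E_{Q_n^N}[f^N]) \in \Pi(f)$ lie in the bounded box above, so up to a subsequence they converge to some $x^* \in \R^N$ with $\sum_{i=1}^N (x^*)^i = \rho^{\mcY}_+(f)$.

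Finally I would close the argument by contradiction: if $\Pi(f)$ were closed, then $x^* \in \Pi(f)$, i.e. $x^* = (E_{Q^1}[f^1],\ldots,E_{Q^N}[f^N])$ for some $Q \in \mathcal{M}^\varphi_e(\mcY)$, which means the dual problem in \eqref{pricing:hedging} is attained. Since $\rho^{\mcY}_+(f)$ is finite and $K^{\mcY}$ is closed in $L^{0}(\Omega, \mathbf{F}_T, P)$ (Theorem \ref{THclosure}), Lemma \ref{lemmarho} Item 5 ensures the primal is attained as well. Proposition \ref{prop:attainrepl} then forces $f$ to be $\mcY$-collectively replicable, contradicting the assumption. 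The main obstacle is precisely the boundedness step in the second paragraph: without the individual superhedging assumptions $\rho_{i,+}(\pm f^i) < +\infty$, single components $E_{Q^i}[f^i]$ could escape to infinity even as the sum converges, and the extraction of a convergent subsequence $x_n \to x^*$ would fail.
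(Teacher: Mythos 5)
Your proposal is correct and follows essentially the same route as the paper: bound $\Pi(f)$ componentwise by $[-\rho_{i,+}(-f^i),\rho_{i,+}(f^i)]$, note that closedness would make $\Pi(f)$ compact so the dual problem \eqref{pricing:hedging} would be attained, and then invoke Proposition \ref{prop:attainrepl} (with primal attainment from Lemma \ref{lemmarho} Item 5 and Theorem \ref{THclosure}) to contradict non-replicability. The only blemish is the inequality $(k^i)^-\leq (f^i)^-+m^-$, which should read $(k^i)^-\leq (f^i)^-+m^+$ (or $|m|$); this does not affect the argument.
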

\begin{proof}
Every $p \in \Pi(f) $ satisfies
$$-\rho_{i,+}(-f^i)\leq \inf \left \{E_{Q^i}[f^i]\mid Q\in \mathcal{M}^{\varphi}_e(\mcY) \right \}\leq p^i \leq \sup \left \{E_{Q^i}[f^i]\mid Q\in \mathcal{M}^{\varphi}_e(\mcY) \right \}\leq \pii( f^i) $$
and thus $\Pi(f)$ is bounded in $\mathbb R^N$. If $\Pi(f)$ were closed then it would be compact and by Theorem \ref{BaseFTAPII} together with \eqref{pricing:hedging} we would have  $
  \rho^{\mcY}_+(f)=\sum_{i=1}^N E_{Q^i}[f^i]$, for some 
 $Q\in M^\varphi_e(\mcY)$, for $\max \{|f^i|,|Y^i_1|,\ldots, |Y^i_R|\}\leq \varphi^i$. This in turns would imply by Proposition \ref{prop:attainrepl} that $f$ is $\mcY$-collectively replicable, a contradiction.
\end{proof}

Theorem \ref{completeTH} is a consequence of the following proposition. We write $e_1,\dots, e_N$ for the canonical basis of $\R^N$
\begin{proposition}\label{propclosed}
    Let Assumption \ref{ass2} hold. Then the following are equivalent:
    \begin{enumerate}
        \item Every $ f\in L^{0}(\Omega, \mathbf{F}_T,P)$ is $\mcY$-collectively replicable;
        \item Every $ f\in L^{\infty}(\Omega, \mathbf{F}_T,P)$ is $\mcY$-collectively replicable;
        \item  $1_Ae_j$ is $\mcY$-collectively replicable for every $j=1,\dots, N$ and every $A\in \mcF_T^j$;
        \item $\Me$ is a singleton;
        \item  $\rho^{ \mcY}_\pm$ coincide on $L^{0}(\Omega, \mathbf{F}_T,P)$.
    \end{enumerate}
\end{proposition}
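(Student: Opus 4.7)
My plan is to prove the cyclic chain (1)$\Rightarrow$(2)$\Rightarrow$(3)$\Rightarrow$(4)$\Rightarrow$(5)$\Rightarrow$(1). The implications (1)$\Rightarrow$(2)$\Rightarrow$(3) are immediate from $L^\infty\subseteq L^0$ and the fact that each $1_Ae_j$ lies in $L^\infty$, while (5)$\Rightarrow$(1) is exactly the ``if'' direction of Proposition \ref{prepcftap2}, applied to each $f\in L^0(\Omega,\mathbf F_T,P)$. The substantive content therefore concentrates in the two implications (3)$\Rightarrow$(4) and (4)$\Rightarrow$(5), both of which I will deduce from the pricing--hedging duality of Theorem \ref{duality:R} combined with the replicability criterion of Proposition \ref{prepcftap2}.

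For (3)$\Rightarrow$(4) the key idea is to leverage the finite-dimensionality of $\mcY$ to identify $\mathcal{M}^\varphi_e(\mcY)$ with the full set $\Me$ for a suitable $\varphi$. I fix $j$ and $A\in\mcF_T^j$ and set $\varphi^i:=1+\max_{r=1,\dots,R}|Y_r^i|$, which dominates $|(1_Ae_j)^i|$ and every generator $|Y_r^i|$. Any $Q\in\Me$ satisfies $\mcY\subseteq L^1(Q)$, so each $Y_r^i\in L^1(Q^i)$ and hence $\varphi^i\in L^1(Q^i)$; thus $\mathcal{M}^\varphi_e(\mcY)=\Me$. Since $1_Ae_j$ is $\mcY$-collectively replicable, Proposition \ref{prepcftap2} gives $\rho^{\mcY}_+(1_Ae_j)=\rho^{\mcY}_-(1_Ae_j)$, and plugging into \eqref{pricing:hedging} together with \eqref{rem:rhopm} shows that these coincide with $\sup_{Q\in\Me}Q^j(A)$ and $\inf_{Q\in\Me}Q^j(A)$ respectively. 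So $Q\mapsto Q^j(A)$ is constant on $\Me$. Letting $j$ and $A\in\mcF_T^j$ vary, every coordinate $Q^j$ is uniquely pinned down on its own $\sigma$-algebra, forcing $\Me$ to be a singleton.

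For (4)$\Rightarrow$(5), let $Q$ denote the unique element of $\Me$, fix an arbitrary $f\in L^0(\Omega,\mathbf F_T,P)$, and pick any $\varphi$ with $\max\{|f^i|,|Y^i_1|,\dots,|Y^i_R|\}\leq\varphi^i$. Theorem \ref{duality:R} ensures $\mathcal{M}^\varphi_e(\mcY)\neq\emptyset$, but $\mathcal{M}^\varphi_e(\mcY)\subseteq\Me=\{Q\}$ then forces $\mathcal{M}^\varphi_e(\mcY)=\{Q\}$; in particular $\varphi\in L^1(Q)$ and hence $f\in L^1(Q)$. The dual formula collapses to $\rho^{\mcY}_+(f)=\sum_{i=1}^N E_{Q^i}[f^i]$, and applying the same reasoning to $-f$ via \eqref{rem:rhopm} yields $\rho^{\mcY}_-(f)=\sum_{i=1}^N E_{Q^i}[f^i]$ as well, giving (5).

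The main obstacle is the delicate step in (3)$\Rightarrow$(4) of transferring the constancy of $Q\mapsto E_{Q^j}[1_A]$ from the integrability-restricted family $\mathcal{M}^\varphi_e(\mcY)$ supplied by the duality to all of $\Me$ as required by (4). This is precisely where the finite-dimensionality hypothesis in Assumption \ref{ass2} is essential, as it allows a single deterministic majorant $\varphi$ to simultaneously control the target claim and every generator of $\mcY$, yielding $\mathcal{M}^\varphi_e(\mcY)=\Me$ in each relevant case. Without this, constancy over a proper subset of $\Me$ would generally fail to rule out distinct marginals $Q^j$ on $\mcF_T^j$.
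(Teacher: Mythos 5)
Your proof is correct and follows essentially the same route as the paper: (1)$\Rightarrow$(2)$\Rightarrow$(3) trivially, (3)$\Rightarrow$(4) and (4)$\Rightarrow$(5) via Proposition \ref{prepcftap2} combined with the duality \eqref{pricing:hedging} and the inclusion $\mathcal{M}^\varphi_e(\mcY)\subseteq\Me$, and (5)$\Rightarrow$(1) again by Proposition \ref{prepcftap2}. Your only (harmless, in fact slightly tidier) deviation is taking $\varphi^i=1+\max_r|Y^i_r|$ so that $\varphi$ also dominates $|1_Ae_j|$, where the paper instead takes $\varphi^i=\max_r|Y^i_r|$ and appeals to Remark \ref{remdegenerate}.
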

\begin{proof}
    The implications $1\Rightarrow 2\Rightarrow 3$ are obvious. We show $3\Rightarrow 4$. Take $Q,\widehat{Q}\in \Me$. From Proposition \ref{prepcftap2} we have $\rho^{ \mcY}_-( 1_Ae_j)=\rho^{ \mcY}_+( 1_Ae_j)$. A the same time, choosing $\varphi^i=\max \{|Y^i_1|,\ldots, |Y^i_R|\}$, Remark \ref{remdegenerate}, together with Theorem \ref{duality:R} allows us to write
    $$\rho^{ \mcY}_-( 1_Ae_j)\leq E_{Q^j}[1_A]=\sum_{i=1}^NE_{Q^i}[1_Ae_j]\leq \rho^{ \mcY}_+( 1_Ae_j)$$ and the same holds for $E_{\widehat{Q}^j}[1_A]$, so that $E_{Q^j}[1_A]=E_{\widehat{Q}^j}[1_A]$. Since this can be repeated for every $j$ and $A\in\mcF_T^j$, it follows that $\Me$ (which is nonempty) also consists of at most one element.
    
    We come to $4\Rightarrow 5$. Pick $f\in L^{0}(\Omega, \mathbf{F}_T,P)$ and $\varphi\in L^{0}(\Omega, \mathbf{F}_T,P)$ with
$\max \{|f^i|,|Y^i_1|,\ldots, |Y^i_R|\}\leq \varphi^i$. As argued before, $ \mathcal{M}^\varphi_e(\mcY)\neq \emptyset$ by $\mathbf{NCA}(\mcY)$ and at the same time $ \mathcal{M}^\varphi_e(\mcY)\subseteq \Me$, the latter being a singleton. Thus, $\mathcal{M}^\varphi_e(\mcY)$ is itself a singleton, say $\{\widehat{Q}\}$. Since $\rho^{\mcY}_-(f)=-\rho^{\mcY}_+(-f)$, \eqref{pricing:hedging} implies $$\rho^{\mcY}_+(f)=\sum_{i=1}^N E_{\widehat{Q}^i}[f^i]=\rho^{\mcY}_-(f).$$
Finally, implication $5\Rightarrow 1$ follows immediately from Proposition \ref{prepcftap2}.
\end{proof}

\section{Examples}
\label{sec:examples}
We describe a simple market with two agents and two stocks, and where each agent $i$ may invest only in the stock $X^i$ and in the riskless asset.  In this example there exists a global arbitrage ($M_e := M_e^1\cap  M_e^2 =\emptyset$); each single market is arbitrage free and incomplete ($| M_e^1|=\infty$ and $|M_e^2|=\infty$); $\mathbf {NCA}(\mcY) $ holds true and the market is collectively complete ($|\mathcal{M}_e(\mcY)|=1$), so that there exist a unique price vector for each vector of contingent claims. 
Consider the setup described in Figure \ref{figtree}, with common filtration $$\mcF_0=\{\emptyset, \Omega\},\, \mcF_1=\sigma(A_1=\{\omega_1,\omega_2\},A_2=\{\omega_3,\omega_4\},A_3=\{\omega_5,\omega_6\}),\,\mcF_2=\mathcal{P}(\Omega).$$
We show that every vector of contingent claims is collectively replicable for $\mcY=\mcY_0\cap L^0(\Omega,\mcF_1,P)$.
To this end we first focus on the time step $\{0,1\}$. Take any $g_1,g_2$ which are $\mcF_1$ measurable. Let  $g^i=(g^i_j)_{j=1,2,3}, \,g^i_j\in\R$ on $A_j,j=1,2,3$. Similarly, $Y=(Y^1,Y^2)\in\mcY$ is such that $Y^2=-Y^1$ and $Y^1=y^1_j$ on $A_j,j=1,2,3$. We show that there exist $(B^1,B^2,h,H,y^1_1,y^1_2,y^1_3)\in\R^6$ such that
$$B^1+hX^1(\omega)+Y^1(\omega)=g^1(\omega),\quad B^2+HX^2(\omega)-Y^1(\omega)=g^2(\omega)\quad \forall\omega\in\Omega.$$
This reads:
$$\begin{cases}
    B^1+4h+0H+y^1_1+0y^1_2+0y^1_3&=g^1_1\\
    B^1+4h+0H+0y^1_1+y^1_2+0y^1_3&=g^1_2\\
    B^1+12h+0H+0y^1_1+0y^1_2+y^1_3&=g^1_3\\
    B^2+0h+16H-y^1_1+0y^1_2+0y^1_3&=g^2_1\\
    B^2+0h+8H+0y^1_1-y^1_2+0y^1_3&=g^2_2\\
    B^2+0h+8H+0y^1_1+0y^1_2-y^1_3&=g^2_3\\    
\end{cases}\text{ or } \begin{pmatrix}
    1 & 0 & 4 & 0 & 1 & 0 & 0 \\
    1 & 0 & 4 & 0 & 0 & 1 & 0 \\
    1 & 0 & 12 & 0 & 0 & 0 & 1 \\
    0 & 1 & 0 & 16 & -1 & 0 & 0 \\
    0 & 1 & 0 & 8 & 0 & -1 & 0 \\
    0 & 1 & 0 & 8 & 0 & 0 & -1 \\
\end{pmatrix}
\begin{pmatrix}
    B^1 \\ B^2 \\ h \\ H \\ y^1_1 \\ y^1_2 \\ y^1_3
\end{pmatrix}
=
\begin{pmatrix}
    g^1_1 \\ g^1_2 \\ g^1_3 \\ g^2_1 \\ g^2_2 \\ g^2_3
\end{pmatrix} $$
The rank of the matrix of the system is $6$, a solution exists. In particular, one such a solution can be always obtained even asking $B^1=B^2$, since 
$$\mathrm{det}\begin{pmatrix}
    1 & 4 & 0 & 1 & 0 & 0 \\
    1 & 4 & 0 & 0 & 1 & 0 \\
    1 & 12 & 0 & 0 & 0 & 1 \\
    1 & 0 & 16 & -1 & 0 & 0 \\
    1 & 0 & 8 & 0 & -1 & 0 \\
    1 & 0 & 8 & 0 & 0 & -1 \\
\end{pmatrix}=-128.$$
Now observe that in the time step $\{1,2\}$ each of the two markets is (separately) complete. The arguments above allow us to conclude that each $\mcF_2$-measurable pair $(g_1,g_2)$ is collectively replicable. One can easily compute the relevant sets of equivalent martingale measures:
\begin{align*}
    M_e^1&=\bigg\{\bigg(\frac12 \Big(\frac12-p\Big), \frac12\Big(\frac12-p\Big), \frac12 p, \frac12 p, \frac16, \frac13\bigg),\, p\in\bigg(0,\frac{1}{2}\bigg)\bigg\},\\
     M_e^2&=\bigg\{\bigg(\frac18, \frac18, \frac12\Big(\frac34-q\Big), \frac12\Big(\frac34-q\Big), \frac23q, \frac13q\bigg),\, q\in\bigg(0,\frac{3}{4}\bigg)\bigg\},\\
     \mathcal{M}_e(\mcY)&=\{(Q^1,Q^2)\} \not=\emptyset, \quad M_e := M_e^1\cap  M_e^2 =\emptyset,
\end{align*}
where $Q^1\in M_e^1$ corresponds to $p=\frac14$ and $Q^2\in M^2_e$ to the choice $q=\frac12$. This is clearly consistent with CFTAP II. Note that although there exists a global arbitrage ($M_e = M_e^1 \cap  M_e^2 =\emptyset$) and $Q^1\neq Q^2$ on $\mcF_2$, we do have $Q^1=Q^2$ on $\mcF_1$, which is equivalent to the polarity condition for the exchanges $\mcY$.
\begin{figure}
\begin{center}
\tikzstyle{level 1}=[level distance=2cm, sibling distance=2.5cm,->]
\tikzstyle{level 2}=[level distance=1.5cm, sibling distance=1cm,->]

\tikzstyle{bag} = [text width=1.5em, text centered]
\tikzstyle{end} = []

\begin{tikzpicture}[grow=right, sloped]
\node[bag](c1){$8$}
    child {
        node[bag]{$12$}        
            child {
                node[end, label=right:
                    {$9$}](y16) {}
                edge from parent
                node[above] {}
                node[below]  {$\blue{2/3}$}
            }
            child {
                node[end, label=right:
                    {$18$}](y15) {}
                edge from parent
                node[above] {$\blue{1/3}$}
                node[below]  {}
            }
            edge from parent 
            node[above] {}
            node[below]  {$\blue{(1/2)}$}
    }
    child {
        node[bag]{$4$}        
            child {
                node[end, label=right:
                    {$3$}](y14) {}
                edge from parent
                node[above] {}
                node[below]  {$\blue{1/2}$}
            }
            child {
                node[end, label=right:
                    {$5$}](y13) {}
                edge from parent
                node[above] {$\blue{1/2}$}
                node[below]  {}
            }
            edge from parent 
            node[above] {$\blue{p}$}
            node[below]  {}
    }
    child {
        node[bag] {$4$}        
        child {
                node[end, label=right:
                    {$2$}] (y12){}
                edge from parent
                node[above] {}
                node[below]  {$\blue{1/2}$}
            }
            child {
                node[end, label=right:
                    {$6$}](y11) {}
                edge from parent
                node[above] {$\blue{1/2}$}
                node[below]  {}
            }
        edge from parent         
            node[above] {$\blue{1/2-p}$}
            node[below]  {}
    };

\node[bag](y21) at ([xshift=1.1cm]y11) {$\red{\omega_1}$};
\node[bag](y22) at ([xshift=1.1cm]y12) {$\red{\omega_2}$};
\node[bag](y23) at ([xshift=1.1cm]y13) {$\red{\omega_3}$};
\node[bag](y24) at ([xshift=1.1cm]y14) {$\red{\omega_4}$};
\node[bag](y25) at ([xshift=1.1cm]y15) {$\red{\omega_5}$};
\node[bag](y26) at ([xshift=1.1cm]y16) {$\red{\omega_6}$};
\node[bag](c2) at ([xshift=6cm]c1){$10$}
    child {
        node[bag]{$8$}        
            child {
                node[end, label=right:
                    {$12$}](z16) {}
                edge from parent
                node[above] {}
                node[below]  {$\blue{1/3}$}
            }
            child {
                node[end, label=right:
                    {$6$}](z15) {}
                edge from parent
                node[above] {$\blue{2/3}$}
                node[below]  {}
            }
            edge from parent 
            node[above] {}
            node[below]  {$\blue{q}$}
    }
    child {
        node[bag]{$8$}        
            child {
                node[end, label=right:
                    {$4$}](z14) {}
                edge from parent
                node[above] {}
                node[below]  {$\blue{1/2}$}
            }
            child {
                node[end, label=right:
                    {$12$}](z13) {}
                edge from parent
                node[above] {$\blue{1/2}$}
                node[below]  {}
            }
            edge from parent 
            node[above] {$\blue{3/4-q}$}
            node[below]  {}
    }
    child {
        node[bag] {$16$}        
        child {
                node[end, label=right:
                    {$12$}] (z12){}
                edge from parent
                node[above] {}
                node[below]  {$\blue{1/2}$}
            }
            child {
                node[end, label=right:
                    {$20$}](z11) {}
                edge from parent
                node[above] {$\blue{1/2}$}
                node[below]  {}
            }
        edge from parent         
            node[above] {$\blue{1/4}$}
            node[below]  {}
    };

\node[bag](z21) at ([xshift=1.1cm]z11) {$\red{\omega_1}$};
\node[bag](z22) at ([xshift=1.1cm]z12) {$\red{\omega_2}$};
\node[bag](z23) at ([xshift=1.1cm]z13) {$\red{\omega_3}$};
\node[bag](z24) at ([xshift=1.1cm]z14) {$\red{\omega_4}$};
\node[bag](z25) at ([xshift=1.1cm]z15) {$\red{\omega_5}$};
\node[bag](z26) at ([xshift=1.1cm]z16) {$\red{\omega_6}$};

\end{tikzpicture}
\end{center}
\caption{Tree for the stocks $(X^1,X^2)$ at times $t=0,1,2$.}
\label{figtree}
\end{figure}
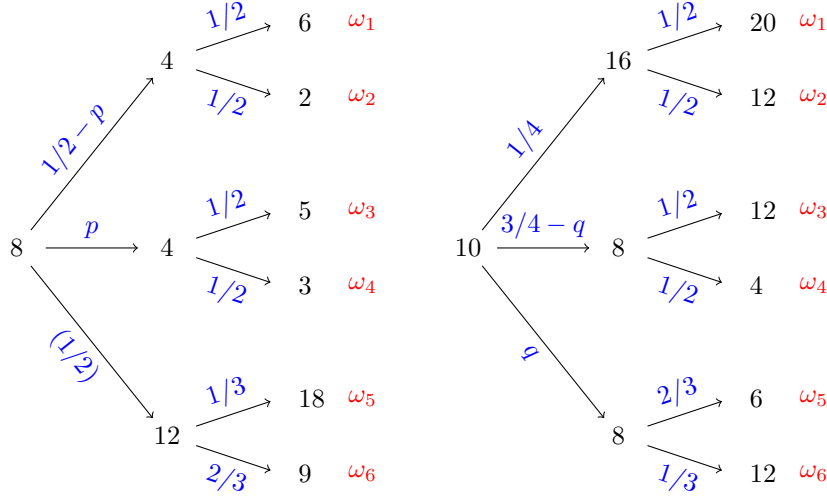

\subsection{Example: time consistency may fail}
\label{sec:timeconsist}
Again we consider a simple market with two agents and two stocks, and where each agent $i$ may invest only in the stock $X^i$ and in the riskless asset. 
The dynamics of $X^1,X^2$ are depicted in Fig. \ref{fig:inc}. 
\\ We observe that there exists a global arbitrage opportunity in $(0,1)$  as $(\Delta X^1_1+\Delta X^2_1)(\omega) \geq 0$ for any $\omega\in\{\omega_1,\ldots, \omega_6\}$ and the inequality is strict on $\omega_1,\omega_2$. On the other hand there does not exist any global arbitrage opportunity in the interval  $(1,2)$.
\\ By selecting $\mathcal{Y}(1)=\R^N_0$ it is easy to verify that $\mathbf{NCA}(\mathcal{Y}(1))$ holds in the interval $(0,1)$.

We show the existence of a collective arbitrage in the interval $(0,2)$. Suppose that agent $1$ and agent $2$ buy one unit of stock i.e. $H_1^2=H_1^1=1$ and then sell it, i.e. $H_2^2=H_2^1=0$. Choose $\mathcal{Y}(2)=\{ Y \in L^0(\Omega, \mathbf{F}_1, P) \mid \sum_{i=1}^N Y^i = 0  \}$ and let $Y^1(\omega_1,\omega_2)=0,\, Y^1(\omega_3,\omega_4)=+1,\, Y^1(\omega_5, \omega_6)=-1$ and let  $Y^2=-Y^1$. 
Then for any $\omega\in \{\omega_1,\ldots, \omega_6\}$ the value of this strategy is

\begin{eqnarray*}
    H^i_1\Delta X^i(\omega)+ Y^i(\omega)\geq 0 & \text{ and } & H_1^i\Delta X^i(\omega_{1,2})+ Y^i(\omega_{1,2})> 0 \quad i=1,2.
\end{eqnarray*}

The above conclusions could be also derived from the dual point of view. Indeed, 
let $M^i_e(s,t)$ be the set of equivalent martingale measures for asset $i$ in the interval $(s,t)$ and denote with $M^{\mcY}_e (s,t)$ the set of collective martingale measures for $\mcY$ in the interval $(s,t)$. Denote with $\mathbf{A}(s,t)$ (resp. $\mathbf{NA}(s,t)$) the existence (resp. the absence) of a global arbitrage in the interval $(s,t)$, and with $\mathbf{CA}^{\mcY}(s,t)$) (resp. $\mathbf{NCA}^{\mcY}(s,t)$) the existence (resp. the absence) of a collective arbitrage with respect to $\mcY$  in the interval $(s,t)$. Then 
\begin{enumerate}
    \item $M^1_e(0,1)\cap M^2_e(0,1)= \emptyset  \Rightarrow A(0,1)$; 
    \item $M^1_e(0,2)\cap M^2_e(0,2)= \emptyset  \Rightarrow A(0,2)$;
    \item $M^1_e(1,2)=\{1/2, 1/2,1/2,1/2,1/2,1/2\}=M^2_e(1,2) \Rightarrow M^1_e(1,2)\cap M^2_e(1,2)\not= \emptyset \Rightarrow \mathbf{NA}(1,2)$ and this implies $\mathbf{NCA}^{\mcY(2)}(1,2)$ (see \eqref{Implications});
    \item for $\mathcal{Y}(1)=\R^N_0$, $M^{\mcY(1)}_e (0,1) =M^1_e(0,1)\times M^2_e(0,1) \not=\emptyset \Rightarrow \mathbf{NCA}^{\mcY(1)}(0,1)$;
    \item for $\mathcal{Y}(1)=\R^N_0$ and $\mathcal{Y}(2)=\{ Y \in L^0(\Omega, \mathbf{F}_1, P) \mid \sum_{i=1}^N Y^i = 0  \}$, consider   $\mcY_{1:2} $ as defined in Section \ref{sec:selfin}. Then  $M^{\mcY_{1:2}}_e (0,2)=\left \{(Q^1,Q^2) \mid Q^1=Q^2 \text{ on } \mathcal F_1 \right \} =\emptyset \Rightarrow \mathbf{CA}^{\mcY_{1:2}}(0,2)$.
\end{enumerate}   

The conditions $\mathbf{A}(0,1)$, $\mathbf{NA}(1,2)$ and $\mathbf{A}(0,2)$  are consistent with the well-known time consistency property of classical arbitrage, which asserts that an arbitrage opportunity exists in the multi-period market if and only if an arbitrage opportunity exists in some single-period sub-market. This property was also shown to hold for collective (multi-period) arbitrage when exchanges are restricted to a single time $t \in \{0,\dots,T\}$ (see Section 6.3 of \cite{BDFFM25}). However, the present example demonstrates that this time consistency property fails to hold for collective arbitrage when exchanges are permitted at multiple times. Specifically, we have $\mathbf{NCA}^{\mcY(1)}(0,1)$, $\mathbf{NCA}^{\mcY(2)}(1,2)$, $\mathbf{CA}^{\mcY_{1:2}}(0,2)$.

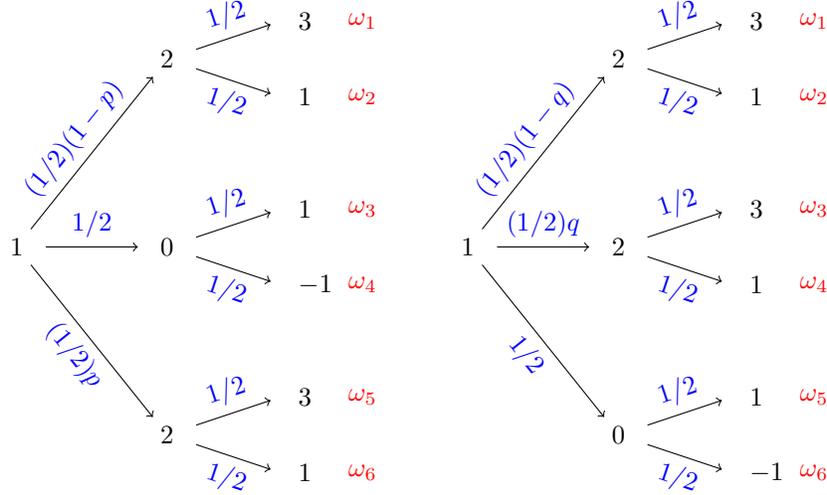
\begin{figure}
\begin{center}
\tikzstyle{level 1}=[level distance=2cm, sibling distance=2.5cm,->]
\tikzstyle{level 2}=[level distance=1.5cm, sibling distance=1cm,->]

\tikzstyle{bag} = [text width=1.5em, text centered]
\tikzstyle{end} = []

\begin{tikzpicture}[grow=right, sloped]
\node[bag](c1){$1$}
    child {
        node[bag]{$2$}        
            child {
                node[end, label=right:
                    {$1$}](y16) {}
                edge from parent
                node[above] {}
                node[below]  {$\blue{1/2}$}
            }
            child {
                node[end, label=right:
                    {$3$}](y15) {}
                edge from parent
                node[above] {$\blue{1/2}$}
                node[below]  {}
            }
            edge from parent 
            node[above] {}
            node[below]  {$\blue{(1/2)p}$}
    }
    child {
        node[bag]{$0$}        
            child {
                node[end, label=right:
                    {$-1$}](y14) {}
                edge from parent
                node[above] {}
                node[below]  {$\blue{1/2}$}
            }
            child {
                node[end, label=right:
                    {$1$}](y13) {}
                edge from parent
                node[above] {$\blue{1/2}$}
                node[below]  {}
            }
            edge from parent 
            node[above] {$\blue{1/2}$}
            node[below]  {}
    }
    child {
        node[bag] {$2$}        
        child {
                node[end, label=right:
                    {$1$}] (y12){}
                edge from parent
                node[above] {}
                node[below]  {$\blue{1/2}$}
            }
            child {
                node[end, label=right:
                    {$3$}](y11) {}
                edge from parent
                node[above] {$\blue{1/2}$}
                node[below]  {}
            }
        edge from parent         
            node[above] {$\blue{(1/2)(1-p)}$}
            node[below]  {}
    };

\node[bag](y21) at ([xshift=1.1cm]y11) {$\red{\omega_1}$};
\node[bag](y22) at ([xshift=1.1cm]y12) {$\red{\omega_2}$};
\node[bag](y23) at ([xshift=1.1cm]y13) {$\red{\omega_3}$};
\node[bag](y24) at ([xshift=1.1cm]y14) {$\red{\omega_4}$};
\node[bag](y25) at ([xshift=1.1cm]y15) {$\red{\omega_5}$};
\node[bag](y26) at ([xshift=1.1cm]y16) {$\red{\omega_6}$};
\node[bag](c2) at ([xshift=6cm]c1){$1$}
    child {
        node[bag]{$0$}        
            child {
                node[end, label=right:
                    {$-1$}](z16) {}
                edge from parent
                node[above] {}
                node[below]  {$\blue{1/2}$}
            }
            child {
                node[end, label=right:
                    {$1$}](z15) {}
                edge from parent
                node[above] {$\blue{1/2}$}
                node[below]  {}
            }
            edge from parent 
            node[above] {}
            node[below]  {$\blue{1/2}$}
    }
    child {
        node[bag]{$2$}        
            child {
                node[end, label=right:
                    {$1$}](z14) {}
                edge from parent
                node[above] {}
                node[below]  {$\blue{1/2}$}
            }
            child {
                node[end, label=right:
                    {$3$}](z13) {}
                edge from parent
                node[above] {$\blue{1/2}$}
                node[below]  {}
            }
            edge from parent 
            node[above] {$\blue{(1/2)q}$}
            node[below]  {}
    }
    child {
        node[bag] {$2$}        
        child {
                node[end, label=right:
                    {$1$}] (z12){}
                edge from parent
                node[above] {}
                node[below]  {$\blue{1/2}$}
            }
            child {
                node[end, label=right:
                    {$3$}](z11) {}
                edge from parent
                node[above] {$\blue{1/2}$}
                node[below]  {}
            }
        edge from parent         
            node[above] {$\blue{(1/2)(1-q)}$}
            node[below]  {}
    };

\node[bag](z21) at ([xshift=1.1cm]z11) {$\red{\omega_1}$};
\node[bag](z22) at ([xshift=1.1cm]z12) {$\red{\omega_2}$};
\node[bag](z23) at ([xshift=1.1cm]z13) {$\red{\omega_3}$};
\node[bag](z24) at ([xshift=1.1cm]z14) {$\red{\omega_4}$};
\node[bag](z25) at ([xshift=1.1cm]z15) {$\red{\omega_5}$};
\node[bag](z26) at ([xshift=1.1cm]z16) {$\red{\omega_6}$};

\end{tikzpicture}
\end{center}
\caption{Tree for the stocks $(X^1,X^2)$ at times $t=0,1,2$.}
\label{fig:inc}
\end{figure}

\paragraph{Acknowledgements.}

The authors are funded by the European Union - NextGenerationEU through the Italian Ministry of University and Research under the National Recovery and Resilience Plan (PNRR) - Mission 4 Education and research - Component 2 From research to business - Investment 1.1 Notice Prin 2022 - DD N. 104 del 2/2/2022, title Entropy Martingale Optimal Transport and McKean-Vlasov equations - codice progetto 2022K28KB7 - CUP G53D23001830006. A. Doldi and M. Frittelli are members of GNAMPA-INDAM.\\All the authors declare no conflicts of interest.

\bibliographystyle{abbrv}  
\bibliography{BibAll}

\end{document}